\newcommand{\var}{ \text{\bf var} }
\newcommand{\cov}{ \text{\bf cov} }
\newcommand{\diag}{ \text{\bf diag} }  
\newcommand{\Var}{ \text{\bf Var}  }
\newcommand{\Cov}{ \text{\bf Cov}  }
\DeclareMathOperator*{\argmin}{arg\,min}
\newtheorem{theorem}{Theorem}
\newtheorem{lemma}{Lemma}
\newtheorem{example}[theorem]{Example}
\newtheorem{proposition}[theorem]{Proposition}
\xdef\csname cal\x\endcsname{\noexpand %
	\ensuremath{\noexpand\mathcal{\x}}}
\xdef\csname scr\x\endcsname{\noexpand %
	\ensuremath{\noexpand\mathscr{\x}}}
\xdef\csname bb\x\endcsname{\noexpand %
	\ensuremath{\noexpand\mathbb{\x}}}
\xdef\csname rm\x\endcsname{\noexpand %
	\ensuremath{\noexpand\mathrm{\x}}}
\xdef\csname bf\x\endcsname{\noexpand %
	\ensuremath{\noexpand\mbf{\x}}}
\newcommand{\bomg}{\bm{\Omega}}
\newcommand{\bsig}{\bm{\Sigma}}
\newcommand{\blam}{\bm{\Lambda}}
\newcommand{\bdel}{\bm{\Delta}}
\newcommand{\bthe}{\bm{\Theta}}
\newcommand{\bphi}{\bm{\Phi}}
\newcommand{\bpi}{\bm{\Pi}}
\newcommand{\bst}{ \hspace{1.5pt} | \hspace{1.5pt} }
\newcommand{\zf}{Z}
\newcommand{\zg}{E}
\newcommand{\zo}{Z_{m}}
\newcommand{\tr}{\text{tr}}
\newcommand{\req}[1]{(\ref{#1})}
\newtheorem{remark}[theorem]{Remark}
\newcommand{\raisemath}[1]{\mathpalette{\raisem@th{#1}}}
\newcommand{\raisem@th}[3]{\raisebox{#1}{$#2#3$}}
\xdef\csname cal\x\endcsname{\noexpand %
	\ensuremath{\noexpand\mathcal{\x}}}
\xdef\csname scr\x\endcsname{\noexpand %
	\ensuremath{\noexpand\mathscr{\x}}}
\xdef\csname bb\x\endcsname{\noexpand %
	\ensuremath{\noexpand\mathbb{\x}}}
\xdef\csname rm\x\endcsname{\noexpand %
	\ensuremath{\noexpand\mathrm{\x}}}
\xdef\csname bf\x\endcsname{\noexpand %
	\ensuremath{\noexpand\mbf{\x}}}
\title{Endogenous Representation of Asset Returns}
\author{
  Zhipu Zhou
  \\
  Department of Statistics \& Applied Probability\\
  University of California, Santa Barbara\\
  Santa Barbara, CA 93106 \\
  \texttt{zzp8933@gmail.com} \\
   \And
Alexander Shkolnik\\
Department of Statistics \& Applied Probability\\
University of California, Santa Barbara\\
Santa Barbara, CA 93106\\
\texttt{shkolnik@ucsb.edu} \\
   \And
 Sang-Yun Oh\\
  Department of Statistics \& Applied Probability\\
  University of California, Santa Barbara\\
  Santa Barbara, CA 93106\\
  \texttt{syoh@ucsb.edu} \\
}
\begin{document}

\maketitle

\begin{abstract}
Factor modeling of asset returns has been a dominant practice in investment science since the introduction of the Capital Asset Pricing Model (CAPM) and the Arbitrage Pricing Theory (APT). The factors, which account for the systematic risk, are either specified or interpreted to be exogenous. They explain a significant portion of the risk in large portfolios. We propose a framework that asks how much of the risk, that we see in equity markets, may be explained by the asset returns themselves. To answer this question, we decompose the asset returns into an endogenous component and the remainder, and analyze the properties of the resulting risk decomposition. Statistical methods to estimate this decomposition from data are provided along with empirical tests. Our results point to the possibility that most of the risk in equity markets may be explained by a sparse network of interacting assets (or their issuing firms). This sparse network can give the appearance of a set exogenous factors where, in fact, there may be none. We illustrate our results with several case studies.
\end{abstract}

\keywords{Endogenous representation \and Graphical models \and Sparse precision matrix \and Partial covariance}

\section{Introduction}

It has been a central issue in finance on how to characterize the returns of assets. Harry Markowitz introduced modern portfolio theory in 1952 by formalizing a portfolio of assets as a tradeoff between mean and variance of returns \cite{Markowitz1952}. Building on the sufficiency of the mean-variance framework for investment decision making, \cite{Sharpe1964} and others introduced the theoretical breakthroughs as the Capital Asset Pricing Model (CAPM). CAPM leads to a one-factor model that characterizes the excess return of an asset as the market return in proportion to the beta plus a specific return. In CAPM, the market beta defines the primary force behind equity markets, but what is the market beta and how to identify the market portfolio remain elusive \cite{Roll1977}. As an alternative, the Arbitrage Pricing Theory (APT) proposed by \cite{Ross1976} is based on the asymptotic arbitrage argument rather than the mean-variance optimization framework \cite{Huberman1982}, \cite{Ingersoll1984}. It allows for multiple risk factors in a factor model and does not require to identify the market portfolio. However, how many factors should be included and what are these factors are not answered by APT.

In practice, various statistical tools have been used to estimate the market beta, together with other factors in a factor model, including the least-squares method \cite{Bretscher1995}, maximum likelihood \cite{Rossi2018}, principal component analysis (PCA) \cite{Pearson1901} and so on (see \cite{Chamberlain1983} and \cite{Shukla1999}). Nonetheless, these methods do not account for how the market beta theoretically emanates from. Instead, these methods treat the market factor as an abstract object to be estimated and not truly real. Consequently, factor models are not explanatory, and can lead to (and have led to) serious misunderstanding of the market risk and the misuse of statistical tools to measure it. 

In this paper, we propose the \textit{Graphical Representation Model} (or \textit{GRM}) that aims at modeling the maximum endogenous variance of asset returns and that employs a sparse graphical model to characterize the association among assets. We make the following contributions:

\begin{enumerate}
\item {\bf The endogenous return perspective.}
We offer a new perspective on the problem of covariance 
estimation for asset returns. It suggests that much of the 
variance observed in equity markets may be captured
endogenously. This viewpoint differs from the one taken by  
the large and growing body of empirical and theoretical 
literature on factor models in finance. 
It posits that endogenous variables are sufficient in
explaining variation in the asset returns without the need for 
arbitrarily designed exogeneous variables; a theme that has 
dominated the financial literature. Figure \ref{fig:pve} shows that the proportion of variance explained by factors (via PCA 5-factor model) and by endogenous variables (via GRM, Graphical Representation Model, which will be introduced later in this paper) are highly similar across the past decades, for the S\&P 500 Index component stocks.
\begin{minipage}{\linewidth}
    \centering
    \includegraphics[scale = 0.63]{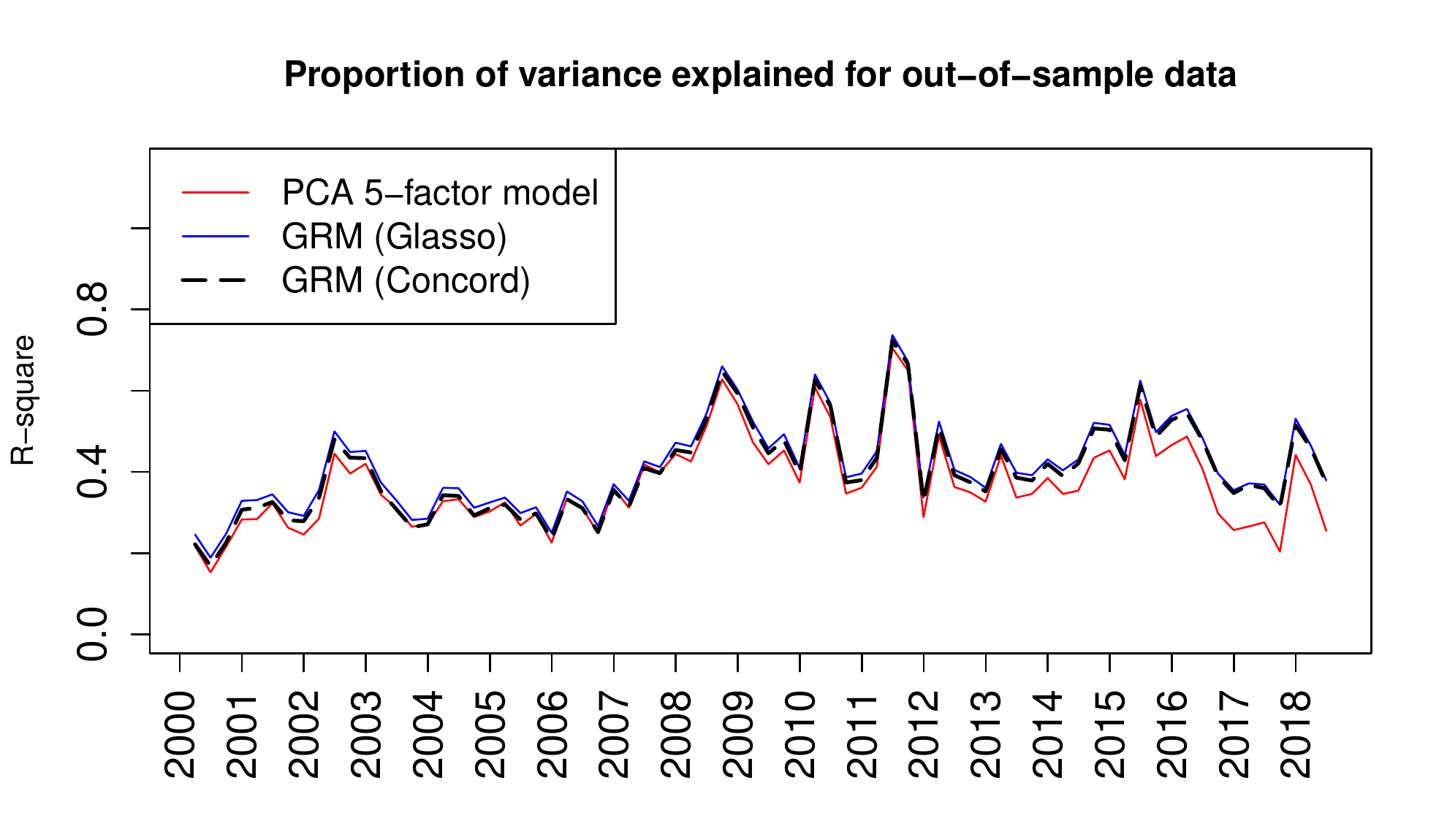}
    \captionof{figure}{A comparison of the PCA 5-factor model and GRM in terms of the proportion of variance explained (PVE), measured by $R^2$, for out-of-sample data using S\&P 500 component stocks. Please see the Appendix for more details.}
    \label{fig:pve}
\end{minipage}

To illustrate, we claim 
it suffices to estimate a set of regression coefficient $\{\rma_{ij}\}$
such that the return $Y_i$ to the $i$-th asset has
\begin{align} \label{model}
 Y_i = \sum_{j \neq i} \rma_{ij} Y_j + E_i
\end{align}
where the residual $E_i$ is uncorrelated with every return
$Y_j$ for which $i \neq j$. This perspective is closer in 
spirit to the CAPM which points to the use of an 
endogenous regressors. 
However, it does not
suffer from the bias these approaches generate due to their
requirement of a proxy portfolio (a linear combination of
returns as in $\req{model}$) as a regressor.
In particular, the CAMP
suggest estmating coefficients $\{ \beta_i \}$ such that 
the following residual is minimized
\begin{align}\label{capm}
    Y_i - \beta_i x{Y} 
  = Y_i - \sum_{j=1}^p \beta_i x_j Y_j
\end{align}
In $\req{capm}$ the $x$ is the market portfolio, which
is unknown and in practice $x{Y}$ is approximated
by the return to some market-like index, typically
the \rmS\&\rmP $500$. Notice that the residual $\req{capm}$
is necessarily correlated with the regressor $x{Y}$.
This leads to a bias in the regression procedure
alluded to above. In contrast, in equation 
$\req{model}$, no return $Y_i$ is explained 
in term of itself, and $\Cov (E_i, Y_j)$ for all
$j \neq i$.

A model of the return that has the representation 
$\req{model}$ has the 
advantages of being (A) potentially interpretable,
as the $\rma_{ij}$ can be parameterized by pairwise conditional
dependencies between assets, and 
(B) data driven, as the $\{\rma_{ij}\}$ may be estimated 
purely from observation, without the need for
the market index or additional factors. 
We demonstrate empirically on
several markets that endogenous approach explains more out of
sample variance than popular factor model approaches.

\item {\bf On graphical models and sparsity.}
The endogenous perspective described above is related
to graphical models which have popular applications
in numerous disciplines. However, graphical models have not
been a mainstream approach used in covariance estimation
for financial data sets. We suggest that the reason for 
this gap is the lack of two ingredients that would facilitate
their application: First, the formalism of graphical models makes them difficult 
to reconcile with the traditional ways
of thinking about returns to financial assets. These models
are also often coupled with Gaussian distributional 
assumptions which have been empirically invalidated. Second, graphical models are by convention associated with a
sparse precision matrix (i.e., the inverse of the covariance) and,
to our knowledge, there is no investigation in the 
literature of whether a sparsely estimated precision matrix
is consistent with well-understood properties 
of equity markets.

We address both of these shortcomings in our work. We explain
how the endogenous representation of returns in $\req{model}$
is related to the popularly adopted graphical models, and
related them to widely used factor models.
We also empirically test whether the assumption of sparsity
preserves the stylized properties of markets that are
commonly believed to hold. In particular, we focus on extracting
the market beta and the effects  of industry sectors (e.g. GICS, Global Industry Classification Standard)
from the coefficients $\{\rma_{ij}\}$.

\end{enumerate}

The rest of the paper is organized as follows. Section \ref{gcapm} introduces the endogenous representation of asset returns, the decomposition of asset variance, the graphical representation model, and the sparse model estimation. Section \ref{sparsity} provides comparisons with the CAPM, the factor modeling, and the mixed model, and discusses the implication of the model sparsity in the comparisons. Section \ref{empirical:grm} provides an empirical comparison of the out-of-sample performance of GRM, factor models, and mixed models, explores the properties of implied beta, and demonstrates the graph visualization and the interpretability of the detected stock communities. Section \ref{summary} summarizes the paper and discusses possible future work.

\section{The Model}
\label{gcapm}

\subsection{Endogenous representation of returns}

Suppose that there are $p$ risky assets in the financial market. Denote $Y = (Y_1, Y_2, \cdots, Y_p)^{\top}$, where $Y_i$ is the return of the $i$-th asset. Without loss of generality, we take every $Y_i$ to have zero mean. We posit that any security return $Y_i$ is well explained by the returns to the remaining securities. For non-random coefficients $\{\rma_{ij}\}$, we consider the endogenous representation of asset returns $Y$ as

\begin{equation}
\label{eq:lr}
    Y_i = \sum_{j\neq i}\rma_{ij}Y_j + E_i\quad\text{for}\quad i = 1, 2, \cdots, p
\end{equation}

\noindent where $E_i$ represents the residual return of the $i$-th asset. We further select the coefficients $\{\rma_{ij}\}$ such that $E_i$ is minimized for each $i = 1, 2, \cdots, p$. Such coefficients leads to the property that

\begin{equation}
    \label{eq:zerocov}
    \Cov(Y_j, E_i) = 0\quad\text{for every $i\neq j$}
\end{equation}
This property is appealing as it allows $Y_i$ in $\req{eq:lr}$ to be decomposed into an 
endogenous return $\sum_{j \neq i}\rma_{ij} Y_j$ and an uncorrelated (per $\req{eq:zerocov}$) residual. This parallels a common theme from linear models, in which a predictor and its residual are uncorrelated.\footnote{Note, however, that the residuals $\{E_i\}$ must be correlated amongst themselves, for otherwise, all the $\{Y_i\}$ are uncorrelated, yielding an unrealistic modeling assumption in our setting.} In the remaining of this paper, we make the following arguments: First, the coefficients $\{\rma_{ij}\}$ in $\req{eq:lr}$ that satisfy $\req{eq:zerocov}$ ensure that the total variance of the residual is minimized, or equivalently, the total variance in the endogenous component (in a sense to be made precise) is maximized; Second, the total variance of the return decomposes as the sum of the total variance of the endogenous return and the total variance of the residual; Third, the representation of $\req{eq:lr}$ has some similarities as well as distinctions to the CAPM one-factor model; Fourth, an assumption of an underlying graph (or network) structure that leads to sparse coefficients $\{\rma_{ij}\}$ remains consistent with a market model view of security returns\footnote{The market model is commonly attributed to the CAPM}, i.e., the model in which a systematic factor (i.e., market beta) drives the returns to all securities either up or down in unison.

The next formulation, also shown by \cite{Peng2009}, of a standard result sheds light on the coefficients $\{\rma_{ij}\}$ therein.

\begin{lemma} \label{lem:uncorrelated}
Suppose $Y\in\mathbb{R}^p$ has zero mean and positive definite covariance matrix $\bsig = \Var(Y)$ such that $\bomg = ((\omega_{ij}))= \bsig^{-1}$ exists. If $Y_i$ is expressed
\begin{equation}
\label{eq:endogenous}
    Y_i = \sum_{j\neq i}\rma_{ij}Y_j + E_i\quad\text{for}\quad i = 1, 2, \cdots, p
\end{equation}
where $a_{ii}=0$ and $\cov(Y_j, E_i) = 0$ for every $i\neq j$, then 
\begin{equation}
\label{eq:coef}
    \rma_{ij} = 
    \begin{cases}
    -\frac{\omega_{ij}}{\omega_{ii}}&,\quad\text{for $i\neq j$}\\
    0&,\quad\text{for $i=j$}
    \end{cases}
\end{equation}
\end{lemma}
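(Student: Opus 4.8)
The plan is to rewrite, for each fixed $i$, the $(p-1)$ scalar orthogonality conditions as a single linear-algebraic identity and then read off the coefficients from the $i$-th column of $\bomg$. First I would collect the coefficients of the $i$-th equation into a vector $a_i = (\rma_{i1}, \dots, \rma_{ip})^{\top}$ with the convention $\rma_{ii} = 0$, so that $\req{eq:endogenous}$ reads $E_i = Y_i - a_i^{\top} Y = (e_i - a_i)^{\top} Y$, where $e_i$ denotes the $i$-th standard basis vector.

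Next I would compute the cross-covariance vector $\cov(Y, E_i) = \Var(Y)(e_i - a_i) = \bsig(e_i - a_i)$, using the zero-mean assumption and bilinearity of covariance. The hypothesis $\cov(Y_j, E_i) = 0$ for all $j \neq i$ says precisely that every entry of this vector except the $i$-th vanishes, i.e. $\bsig(e_i - a_i) = c_i e_i$ for some scalar $c_i$ (the surviving $i$-th entry). Multiplying through by $\bomg = \bsig^{-1}$, which exists by positive definiteness, gives $e_i - a_i = c_i \bomg e_i$, so that $e_i - a_i$ is exactly $c_i$ times the $i$-th column of the precision matrix.

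The final step is to pin down $c_i$ and read off the entries componentwise. Inspecting the $i$-th component and using $\rma_{ii} = 0$ gives $1 = c_i \omega_{ii}$; since $\bsig \succ 0$ forces $\omega_{ii} > 0$, we may solve $c_i = 1/\omega_{ii}$. The $j$-th component for $j \neq i$ then reads $-\rma_{ij} = c_i \omega_{ji}$, and symmetry of $\bomg$ yields $\rma_{ij} = -\omega_{ij}/\omega_{ii}$, which is exactly $\req{eq:coef}$.

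I expect the argument to be essentially mechanical once the covariance conditions are packaged into the equation $\bsig(e_i - a_i) = c_i e_i$; the only point requiring genuine care is the appeal to positive definiteness, which I need both to guarantee the existence of $\bomg$ and to ensure $\omega_{ii} \neq 0$ so that the displayed ratio is well defined, together with keeping the index bookkeeping straight in passing between the scalar conditions and the vector identity.
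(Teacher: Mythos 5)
Your proof is correct and is essentially the paper's own argument: the paper packages the orthogonality conditions into the single matrix identity $\bsig = {\bf A}\bsig + {\bf D}$ (with ${\bf D}$ diagonal), multiplies by $\bomg$, and uses $a_{ii}=0$ to identify ${\bf D}_{ii} = 1/\omega_{ii}$, which is exactly your row-wise computation $\bsig(e_i - a_i) = c_i e_i$ carried out for all $i$ simultaneously. The only difference is presentational (per-row vectors versus one matrix equation), and your explicit remarks on why $\omega_{ii} > 0$ and where symmetry of $\bomg$ enters are sound.
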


Lemma \ref{lem:uncorrelated} provides a condition that the residual is uncorrelated with each of the predictors in the endogenous representation. The following Theorem \ref{lem:mini} offers a different condition and leads to the same solution for the coefficient matrix ${\bf A}$.

\begin{theorem} \label{lem:mini}
Suppose $Y\in\mathbb{R}^p$ has zero mean and positive definite covariance matrix $\bsig = \Var(Y)$ such that $\bomg = \bsig^{-1}$ exists. Then, there exists a unique minimizer ${\bf A}$:

\begin{equation}
\label{eq:resvar}
    {\bf A} = \argmin_{{\bf M}:{\bf M}_{ii}=0}\mathbb{E}(|Y - {\bf M} Y|^2)
\end{equation}
and the minimizer ${\bf A} = ((\rma_{ij}))= {\bf I} - {\bf D}\bomg$ where ${\bf D} = \diag(\bomg)^{-1}$. That is, if $\bomg = ((\omega_{ij}))$, then
\begin{equation}
\label{eq:results}
    \rma_{ij} = 
    \begin{cases}
    -\frac{\omega_{ij}}{\omega_{ii}}&,\text{ for $i\neq j$}\\
    0&,\text{ for $i=j$}
    \end{cases}
\end{equation}
\end{theorem}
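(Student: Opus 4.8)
The plan is to exploit two features of the objective in $\req{eq:resvar}$: it is a strictly convex quadratic in ${\bf M}$, and, together with the constraint ${\bf M}_{ii}=0$, it decouples across the rows of ${\bf M}$. Writing $E = Y - {\bf M}Y = ({\bf I}-{\bf M})Y$ and using that $Y$ has zero mean, the objective becomes
\begin{equation*}
\mathbb{E}(|Y - {\bf M}Y|^2) = \tr\!\big(({\bf I}-{\bf M})\bsig({\bf I}-{\bf M})^{\top}\big) = \sum_{i=1}^p \mathbb{E}(E_i^2),
\end{equation*}
where $E_i = Y_i - \sum_{j\neq i}M_{ij}Y_j$ depends only on the off-diagonal entries $(M_{ij})_{j\neq i}$ of the $i$-th row of ${\bf M}$. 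Thus the constrained problem splits into $p$ independent least-squares problems, one per row, and I can minimize each $\mathbb{E}(E_i^2)$ separately over the free entries of that row.

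First I would compute the first-order conditions. Differentiating $\mathbb{E}(E_i^2)$ with respect to $M_{ik}$ for $k\neq i$ gives $-2\,\cov(E_i, Y_k)$, so stationarity is exactly the requirement $\cov(E_i, Y_k)=0$ for every $k\neq i$ --- precisely the hypothesis $\req{eq:zerocov}$ of Lemma \ref{lem:uncorrelated}. Next I would upgrade these stationary points to unique global minima: the Hessian of $\mathbb{E}(E_i^2)$ in the $i$-th row is $2\bsig_{-i,-i}$, the principal submatrix of $\bsig$ obtained by deleting row and column $i$. Since $\bsig\succ 0$, every principal submatrix is positive definite, so each row problem is strictly convex with a unique minimizer; assembling the rows yields a unique global minimizer ${\bf A}$ of $\req{eq:resvar}$.

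Finally, I would identify ${\bf A}$. Because its rows satisfy $\cov(E_i,Y_k)=0$ for all $k\neq i$, Lemma \ref{lem:uncorrelated} applies verbatim and delivers $\rma_{ij} = -\omega_{ij}/\omega_{ii}$ for $i\neq j$ and $\rma_{ii}=0$, which is the claimed $\req{eq:results}$. As an independent check of the closed form, I would verify directly that ${\bf A} = {\bf I} - {\bf D}\bomg$ is both feasible and stationary: its diagonal vanishes because $({\bf D}\bomg)_{ii} = \omega_{ii}/\omega_{ii} = 1$, and $({\bf I}-{\bf A})\bsig = {\bf D}\bomg\bsig = {\bf D}$ is diagonal, which is exactly the matrix form of the first-order conditions, since the gradient $-2({\bf I}-{\bf M})\bsig$ must be diagonal for its off-diagonal (unconstrained) entries to vanish.

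I expect the only genuine obstacle to be the uniqueness claim, since the value of the minimizer falls out of the orthogonality conditions and Lemma \ref{lem:uncorrelated} once a minimizer is known to exist. Uniqueness rests entirely on strict convexity, i.e.\ on the positive definiteness of the submatrices $\bsig_{-i,-i}$, and this is precisely where the hypothesis $\bsig\succ 0$ is indispensable: without it the regression of $Y_i$ on the remaining returns need not admit a unique solution, and the characterization $\req{eq:results}$ could fail.
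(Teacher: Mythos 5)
Your proof is correct, and its skeleton---reducing $\mathbb{E}(|Y-{\bf M}Y|^2)$ to $p$ decoupled row-wise least-squares problems and writing the first-order conditions---is the same as the paper's. Where you genuinely diverge is the finishing step: the paper solves the normal equations explicitly, obtaining ${\bf M}_{i,-i}=\bsig_{i,-i}\bsig_{-i,-i}^{-1}$, and then converts this to precision-matrix form via the block identity $\bomg_{ii}\bsig_{i,-i}+\bomg_{i,-i}\bsig_{-i,-i}=\mathbf{0}$ extracted from $\bomg\bsig={\bf I}$; you instead observe that stationarity, $\cov(E_i,Y_k)=0$ for $k\neq i$, is exactly the hypothesis of Lemma \ref{lem:uncorrelated} and let that lemma deliver the closed form, avoiding any block-inverse computation. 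This is legitimate---Lemma \ref{lem:uncorrelated} precedes the theorem and its proof does not depend on it, so there is no circularity---and it makes the division of labor transparent: the theorem supplies existence and optimality, the lemma supplies identification. The trade-off is that the paper's route exhibits the regression formula ${\bf M}_{i,-i}=\bsig_{i,-i}\bsig_{-i,-i}^{-1}$ explicitly, which is instructive in itself. You are also more careful than the paper on the uniqueness claim: the paper's proof stops at the first-order condition, whereas you note that the Hessian of each row problem is $2\bsig_{-i,-i}\succ 0$ (principal submatrices of a positive definite matrix are positive definite), so each row problem is strictly convex with a unique minimizer---this is precisely the step needed to justify the word ``unique'' in the theorem statement. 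Your closing feasibility-and-stationarity check of ${\bf A}={\bf I}-{\bf D}\bomg$ is a sound redundancy but not logically required.
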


\begin{remark} \label{rem:varmin}
It is easy to check that $\mathbb{E}(|Y - {\bf M} Y|^2)$ equals to $\text{tr}(\Var(Y - {\bf M} Y))$, which is further equal to the sum of the eigenvalues of the $p\times p$ covariance matrix $\Var(Y - {\bf M} Y)$. This sum represents the total variance over the principal components of this matrix.
\end{remark}

\begin{remark}
The restriction to zero of the diagonal entries of the elements in Equation (\ref{eq:resvar}) prevents the trivial minimizer ${\bf I}$ and is consistent with the endogenous representation where the response variable will not be used as one of the predictors.
\end{remark}


\subsection{The decomposition of variance}
\label{sec:variance}

Denoting by $\{\rma_{ij}\}$ the entries of this ${\bf A}$, Theorem $\ref{lem:mini}$ supplies a matrix ${\bf A}$ that minimizes (in the sense of Remark $\ref{rem:varmin}$) the residual $E = Y - {\bf A} Y$ over all matrices with zeros on the diagonal. Under the same definition of coefficients $\{\rma_{ij}\}$, Lemma \ref{lem:uncorrelated} states that the residual $E_i$ is uncorrelated with each of the predictor return $Y_j$. Thus the total variance of the return $Y_i$ can be decomposed into the total variance from the predictor returns, which we call the total variance of endogenous return, and the total variance of the residual. Equivalently, this states that ${\bf A} Y$ explains as much variance in the returns $Y$ as is possible endogenously. We write $\req{eq:lr}$ as
\begin{align} \label{eq:mendogi}
  Y = {\bf A} Y + E
\end{align}
for a vector of residuals $E = (E_1, \dots, E_p)^\top$ obeying $\req{eq:zerocov}$, so that $\Cov(E,Y)$ is the diagonal matrix ${\bf D}$ of Theorem $\ref{lem:mini}$, i.e., every security specific residual $E_i$ is correlated with that security return $Y_i$ only. We have the following proposition for the decomposition of variance.

\begin{proposition} \label{prop}
If the precision matrix $\bomg = ((\omega_{ij}))$ of Y is positive definite (i.e., $\omega_{ii} > 0$), and ${\bf A}$ is defined in (\ref{eq:results}), then the variance of $Y_i$ can be decomposed as
\begin{align} \label{decom}
   \sigma_{ii} = \Var({\bf A} Y)_{ii} + \frac{1}{\omega_{ii}}
   \end{align}
where $\Var({\bf A} Y)_{ii}$ is the endogenous variance of $Y_i$ and $1/\omega_{ii}$ is the variance of the residual. In addition, the covariance between $Y_i$ and $Y_j$ ($i\neq j$) can be decomposed as
\begin{align}
    \sigma_{ij} = \Var({\bf A} Y)_{ij} + \frac{-\omega_{ij}}{\omega_{ii}\omega_{jj}}
\end{align}
\end{proposition}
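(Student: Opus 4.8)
The plan is to work directly from the closed form ${\bf A} = {\bf I} - {\bf D}\bomg$ supplied by Theorem~\ref{lem:mini}, together with the identity $\bsig = \bomg^{-1}$, and to show that both claims reduce to a single matrix identity, namely
\begin{equation*}
  \bsig = \Var({\bf A} Y) + \bigl(2{\bf D} - {\bf D}\bomg{\bf D}\bigr).
\end{equation*}
The two scalar statements are then read off from the entries of the correction matrix $2{\bf D} - {\bf D}\bomg{\bf D}$: its diagonal will supply $\req{decom}$ and its off-diagonal the covariance decomposition.

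First I would compute $\Var({\bf A} Y)$. Since ${\bf A} = {\bf I} - {\bf D}\bomg$ is \emph{not} symmetric (${\bf D}$ is diagonal and $\bomg$ symmetric, but their product is not), some care with transposes is needed: $\Var({\bf A} Y) = {\bf A}\bsig{\bf A}^\top = ({\bf I} - {\bf D}\bomg)\,\bsig\,({\bf I} - \bomg{\bf D})$. Expanding and repeatedly invoking $\bsig\bomg = \bomg\bsig = {\bf I}$ to collapse the cross terms (so that $\bsig\bomg{\bf D} = {\bf D}$ and ${\bf D}\bomg\bsig = {\bf D}$) leaves
\begin{equation*}
  \Var({\bf A} Y) = \bsig - 2{\bf D} + {\bf D}\bomg{\bf D}.
\end{equation*}
Equivalently, one may argue through the residual: writing $E = Y - {\bf A} Y = {\bf D}\bomg Y$ gives $\Cov(E, Y) = {\bf D}\bomg\bsig = {\bf D}$ (the diagonal matrix already identified after $\req{eq:mendogi}$) and $\Var(E) = {\bf D}\bomg\bsig\bomg{\bf D} = {\bf D}\bomg{\bf D}$, so that $\Var({\bf A} Y) = \Var(Y - E) = \bsig - 2{\bf D} + {\bf D}\bomg{\bf D}$, the same expression. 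Rearranging produces the matrix identity displayed above.

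It then remains to evaluate $2{\bf D} - {\bf D}\bomg{\bf D}$ entrywise. Since ${\bf D} = \diag(\bomg)^{-1}$ has $(i,i)$ entry $1/\omega_{ii}$, the product ${\bf D}\bomg{\bf D}$ has $(i,j)$ entry $\omega_{ij}/(\omega_{ii}\omega_{jj})$. Hence the diagonal of the correction matrix is $2/\omega_{ii} - 1/\omega_{ii} = 1/\omega_{ii}$, which gives $\req{decom}$, while for $i \neq j$ the diagonal ${\bf D}$ contributes nothing and the entry is $-\omega_{ij}/(\omega_{ii}\omega_{jj})$, which is exactly the stated covariance decomposition. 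I expect no genuine obstacle here, as the argument is a short linear-algebra computation; the only thing to guard against is the bookkeeping forced by the asymmetry of ${\bf A}$ and the correct repeated use of $\bomg = \bsig^{-1}$. Positive definiteness of $\bomg$ (i.e.\ $\omega_{ii} > 0$) enters solely to guarantee that ${\bf D} = \diag(\bomg)^{-1}$ is well defined.
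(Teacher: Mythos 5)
Your proposal is correct and follows essentially the same route as the paper, which likewise derives the matrix identity $\Var({\bf A} Y) = ({\bf I} - {\bf D}\bomg)\bsig({\bf I} - {\bf D}\bomg)^{\top} = \bsig - 2{\bf D} + {\bf D}\bomg{\bf D}$ and reads off the diagonal and off-diagonal entries. Your expansion is more explicit than the paper's one-line computation (and the residual-based rederivation via $E = {\bf D}\bomg Y$ is a nice sanity check), but the underlying argument is identical.
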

\begin{proof}
The result can be directly obtained by taking the $i$-th diagonal element and the $(i,j)$-th element of
\begin{equation}
    \Var({\bf A} Y) = {\bf A}\Var(Y){\bf A}^{\top} = ({\bf I} - {\bf D}\bomg)\bsig({\bf I} - {\bf D}\bomg)^{\top} = \bsig - 2{\bf D} + {\bf D}\bomg{\bf D}
\end{equation}
\end{proof}

We illustrate on a market with two assets the decomposition of the return $Y$ into its endogenous and residual parts.

\begin{example} \label{ex:two}
Let $Y = (Y_1, Y_2)^\top$ be the returns to $p = 2$ securities and take
\begin{align*}
  \Var(Y) = \bsig = \begin{pmatrix}
   \sigma_{11} & \sigma_{12} \\ \sigma_{12} & \sigma_{22}
   \end{pmatrix}
  = \bomg^{-1} = 
  \begin{pmatrix}
  \omega_{11} & \omega_{12}\\
  \omega_{12} & \omega_{22}
  \end{pmatrix}^{-1}
    = \frac{1}{|\bomg|}
    \begin{pmatrix}
    \omega_{22} & -\omega_{12}\\
    -\omega_{12} & \omega_{11}
   \end{pmatrix}.
\end{align*}
The least squared estimates for the coefficients $\rma_{12}, \rma_{21}$ in the system of linear equations
\begin{equation}
\begin{split}
    Y_1 = &\rma_{12}Y_2 + E_1\\
    Y_2 = &\rma_{21}Y_1 + E_2
    \end{split}
\end{equation}
are
\begin{equation}
    \rma_{12} = \frac{\Cov(Y_1, Y_2)}{\Var(Y_2)} = \frac{\sigma_{12}}{\sigma_{22}} = -\frac{\omega_{12}}{\omega_{11}},\quad
    \rma_{21} = \frac{\Cov(Y_2, Y_1)}{\Var(Y_1)} = \frac{\sigma_{12}}{\sigma_{11}} = -\frac{\omega_{12}}{\omega_{22}}.
\end{equation}
which yield
\begin{equation}
    \begin{pmatrix}
    Y_1 \\ Y_2
    \end{pmatrix}
    =
    \begin{pmatrix}
        0  & -\omega_{12}/\omega_{11}\\
        -\omega_{12}/\omega_{22} & 0
    \end{pmatrix}
    \begin{pmatrix}
    Y_1 \\ Y_2
    \end{pmatrix}
    +
    \begin{pmatrix}
    E_1 \\ E_2
    \end{pmatrix}.
\end{equation}
Thus $Y = {\bf A} Y + E$ with ${\bf A} = {\bf I} - {\bf D}\bomg$, ${\bf D} = \diag(\bomg)^{-1}$, and $E = (E_1, E_2)^{\top}$. In the endogenous component ${\bf A} Y$, the return $Y_1$ is explained in terms of the other return $Y_2$ and vice versa. Not surprisingly, in the case of $p = 2$, the correlation $\rho_{12} = \frac{\sigma_{12}}{\sqrt{\sigma_{11} \sigma_{22}}} = -\frac{\omega_{12}}{\sqrt{\omega_{11}\omega_{22}}}$ of the returns $Y_1$ and $Y_2$ is indicative of
how well they explain one another. For example, if the two security returns are uncorrelated (i.e., $\rho_{12} = 0$), neither one can account for the return of the other, and $E$ fully determines $Y$. We have,
\begin{align*}
  \Var(E) &= ({\bf I} - {\bf A}) \bsig ({\bf I} - {\bf A})^\top 
   = (1 - \rho_{12}^2)  
   \begin{pmatrix}
   \sigma_{11} & -\sigma_{12} 
\\ -\sigma_{12} & \sigma_{22}
   \end{pmatrix}
   = \begin{pmatrix}
   \frac{1}{\omega_{11}}  & \frac{\omega_{12}}{\omega_{11}\omega_{22}} \\
   \frac{\omega_{12}}{\omega_{11}\omega_{22}} & \frac{1}{\omega_{22}}
   \end{pmatrix}.
\end{align*}
For $Y_i$ ($i = 1, 2$), its total variance $\sigma_{ii}$ is decomposed into two parts: the variation explained by $Y_j (j\neq i)$ and the residual variation $1/\omega_{ii}$. When $|\rho_{12}|$ is close to one, the residual (total) variance is small and the explanation of the return $Y$ is almost entirely endogenous. The residuals $E_1$ and $E_2$ are oppositely correlated as $-\rho_{12}$ almost by definition of being not endogenous. The endogenous variance is the diagonal of $\Var({\bf A} Y)= \rho_{12}^2 \bsig$ mirroring the discussion of the residual.

In the simple setting of two securities, the variance of each return is a convex combination of the endogenous and residual variances. For instance, the variance $\sigma_{11}$ of security $1$ decomposes as 
\begin{equation}
    \sigma_{11} = \rho_{12}^2\sigma_{11} + \frac{1}{\omega_{11}}
\end{equation}
where $\rho_{12}^2\sigma_{11}$ is the variance explained by $Y_2$ and $1/\omega_{11}$ is the variance of the residual. The same holds for the variance $\sigma_{22}$ of security $2$. It turns out this decomposition may be generalized to an arbitrary number of assets through the concept of a partial covariance.

\end{example}

The endogenous representation of each asset return is expressed in terms of all the remaining asset returns. Such representation can be extended more generally where only parts of the asset returns are represented endogenously, conditional on all the remaining asset returns. We have the following result.

\begin{theorem} \label{thm:partition}
Let $\mathbb{I}, \mathbb{J}\subseteq \{1,2,\cdots, p\}$ such that $\mathbb{I}\cap\mathbb{J} = \emptyset$ and $\mathbb{I}\cup\mathbb{J} = \{1,2,\cdots,p\}$. Suppose $Y\in\mathbb{R}^p$ has zero mean and positive definite covariance matrix $\bsig = \var(Y)$ such that $\bomg = \bsig^{-1}$ exists. Let $Y = (Y_{\mathbb{I}}, Y_{\mathbb{J}})^{\top}$ with mean and covariance matrix (without loss of generality, we use the pair $(1,2)$ to ease the notation.)
\begin{equation} \label{eq:sigma-partition}
    \mu = \begin{pmatrix}
    \mu_{\mathbb{I}}\\
    \mu_{\mathbb{J}}
    \end{pmatrix}\quad\text{and}\quad
    \bsig = \begin{pmatrix}
    \bsig_{\mathbb{I}\mathbb{I}} & \bsig_{\mathbb{I}\mathbb{J}}\\
    \bsig_{\mathbb{J}\mathbb{I}} & \bsig_{\mathbb{J}\mathbb{J}}
    \end{pmatrix}
\end{equation}
Define $Y_{\mathbb{I}|\mathbb{J}} = Y_{\mathbb{I}} - \mathcal{L} (Y_{\mathbb{J}})$, where $\mathcal{L} (Y_{\mathbb{J}}) = \bsig_{\mathbb{IJ}} \bsig_{\mathbb{JJ}}^{-1} Y_{\mathbb{J}}$, then in the endogenous representation 
\begin{equation}
\label{eq:38}
    Y_{\mathbb{I}|\mathbb{J}} = {\bf A}_{\mathbb{I}|\mathbb{J}}Y_{\mathbb{I}|\mathbb{J}} + E_{\mathbb{I}|\mathbb{J}}
\end{equation}
where ${\bf A}_{\mathbb{I}|\mathbb{J}}$ is a $|\mathbb{I}|\times|\mathbb{I}|$ matrix with $({\bf A}_{\mathbb{I}|\mathbb{J}})_{ii} = 0$ and $E_{\mathbb{I}|\mathbb{J}}$ is a $|\mathbb{I}|$-dimensional vector, such that
\begin{equation}
\label{eq:39}
    \Cov((Y_{\mathbb{I}|\mathbb{J}})_i, (E_{\mathbb{I}|\mathbb{J}})_j) = 0
    \quad\text{for all $i\neq j$}
\end{equation}
then
\begin{equation}
    {\bf A}_{\mathbb{I}|\mathbb{J}} = {\bf A}_{\mathbb{I}\mathbb{I}}
\end{equation}
where ${\bf A} = {\bf I} - {\bf D}\bomg, {\bf D}=\diag(\bomg)^{-1}$.
\end{theorem}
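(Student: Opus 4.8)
The plan is to recognize that $Y_{\mathbb{I}|\mathbb{J}}$ is itself a zero-mean random vector to which Theorem~\ref{lem:mini} applies verbatim, so that the entire problem reduces to correctly identifying its precision matrix. First I would observe that $\mathcal{L}(Y_{\mathbb{J}}) = \bsig_{\mathbb{IJ}}\bsig_{\mathbb{JJ}}^{-1}Y_{\mathbb{J}}$ is the best linear predictor of $Y_{\mathbb{I}}$ from $Y_{\mathbb{J}}$, so that $Z := Y_{\mathbb{I}|\mathbb{J}} = Y_{\mathbb{I}} - \mathcal{L}(Y_{\mathbb{J}})$ inherits zero mean (as $Y$ has zero mean and $\mathcal{L}$ is linear) and has covariance equal to the Schur complement $\Var(Z) = \bsig_{\mathbb{II}} - \bsig_{\mathbb{IJ}}\bsig_{\mathbb{JJ}}^{-1}\bsig_{\mathbb{JI}}$.

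The key step, and the place where I expect the genuine content of the result to sit, is the block-inverse (Schur complement) identity $\bsig_{\mathbb{II}} - \bsig_{\mathbb{IJ}}\bsig_{\mathbb{JJ}}^{-1}\bsig_{\mathbb{JI}} = (\bomg_{\mathbb{II}})^{-1}$, where $\bomg_{\mathbb{II}}$ is the $\mathbb{II}$ block of $\bomg = \bsig^{-1}$. This tells us that the precision matrix of $Z$ is exactly $\bomg_{\mathbb{II}}$, which is positive definite since $\bomg$ is. Once this identification is in hand, everything downstream is bookkeeping; the subtlety is purely in confirming that conditioning on $Y_{\mathbb{J}}$ is equivalent, at the level of second moments, to reading off the $\mathbb{II}$ block of the full precision matrix.

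With the precision of $Z$ identified as $\bomg_{\mathbb{II}}$, I would then apply Theorem~\ref{lem:mini} directly to $Z$. It supplies a unique zero-diagonal minimizer producing a residual $E_{\mathbb{I}|\mathbb{J}}$ that satisfies the uncorrelatedness condition $\req{eq:39}$ (equivalently via Lemma~\ref{lem:uncorrelated}), and that minimizer equals ${\bf A}_{\mathbb{I}|\mathbb{J}} = {\bf I} - \diag(\bomg_{\mathbb{II}})^{-1}\bomg_{\mathbb{II}}$. This already fixes the left-hand side of the claimed identity.

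Finally I would verify this coincides with the $\mathbb{II}$ block of the full ${\bf A} = {\bf I} - {\bf D}\bomg$. Since ${\bf D} = \diag(\bomg)^{-1}$ is diagonal, its restriction to the index set $\mathbb{I}$ is precisely $\diag(\bomg_{\mathbb{II}})^{-1}$ (the diagonal of $\bomg_{\mathbb{II}}$ being the entries $\omega_{ii}$ for $i \in \mathbb{I}$), and multiplication of a diagonal matrix by $\bomg$ preserves blocks, so that $({\bf D}\bomg)_{\mathbb{II}} = \diag(\bomg_{\mathbb{II}})^{-1}\bomg_{\mathbb{II}}$. Subtracting from the identity block gives ${\bf A}_{\mathbb{II}} = {\bf I} - \diag(\bomg_{\mathbb{II}})^{-1}\bomg_{\mathbb{II}} = {\bf A}_{\mathbb{I}|\mathbb{J}}$, which is the asserted equality.
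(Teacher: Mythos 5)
Your proposal is correct and follows essentially the same route as the paper: both hinge on identifying $\Var(Y_{\mathbb{I}|\mathbb{J}})$ with the Schur complement $\bomg_{\mathbb{I}\mathbb{I}}^{-1}$, then transferring the characterization of the zero-diagonal coefficient matrix to the reduced vector, and finishing with the block bookkeeping $({\bf D}\bomg)_{\mathbb{I}\mathbb{I}} = \diag(\bomg_{\mathbb{I}\mathbb{I}})^{-1}\bomg_{\mathbb{I}\mathbb{I}}$. The only cosmetic difference is that you invoke Lemma \ref{lem:uncorrelated} (applied to $Y_{\mathbb{I}|\mathbb{J}}$) as a black box --- the correct tool here, since the theorem hypothesizes the uncorrelatedness condition rather than a minimization, so Theorem \ref{lem:mini} alone would be slightly off-target --- whereas the paper re-derives the same fact inline by right-multiplying \req{eq:38} by $Y_{\mathbb{I}|\mathbb{J}}^{\top}$ and taking expectations.
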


Following the Example \ref{ex:two}, we partition the returns as $Y = (Y_{\mathbb{I}}, Y_{\mathbb{J}})^{\top}$ where 
$\mathbb{I} = \{1, 2\}$ and $\mathbb{J} = \{3, \cdots, p\}$. This leads to a partitioning of $\bsig$ as in the Equation (\ref{eq:sigma-partition}). The partial correlation $\varrho$ and partial covariance $\pi$ of $(Y_1, Y_2)$ are defined as
\begin{align} \label{eq:pcov}
    \varrho_{ij} = \frac{\pi_{ij}}{\sqrt{\pi_{ii}\pi_{jj}}}
      \hspace{0.16in} \text{ and} \hspace{0.16in}
    \pi_{ij} = \Cov (Y_i - \mathcal{L}_i(Y_{\mathbb{J}}), Y_j - \mathcal{L}_j(Y_{\mathbb{J}})) 
\end{align}
where $\mathcal{L} (Y_{\mathbb{J}}) = \bsig_{\mathbb{IJ}} \bsig_{\mathbb{JJ}}^{-1} Y_{\mathbb{J}}$ is the best affine (in the least-squares sense) estimator of $Y_{\mathbb{I}}$ in terms of $Y_{\mathbb{J}}$ and the $i,j \in \{1,2\}$.\footnote{This estimator is distinct from the conditional expectation $\mathbb{E}(Y_{\mathbb{I}} \bst Y_{\mathbb{J}})$ which is the best least-squares estimator of $Y_{\mathbb{I}}$ that is $Y_{\mathbb{J}}$-measurable. The definition applies to any pair $(i,j)$ provided
we set $Y_{\mathbb{I}} = (Y_i, Y_j)$ and let $Y_{\mathbb{J}}$ be the vector of $p-2$ elements $Y_k$ for $k \notin \{i,j\}$.} 
The partial covariance $\pi_{12}$ is the covariance between $Y_1$ and $Y_2$ after removing all first order (linear) dependence on the remaining variables in $Y_{\mathbb{J}}$. In this sense, $\varrho_{12}$ is perhaps a better measure of pairwise correlation.

Our next result provides the decomposition of the variance of a security return into its endogenous and residual parts. It generalizes Example $\ref{ex:two}$.

\begin{proposition}
Suppose $Y\in\mathbb{R}^p$ has zero mean and positive definite covariance matrix $\bsig=((\sigma_{ij}))$. Let $\bomg=((\omega_{ij})) = \bsig^{-1}$, and $Y = (Y_{\mathbb{I}}, Y_{\mathbb{J}})^{\top}$ where 
$\mathbb{I} = \{1, 2\}$ and $\mathbb{J} = \{3, \cdots, p\}$. Denote $\nu_{11} = \Var({\bf A} Y)_{11}$, the endogenous variance of $Y_1$. Then,
\begin{align} \label{decom2}
   \nu_{11} =  \sigma_{11} - \pi_{11} (1 - \varrho_{12}^2)
\end{align}
where $\pi_{11}$ and $\varrho_{12}$ are defined in (\ref{eq:pcov}).
\end{proposition}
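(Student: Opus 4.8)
The plan is to leverage Proposition \ref{prop} from Section \ref{sec:variance}, whose diagonal identity $\req{decom}$ already gives $\sigma_{11} = \nu_{11} + 1/\omega_{11}$, i.e. $\nu_{11} = \sigma_{11} - 1/\omega_{11}$. Comparing this with the claimed $\req{decom2}$, the whole statement collapses to the single scalar identity
\begin{equation}
  \frac{1}{\omega_{11}} = \pi_{11}\,(1 - \varrho_{12}^2).
\end{equation}
So no new probabilistic argument is needed beyond what Proposition \ref{prop} supplies; everything that remains is to re-express the partial covariance quantities $\pi_{11}$ and $\varrho_{12}$ of $\req{eq:pcov}$ in terms of the entries of $\bomg$.

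The key step, and the one doing the real work, is to identify the $2\times 2$ matrix of partial covariances of $Y_{\mathbb{I}}$ given $Y_{\mathbb{J}}$ with the inverse of the leading block of the precision matrix. Writing the $\pi_{ij}$ of $\req{eq:pcov}$ as the entries of the residual covariance $\bsig_{\mathbb{II}} - \bsig_{\mathbb{IJ}}\bsig_{\mathbb{JJ}}^{-1}\bsig_{\mathbb{JI}}$ (the covariance of $Y_{\mathbb{I}}$ after removing the least-squares projection $\mathcal{L}(Y_{\mathbb{J}})$), I would invoke the standard Schur-complement identity for a partitioned positive definite matrix, which yields
\begin{equation}
  \bsig_{\mathbb{II}} - \bsig_{\mathbb{IJ}}\bsig_{\mathbb{JJ}}^{-1}\bsig_{\mathbb{JI}} = (\bomg_{\mathbb{II}})^{-1},
\end{equation}
where $\bomg_{\mathbb{II}}$ is the $\{1,2\}\times\{1,2\}$ block of $\bomg$. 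This is precisely the statement that the inverse of a covariance block equals the Schur complement of the complementary block in the precision matrix, and it is the crux of the argument.

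With this identification in hand, the remainder is direct computation on a $2\times 2$ matrix. Inverting $\bomg_{\mathbb{II}} = \bigl(\begin{smallmatrix}\omega_{11}&\omega_{12}\\\omega_{12}&\omega_{22}\end{smallmatrix}\bigr)$ gives $\pi_{11} = \omega_{22}/(\omega_{11}\omega_{22}-\omega_{12}^2)$ and $\pi_{22} = \omega_{11}/(\omega_{11}\omega_{22}-\omega_{12}^2)$ and $\pi_{12} = -\omega_{12}/(\omega_{11}\omega_{22}-\omega_{12}^2)$. Substituting these into $\req{eq:pcov}$ collapses the partial correlation to $\varrho_{12} = -\omega_{12}/\sqrt{\omega_{11}\omega_{22}}$, so that $1 - \varrho_{12}^2 = (\omega_{11}\omega_{22}-\omega_{12}^2)/(\omega_{11}\omega_{22})$. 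Multiplying $\pi_{11}$ by this factor cancels the determinant $\omega_{11}\omega_{22}-\omega_{12}^2$ and leaves exactly $1/\omega_{11}$, establishing the reduced identity and hence $\req{decom2}$.

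I expect the only genuine obstacle to be the Schur-complement step: one must keep straight that the block being inverted is the precision block $\bomg_{\mathbb{II}}$ (not the covariance block $\bsig_{\mathbb{II}}$), and that the projection $\mathcal{L}(Y_{\mathbb{J}})$ appearing in $\req{eq:pcov}$ really produces that Schur complement as its residual covariance. Both are textbook facts, but since the paper deliberately avoids distributional assumptions, I would justify the projection step purely through second moments, using the orthogonality of the least-squares residual to $Y_{\mathbb{J}}$, so as to keep the derivation assumption-free. Once that link is secured, the $2\times 2$ algebra is routine and mirrors the two-asset computation already carried out in Example \ref{ex:two}.
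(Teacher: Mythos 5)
Your proposal is correct and takes essentially the same route as the paper's own (implicit) derivation in the discussion following the proposition: reduce via Proposition \ref{prop} to the scalar identity $1/\omega_{11} = \pi_{11}(1-\varrho_{12}^2)$, then obtain that identity from the Schur-complement fact $\bpi = \bsig_{\mathbb{I}\mathbb{I}} - \bsig_{\mathbb{I}\mathbb{J}}\bsig_{\mathbb{J}\mathbb{J}}^{-1}\bsig_{\mathbb{J}\mathbb{I}} = (\bomg_{\mathbb{I}\mathbb{I}})^{-1}$ together with the $2\times 2$ inversion giving $\varrho_{12} = -\omega_{12}/\sqrt{\omega_{11}\omega_{22}}$. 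The paper packages the same computation as $1/\omega_{ii} = |\bpi|/\pi_{jj} = \pi_{ii}(1-\varrho_{12}^2)$, so there is no substantive difference and no gap.
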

The result of $\req{decom2}$ states that the total variance of $Y_1$ is decomposed into two parts: $\nu_{11}$ and $\pi_{11} (1 - \varrho_{12}^2)$. $\nu_{11}$ is the endogenous variance of $Y_1$, that is the variance of $Y_1$ explained by all other assets $Y_2, Y_3, \cdots, Y_p$ in the system; $\pi_{11} (1 - \varrho_{12}^2)$ is the residual variance that is specific to $Y_1$ and cannot be explained by the system.
\begin{remark}
 It may appear that $\nu_{11}$ depends on the choice of $Y_2$ unless $\varrho_{1j}$ is the same for all $j$. However, this is not so because $\pi_{11}$ depends on the choice of $Y_2$.` Nevertheless, the choice of $Y_2$ is arbitrary in the computation of $\nu_{11}$.
\end{remark}

We proceed to explain analogy to Example $\ref{ex:two}$. Denote by $\bpi$ the partial 
covariance matrix of the pair $(Y_1,Y_2)$. By $\req{eq:pcov}$,
\begin{align}
   \bpi = \begin{pmatrix}
   \pi_{11} & \pi_{12}  \\ 
   \pi_{12} & \pi_{22} 
   \end{pmatrix}
   = \bsig_{\mathbb{I}\mathbb{I}} - \bsig_{\mathbb{I}\mathbb{J}}\bsig_{\mathbb{J}\mathbb{J}}^{-1}\bsig_{\mathbb{J}\mathbb{I}}
\end{align}
In analogy with Example $\ref{ex:two}$, the concentration matrix $\bomg = \bsig^{-1}$, which admits a partition akin to $\req{eq:sigma-partition}$, has the following entries corresponding to the variables in $Y_{\mathbb{I}}$
\begin{align}
  \bomg_{VV} = \frac{1}{|\bpi|}  
  \begin{pmatrix}
   \pi_{22} & -\pi_{12}  \\ 
   -\pi_{12} & \pi_{11} 
   \end{pmatrix}.
\end{align}
Using the inverse of the diagonal entries $|\bpi| / \pi_{ii} = (1 - \varrho_{ij}^2) \pi_{jj}$, we obtain the corresponding entries of the matrix ${\bf A}$ of Lemma $\ref{lem:mini}$. For the variables $V$,
\begin{align}
  {\bf A}_{VV} =
  \begin{pmatrix}
    0 & \pi_{12} / \pi_{22}  \\ 
   \pi_{12}/\pi_{11} & 0
   \end{pmatrix}
\end{align}
The residual matrix $\bdel = \Var(E)$ for $V$ may then be computed as
\begin{align}
  \bdel_{VV} =  (1 - \varrho_{12}^2) 
   \begin{pmatrix}
   \pi_{11} & -\pi_{12} 
\\ -\pi_{12} &  \pi_{22}
   \end{pmatrix}.
\end{align}
The endogenous variance corresponding to the variables $V$ only takes the form
$\Var({\bf A}_{VV} V) = \varrho_{12}^2 \bpi$ that is analogous to Example $\ref{ex:two}$ with $\bpi$
replacing $\bsig$. The full endogenous variance matrix $\Var( {\bf A} Y)$ is computed as follows
\begin{align}
  \Var( {\bf A} Y) = {\bf A} \bsig {\bf A}^\top = \bsig - 2 {\bf D} + \bdel
\end{align}

\subsection{The graphical representation model}

Based on the endogenous representation of asset returns in $\req{eq:mendogi}$, we introduce the \textit{Graphical Representation Model} (or \textit{GRM}) for the asset returns $Y$ as
\begin{equation}
    Y = {\bf A} Y + E
\end{equation}
where ${\bf A} = {\bf I} - {\bf D}\bomg, {\bf D} = \diag(\bomg)^{-1}$, and ${\bf A}$ is a sparse matrix. In the GRM, we claim ${\bf A}$ to be sparse for two reasons. First, from a modeling perspective, we raise the possibility that the true precision matrix $\bomg$ is sparse. That is, the true partial correlations (therefore the direct interactions between firms) are sparse. Such a claim is based on the empirical observation that the correlations between stocks are largely due to the strong correlation between each stock and the market (e.g. equity indices), rather than inter-stock dependencies \cite{Shapira2009}. Thus after removing the mediating effect of other stocks, there are only a few significant stock correlations while others are negligible in comparison. Second, a sparse estimation of $\bomg$ (equivalently ${\bf A}$) stabilizes the estimated parameters when the sample size is small while also performing data-driven model selection to prevent over-fitting.

\subsection{Sparse model estimation}\label{estimation}

This sub-section discusses how to estimate ${\bf A}$ in the graphical representation model. To estimate ${\bf A}$, one needs to estimate the precision matrix (also called the concentration matrix) $\bomg$. A sparse structure of $\bomg$ is of great interest in many applications, where the sparsity pattern of the precision matrix directly corresponds to the graphical model structure. We assume that $\bomg$ is sparse for several reasons. Assuming sparsity in the estimation of $\bomg$ provides better interpretability of GRM and computational storage efficiency. Besides, from an estimation perspective, assuming sparsity allows us to use sparsity-included regularization that stabilizes estimation. Several graphical models have been proposed to estimate a sparse $\bomg$, include Graphical Lasso (Glasso)\cite{Friedman2008} and Concord\cite{Khare2014}:

\begin{equation}
    \begin{split}
        \text{Glasso:}&\quad\hat\bomg = \argmin_{\bthe\succeq0}\{-\log\det(\bthe) + \tr({\bf S}\bthe) + \lambda\|\bthe\|_1\}\\
        \text{Concord:}&\quad\hat\bomg = \argmin_{\bthe\in\mathbb{R}^{p\times p}}\{-\log\det(\bthe_D^2) + \tr({\bf S}\bthe^2) + \lambda\|\bthe_F\|_1\}
    \end{split}
\end{equation}

\noindent where $\bthe_D$ is the diagonal matrix of $\bthe$, $\bthe_F = \bthe - \bthe_D$, ${\bf S}$ is the sample covariance matrix defined as:

\begin{equation}
    {\bf S} = \frac{1}{n}\sum_{i=1}^n(y_i - \bar y)(y_i - \bar y)^{\top}\quad\text{where}\quad
    \bar y = \frac{1}{n}\sum_{i = 1}^ny_i
\end{equation}

\noindent for the given $n$ independent samples $y_1, y_2, \cdots, y_n\in\mathbb{R}^p$, and $\lambda$ is the tuning parameter that controls the sparsity level of $\hat\bomg$. Generally, the larger the value of $\lambda$, the more zero elements $\hat\bomg$ contains. In practice, $\lambda$ is determined by the cross-validation procedure\footnote{Cross-validation is a widely-used method for estimating the model's parameter. In $f$-fold cross-validation, observed data is partitioned into $f$ roughly equal-size parts. Given a certain value of $\lambda$, for the $i$-th part of the data, we use all remaining parts to fit the model and calculate the prediction error for the $i$-th part of data. We do this for $i = 1, 2, \cdots, f$ and combine the $f$ estimates of prediction error. The optimal $\lambda$ is the one that has the least prediction error. See more details in Section 7.10 of the book \textit{The Elements of Statistical Learning} \cite{Hastie2009}.}. Glasso is solved by the block coordinate descent algorithm \cite{Friedman2008} and can be implemented using the \textbf{R} package \texttt{glasso} \footnote{Glasso package can be downloaded from \url{https://cran.r-project.org/web/packages/glasso/index.html}.}. Concord can be solved either by the coordinate-wise descent algorithm \cite{Khare2014} or by the proximal gradient method \cite{Oh2014}. Both methods can be implemented in the \textbf{R} package \texttt{gconcord}. The estimated ${\bf A}$ is:

\begin{equation}
    \hat{{\bf A}} = {\bf I} - \hat{\bf D}\hat\bomg
\end{equation}
where $\hat{\bf D} = \diag(\hat\bomg)^{-1}$. A sparse $\hat\bomg$ returns a sparse estimation for ${\bf A}$, and both $\hat\bomg$ and $\hat{{\bf A}}$ have the same sparsity pattern, i.e. both matrices have the same locations of zero elements. The sparse estimation of ${\bf A}$ has two desirable properties: First, it implies a parsimonious model and allows one to visualize a sparse graph. Such a sparse graph provides a better interpretation of asset returns. At the same time, the estimated implied beta is still dense, thus the sparsity does not jeopardize the beta structure. Second, a sparse $\hat{{\bf A}}$ implies that in the linear regression $\req{eq:lr}$, many estimated regression coefficients are zeros. From this aspect, the estimation of a sparse ${\bf A}$ is similar to a Lasso regression for each asset return, and the sparsity estimation for ${\bf A}$ is equivalent to the variable selection in $\req{eq:lr}$, and enhances the prediction accuracy of the graphical representation model.

\section{A comparison with existing models}
\label{sparsity}

In this section, we compare the GRM with some major models that have dominated in the finance literature. We demonstrate that (A) the GRM shares some common properties with the one-factor model implied by the CAPM equation while distinction also exists, (B) despite that no factors are contained in the GRM, it is possible to obtain implied factors from it, and (C) adding additional factors in GRM conflicts the model sparsity.

\subsection{A comparison with CAPM one-factor model}
In this part, we demonstrate that both similarities and distinction exist between the GRM and the CAPM. Suppose that the risk-free interest rate is $R_f$ and the return of the market portfolio is $R_m$. Let $e = (1, 1, \cdots, 1)^{\top}\in\mathbb{R}^p$ and $w_m\in\mathbb{R}^p$ be the allocation of the market portfolio among the $p$ numbers of assets. The CAPM equation is:

\begin{equation}
    \label{eq:capm}
    \mathbb{E}(R) - R_fe = \beta(\mathbb{E}(R_m) - R_f)
\end{equation}

\noindent where $\beta = (\beta_1, \beta_2, \cdots, \beta_p)^{\top}$ and $\beta_i = \frac{\Cov(R_i, R_m)}{\Var(R_m)}$ is the beta of the $i$-th asset. \cite{Blume1971} empirically shows that all betas have the tendency towards one, and that almost all betas are positive, i.e., when the market goes up, all the returns go up, and when markets go down, the opposite is true. Define $Z_{\beta} = (R-R_fe) - \beta(R_m - R_f)$. Then $Z_{\beta}$ is interpreted as the vector of diversifiable specific returns of asset, and:

\begin{equation}
\begin{split}
    Z_{\beta} &= R - \mathbb{E}(R) - \beta(R_m - \mathbb{E}(R_m)) + (\mathbb{E}(R) - R_fe) - \beta(\mathbb{E}(R_m) - R_f)\\
    &= R - \mathbb{E}(R) - \beta(R_m - \mathbb{E}(R_m))
    \end{split}
\end{equation}
Denote the excess return of the market portfolio as $X = R_m - \mathbb{E}(R_m)$. The CAPM equation implies a one-factor model:

\begin{equation}
\label{eq:1-factor}
    Y = \beta X + Z_{\beta}
\end{equation}

\noindent In the one-factor model, the market beta $\beta$ is the regression coefficient for $Y$ against $X$. This is different from the GRM where $\{\rma_{ij}\}$ are the regression coefficients for $Y_i$ against $Y_j$ with all $j\neq i$. Since in CAPM the market portfolio $w_m\in\mathbb{R}^p$ is a value-weighted portfolio of all the securities in the market, we have

\begin{equation}
\label{market-factor}
    X = R_m - \mathbb{E}(R_m) = w_m^{\top}R - \mathbb{E}(w_m^{\top}R) = w_m^{\top}Y
\end{equation}

\noindent This implies that the market factor in CAPM is latent and is endogenously given, and that the one-factor model in $\req{eq:1-factor}$ can be written as

\begin{equation}
\label{eq:1.1-factor}
    Y = {\bf A}_{\beta}Y + Z_{\beta}\quad\text{where}\quad {\bf A}_{\beta} = \beta w_m^{\top}
\end{equation}
The equation $\req{eq:1.1-factor}$ has a similar representation with $\req{eq:mendogi}$ and it implies that the market beta is endogenously determined by the returns of all the assets and is the invisible hand that is the main driver behind the equity markets. In the one-factor model $\req{eq:1-factor}$, the return of each asset is described by a linear combination of all asset returns plus the specific returns. Therefore, CAPM treats $\beta$ as an abstract object that is not truly real, and the meaning of $\beta$ becomes elusive and incomprehensible. As a result, the factor model in $\req{eq:1-factor}$ is not explanatory and can lead to a serious misunderstanding of market risk and the misuse of statistical tools to measure it. 

Furthermore, the equation $\req{eq:1.1-factor}$ also unveils that the covariance matrix of $Z_{\beta}$ cannot be diagonal, as demonstrated in Lemma \ref{lemma1}.

\begin{lemma}
\label{lemma1}
In the one-factor model $Y = \beta X + Z_{\beta}$ where $X = w_m^{\top}Y$ is the market factor and $w_m$ is the market portfolio, the covariance matrix $\bdel_{\beta} = \Var(Z_{\beta})$ is not diagonal.
\end{lemma}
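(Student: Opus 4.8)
The plan is to exploit a single structural fact: because the factor is endogenous, $X = w_m^\top Y$, the market portfolio's \emph{own} specific return must vanish identically, and this pins down the null space of $\bdel_\beta$ in a way that is incompatible with diagonality. So I would argue by contradiction, assuming $\bdel_\beta$ is diagonal and deriving an absurd conclusion about $w_m$.

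First I would record the elementary identity $w_m^\top \beta = 1$, i.e. the market portfolio has beta one. Since $\beta_i = \Cov(Y_i, X)/\Var(X)$ and $X = w_m^\top Y$, linearity gives $\Cov(Y, X) = \bsig w_m$ and $\Var(X) = w_m^\top \bsig w_m$, so $\beta = \bsig w_m / (w_m^\top \bsig w_m)$ and hence $w_m^\top \beta = (w_m^\top \bsig w_m)/(w_m^\top \bsig w_m) = 1$ (the denominator is nonzero because $\bsig$ is positive definite). Writing $Z_\beta = Y - \beta X = ({\bf I} - \beta w_m^\top) Y$, I would then compute $\bdel_\beta = ({\bf I} - \beta w_m^\top)\,\bsig\,({\bf I} - \beta w_m^\top)^\top$. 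The consequence of $w_m^\top \beta = 1$ is that $w_m^\top Z_\beta = (1 - w_m^\top \beta) X = 0$ identically, so $w_m$ lies in the null space of $\bdel_\beta$.

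Next I would sharpen this to identify the null space exactly. For any $v$, positive definiteness of $\bsig$ gives $v^\top \bdel_\beta v = \|\bsig^{1/2}({\bf I} - w_m \beta^\top) v\|^2$, which vanishes iff $({\bf I} - w_m \beta^\top) v = 0$, i.e. iff $v \in \mathrm{span}\{w_m\}$. Thus $\bdel_\beta$ is positive semidefinite of rank exactly $p-1$ with null space precisely $\mathrm{span}\{w_m\}$. Now, if $\bdel_\beta$ were diagonal it would be a diagonal positive-semidefinite matrix of rank $p-1$, hence would have exactly one zero on the diagonal, say in position $k$, so its null space would be $\mathrm{span}\{e_k\}$, the $k$-th coordinate axis. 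Equating the two descriptions of the null space forces $w_m \propto e_k$, i.e. the entire market portfolio is concentrated in a single asset, contradicting that $w_m$ is a genuine (diversified, value-weighted) market portfolio. Therefore $\bdel_\beta$ cannot be diagonal.

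The crux, and the step I expect to require the most care, is the passage from ``$w_m$ is a null vector of $\bdel_\beta$'' to the precise rank-and-null-space statement; once the null space is pinned to $\mathrm{span}\{w_m\}$, the clash with diagonality is immediate. If one prefers to avoid the rank computation, there is a lighter route that leans on the value-weighted (all weights nonzero) assumption: from $w_m^\top Z_\beta = 0$ one has $w_m^\top \bdel_\beta w_m = \sum_i w_{m,i}^2 (\bdel_\beta)_{ii} = 0$, and with nonnegative summands and every $w_{m,i}\neq 0$ this forces $\bdel_\beta = {\bf 0}$, whence $Y = \beta X$ is one-dimensional and $\bsig$ is rank one, contradicting positive definiteness for $p \ge 2$. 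I would present the null-space argument as the main proof since it needs no positivity hypothesis on the weights, and mention this quadratic-form version as a corollary-style remark.
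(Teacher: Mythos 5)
Your proof is correct, and it opens exactly as the paper's does --- establishing $w_m^\top \beta = 1$ and the identical degeneracy $w_m^\top Z_\beta = 0$ --- but from there it takes a genuinely sharper route. The paper stops at the linear identity: from $w_m^\top Z_\beta = 0$ it deduces $w_m^\top \bdel_\beta = \mathbf{0}$, assumes $\bdel_\beta$ is diagonal \emph{with nonzero diagonal entries}, and reads the contradiction off entrywise from $(w_m)_j (\bdel_\beta)_{jj} = 0$; your ``lighter route'' at the end ($w_m^\top \bdel_\beta w_m = \sum_i w_{m,i}^2 (\bdel_\beta)_{ii} = 0$) is essentially this same argument in quadratic-form clothing, and like the paper's version it needs every weight $(w_m)_i$ to be nonzero. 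Your main argument goes further: exploiting $\bsig \succ 0$ (which the paper assumes throughout its model but never invokes in this proof) you identify $\ker \bdel_\beta = \mathrm{span}\{w_m\}$ exactly and conclude $\mathrm{rank}\,\bdel_\beta = p - 1$, so a diagonal $\bdel_\beta$ would have null space $\mathrm{span}\{e_k\}$ and force $w_m \propto e_k$. This buys two things the paper's argument does not deliver: it requires only that the market portfolio hold at least two assets (rather than all weights nonzero), and it rules out diagonal matrices with some vanishing diagonal entries --- a case the paper's proof formally sidesteps by fiat, since a rank-$(p-1)$ diagonal $\bdel_\beta$ is not excluded by its ``non-zero diagonal elements'' assumption. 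The shared, unavoidable economic hypothesis in both treatments is that $w_m$ is not concentrated in a single asset, which you correctly flag rather than leave implicit; all intermediate computations in your write-up ($\beta = \bsig w_m / (w_m^\top \bsig w_m)$, $\bdel_\beta = ({\bf I} - \beta w_m^\top)\bsig({\bf I} - \beta w_m^\top)^\top$, and the equivalence $v^\top \bdel_\beta v = 0 \iff ({\bf I} - w_m \beta^\top)v = 0 \iff v \in \mathrm{span}\{w_m\}$) check out.
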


\begin{proof}
Let $(w_m)_i$ be the $i$-th element of $w_m$. Then:

\begin{equation*}
\begin{split}
    w_m^{\top}\beta &= \sum_{i = 1}^p(w_m)_i\beta_i = \sum_{i=1}^p(w_m)_i\cdot\frac{\Cov(Y_i, X)}{\Var(X)}\\
    &= \frac{1}{\Var(X)}\Cov\Big(\sum_{i=1}^p(w_m)_iR_i, X\Big) = \frac{\Cov(X, X)}{\Var(X)} = 1
    \end{split}
\end{equation*}
That is, the market portfolio has a beta of 1. Left multiply $w_m^{\top}$ on both sides of the one-factor model, we have:

\begin{equation*}
w_m^{\top}Z_{\beta} = w_m^{\top}Y - w_m^{\top}\beta X = w_m^{\top}Y - X = 0
\end{equation*}

\noindent Assume that $\bdel_{\beta}$ is diagonal with non-zero diagonal elements, and $Z_i$ be the $i$-th element of $Z_{\beta}$, then $\mathbb{E}(Z_iZ_j) = 0$ for all $i\neq j$. Thus:

\begin{equation*}
 w_m^{\top}\bdel_{\beta} = w_m^{\top}\mathbb{E}(Z_{\beta}Z_{\beta}^{\top}) = \mathbb{E}(w_m^{\top}Z_{\beta}Z_{\beta}^{\top}) = \mathbf{0}  
\end{equation*}

\noindent Because $w_m\neq \mathbf{0}$, we have $\bdel_{\beta} = \mathbf{0}$, this is a contradiction. Thus $\bdel_{\beta}$ is not diagonal.
\end{proof}

\noindent Consequently, what CAPM has taken literally means is that the specific returns of assets are correlated. Such a fact is consistent with the result that in the GRM the covariance matrix of $E$ is non-diagonal. That is

\begin{equation}
    \label{eq:cov}
    \Var(E) = {\bf D}\bomg{\bf D}
\end{equation}

\noindent This is different from a factor model whose residual is generally assumed to have a diagonal covariance matrix. Unfortunately, the non-diagonal fact of the covariance matrix of $Z_{\beta}$ has been ignored by many factor models, especially the strict factor modeling whose residual term $Z_{\beta}$ is generally assumed to have a diagonal covariance matrix.

\subsection{A comparison with multi-factor modeling}

In this sub-section, we try to understand the connection between the GRM and a factor model. There has been a whole different variety of factor models in the literature\cite{Litterman2003}. CAPM is the origin of the one-factor model. As an alternative, Arbitrage Pricing Theory (APT) \cite{Ross1976} introduces a multi-factor model. Since then, the literature has gone in different ways, including fundamental factor models (such as the Fama-French factor model that identifies specific factors in an equity market) and statistical factor models (such as PCA (Principal Component Analysis) factor models that take principal components as factors). 

Consider a $k$-factor model:

\begin{equation}
\label{eq:apt}
    Y = {\bf B} X + Z
\end{equation}

\noindent where ${\bf B}$ is a $p\times k$ matrix of factor exposures and $X$ is a $k$-dimensional vector of zero-mean common factor returns. Assume that $\mathbb{E}(XZ^{\top}) = \mathbf{0}$, then the covariance matrix of $Y$ given by the factor model is:

\begin{equation}
    \bsig_{FM} = {\bf B}{\bf V}{\bf B}^{\top} + \bdel_{FM}
\end{equation}

\noindent with ${\bf V} = \Var(X)$ and $\bdel_{FM} = \Var(Z)$. By Woodbury identity \cite{Woodbury1950}, it can be shown that the precision matrix of the factor model can be decomposed as:

\begin{equation}
    \bomg_{FM} = \bsig_{FM}^{-1} = \bdel_{FM}^{-1} + {\bf L}
\end{equation}

\noindent where ${\bf L}$ is a low-rank and dense matrix\footnote{By Woodbury identity ${\bf L} = -\bdel_{FM}^{-1}{\bf B}({\bf V}^{-1} + {\bf B}^{\top}\bdel_{FM}^{-1}{\bf B})^{-1}{\bf B}^{\top}\bdel_{FM}^{-1}$ and $\text{rank}({\bf L}) = k$.}. Thus, in a factor model, $\bomg_{FM}$ must be a dense matrix. A detailed discussion can be found in \cite{Shkolnik2016}.

In contrast to a factor model, the GRM allows the possibility that $\bomg$ to be a sparse matrix. Despite that the GRM does not contain factors, it is possible to obtain implied factors. Suppose that in the factor model of $\req{eq:apt}$, ${\bf B} = (\beta_1, \beta_2, \cdots, \beta_k)$ where each $\beta_j$ is a $p$-dimensional column vector of $\bf B$, and the $k$ leading eigenvectors of $\bsig_{FM}$ are $\xi_1, \xi_2, \cdots, \xi_k$, then by \cite{Chamberlain1983}, as $p\rightarrow\infty$, we have $\beta_i\rightarrow\xi_i$ for $i = 1, 2, \cdots, k$. Motivated by this result, we define the implied factor matrix of the GRM as follows. Suppose the $k$ leading eigenvectors of $\bomg^{-1}$ are $\beta^{imp}_1, \beta^{imp}_2, \cdots, \beta^{imp}_k$, then the $p\times k$ implied factor matrix is defined as:

\begin{equation}
    {\bf B}^{imp} \triangleq (\beta^{imp}_1, \beta^{imp}_2, \cdots, \beta^{imp}_k)
\end{equation}

\noindent In particular, in the one-factor model, we take $\beta_1$ as the market beta. In the GRM, we take $\beta^{imp}_1$ as the implied beta. One interesting question is: if $\bomg$ is actually sparse, how is ${\bf B}^{imp}$ compared with ${\bf B}$, or in particular, how is $\beta^{imp}_1$ compared with $\beta_1$? The empirical answer will be unveiled in Section \ref{implied-beta}.

\subsection{A consideration for the mixed model}
\label{section5.2}

In spatial econometrics, the spatial interaction model emphasizes the statistical modeling of spatial interaction. For example, the spatial interaction model proposed by \cite{Billio2015} and \cite{Kou2018} is
\begin{equation}
    Y = \rho{\bf W} Y + {\bf B} X + G
\end{equation}
where $\rho$ is a scalar parameter to be estimated, ${\bf W}$ is the spatial interaction term measured by the geographical distances among assets, $X$ is the vector of factor exposures with the factor loading matrix ${\bf B}$, and $G$ is the residual. Motivated by the spatial interaction model, we are curious about the performance of GRM where factors are added into the model. That is, we consider the mixed model:
\begin{equation}
\label{eq:spatial}
Y = \rho{\bf A} Y + {\bf B} X + G
\end{equation}
where $\rho$ is a scalar parameter to be estimated, $G$ is the residual and ${\bf B}$ is the coefficient matrix (different from the ${\bf B}$ in a factor model of $\req{eq:apt}$), ${\bf A}=((a_{ij}))$ is defined in Equation (\ref{eq:results}). The mixed model contains the graphical association confounded with factors. In the mixed model, we are interested in whether adding the additional factors $X$ would further improve the out-of-sample performance of GRM.

Our empirical study in Section \ref{section:5.1} shows that the graphical representation model is useful to characterize asset returns, and when ${\bf A} Y$ term is present, factors $X$ become redundant. In addition, from a modeling perspective, such a result has an interesting implication. \cite{Shkolnik2016} shows that in a factor model, the precision matrix $\bomg$ can be decomposed into a sparse matrix and a dense low-rank matrix by the Sparse Low-rank Decomposition (SLD) proposed by \cite{Chandrasekaran2012}, and in the setup of a factor model $\bomg$ is not sparse. That is, the mixed model $\req{eq:spatial}$ implies that the precision matrix of $Y$ can be decomposed as
\begin{equation}
    \bomg = ({\bf I} - \rho{\bf A})^{\top}({\bf B}{\bf V}{\bf B}^{\top} + \bdel_{G})^{-1}({\bf I} - \rho{\bf A})
\end{equation}
where ${\bf V} = \Var(X)$ and $\bdel_G = \Var(G)$. Although $\bdel_G$ is a diagonal matrix, ${\bf B}{\bf V}{\bf B}^{\top}$ is dense and has a low rank (with the rank equal to the total number of orthogonal factors) thus $\bomg$ is a dense matrix. However, when factor terms are not present (i.e., the term ${\bf B}{\bf V}{\bf B}^{\top}$ is gone), the low-rank part goes away and $\bomg$ becomes sparse. From this point of view, a sparse precision matrix $\bomg$ is consistent with the GRM, not with the mixed model.

\section{Results on empirical data}
\label{empirical:grm}

Our empirical study shows that (A) compared with factor models, GRM is useful to characterize the asset returns, (B) the implied beta from GRM has all the properties of betas that come out from the factor models, and (C) GRM allows graph visualization among assets thus provide straightforward interpretation for assets inter-dependencies. While factors are invisible to financial market practitioners, the association among assets can be fully visualized.

In this section, we use the following notations: $n_I$ and $n_O$ are the sample sizes for in-sample and out-of-sample data. Let ${\bf Y}_I$ be the $p\times n_I$ data matrix of in-sample asset returns and ${\bf Y}_O$ be the $p\times n_O$ data matrix of out-of-sample asset returns, and $\hat{{\bf Y}}_O$ be the predicted values for ${\bf Y}_O$. Let ${\bf X}_I$ and ${\bf X}_O$ be the $k\times n_I$ and $k\times n_O$ data matrices of in-sample and out-of-sample observations for $k$ factors, respectively. For an arbitrary matrix ${\bf M}$, the $(i,j)$-th element is denoted as ${\bf M}_{i,j}$.

\subsection{Out-of-sample performance} \label{section:5.1}

To show that a sparse matrix ${\bf A}$ is sufficient to model asset returns, we compare the out-of-sample performance of the following models: the Fama-French factor model, the PCA factor model, the spatial interaction model proposed by \cite{Kou2018}, the graphical representation model, and the mixed model. We download daily close prices for all component stocks of S\&P 500 Index and Dow Jones Industrial Average Index (DJIA). We compute stock returns and split the dataset into in-sample and out-of-sample data whose time periods are presented in Table \ref{tab:time}\footnote{January 2008 was regarded as the beginning of the crisis, when a major mortgage lender defaulted and negative news started swirling. In March 2008, Bear Stearns was sold to JP Morgan for almost nothing with major Federal intervention.}. For each set of in-sample and out-of-sample data, the data set of asset returns is centered to have zero mean and is denoted as ${\bf Y}_I$ and ${\bf Y}_O$, respectively. For each model, we follow the below steps to compute the out-of-sample prediction $\hat{{\bf Y}}_O$:

\begin{itemize}
    \item Fama-French $k$-factor\footnote{We consider Fama-French 3-factor and 5-factor models. 3 factors are market, SMB  (Small Minus Big), HML (High Minus Low), 5 factors are market, SMB, HML, RMW (Robust Minus Weak), CMA (Conservative Minus Aggressive). See \cite{Fama1993} and \cite{Fama2015}.} model $Y = {\bf B} X + \zf$:
    \begin{enumerate}
        \item Download Fama-French factor returns from Kenneth R. French Data Library\footnote{See \url{https://mba.tuck.dartmouth.edu/pages/faculty/ken.french/Data_Library/f-f_5_factors_2x3.html}} and set as ${\bf X}_I$ and ${\bf X}_O$ by Table \ref{tab:time}. Each column of ${\bf X}_I$ and ${\bf X}_O$ is centered to have zero mean;
        \item The least square estimate for ${\bf B}$ is $\hat{{\bf B}} = {\bf Y}_I{\bf X}_I^{\top}({\bf X}_I {\bf X}_I^{\top})^{-1}$;
        \item The predicted value for ${\bf Y}_O$ is $\hat{{\bf Y}}_O = \hat{{\bf B}}{\bf X}_O$.
    \end{enumerate}
    
    \item PCA $k$-factor model $Y = {\bf B} X + \zf$:
    \begin{enumerate}
        \item Compute the sample covariance matrix ${\bf S} = \frac{1}{n_I-1}{\bf Y}_I{\bf Y}_I^{\top}$ and its $k$ largest eigenvectors $\hat\beta_1, \hat\beta_2, \cdots, \hat\beta_K$. Set $\hat{{\bf B}} = (\hat\beta_1, \hat\beta_2, \cdots, \hat\beta_K)$;
        \item The recovered out-of-sample latent factor returns $\hat{{\bf X}}_O = (\hat{{\bf B}}^{\top}\hat{{\bf B}})^{-1}\hat{{\bf B}}^{\top}{\bf Y}_O$;
        \item The predicted value is $\hat{{\bf Y}}_O = \hat{{\bf B}}\hat{{\bf X}}_O$.
    \end{enumerate}
    
    \item Spatial interaction model $Y = \rho {\bf W} Y + {\bf B} X + G$:
    \begin{enumerate}
        \item ${\bf W}$ represents the spatial distance of assets, ${\bf W}_{i,i} = 0$ and ${\bf W}_{i,j} = (s_id_{ij})^{-1}$ for $i\neq j$, where $d_{ij}$ is the (geographic or driving) distance\footnote{In our empirical study, the geographical distance between two assets is measured as the spherical distance (in miles) of the earth between the headquarters of the two companies.} between asset $i$ and asset $j$, and $s_i\triangleq\sum_{j}d_{ij}^{-1}$.
        
        \item Obtain the least square estimate for $\rho$ by solving\footnote{
    Given a certain $\rho$ such that $I - \rho{\bf W}$ is invertible, we have $({\bf I}-\rho{\bf W}){\bf Y}_I = {\bf B}{\bf X}_I + \zo$, the least square estimate for ${\bf B}$ is $\hat{{\bf B}} = ({\bf I} - \rho{\bf W}){\bf Y}_I{\bf X}_I^{\top}({\bf X}_I{\bf X}_I^{\top})^{-1}$, and the residual is
    \[
    ({\bf I} - \rho{\bf W}){\bf Y}_I - ({\bf I} - \rho{\bf W}){\bf Y}_I{\bf X}_I^{\top}({\bf X}_I{\bf X}_I^{\top})^{-1}{\bf X}_I
    = ({\bf I} - \rho{\bf W}){\bf Y}_I({\bf I} - {\bf X}_I^{\top}({\bf X}_I{\bf X}_I^{\top})^{-1}{\bf X}_I)
    \]
    The least square estimate for $\rho$ is the estimate that minimize the residual. The set of candidate values is $\mathcal{S} = \{\zeta\leq\kappa\leq\eta: {\bf I} - \kappa{\bf W}\text{ is invertible.}\}$.}:
    \begin{equation*}
        \hat\rho = \argmin_{\kappa\in\mathcal{S}}\quad\big\|({\bf I} - \kappa{\bf W}){\bf Y}_I({\bf I} - {\bf X}_I^{\top}({\bf X}_I{\bf X}_I^{\top})^{-1}{\bf X}_I)\big\|_F^2
    \end{equation*}
    
    \item The predicted value is $\hat{{\bf Y}}_O$ whose $(i,j)$-th element is:
    \begin{equation*}
        (\hat{{\bf Y}}_O)_{i,j} = \hat\rho\sum_{k\neq i}{\bf W}_{i,k}({\bf Y}_O)_{k,j} + \sum_{h = 1}^K\hat{{\bf B}}_{i,h}({\bf X}_O)_{h,j}
    \end{equation*}
    \end{enumerate}

    \item Graphical representation model $Y = {\bf A} Y + \zg$:
    \begin{enumerate}
        \item Given ${\bf Y}_I$, obtain $\hat\bomg$ via Glasso or Concord;
        \item Compute $\hat{{\bf A}} = {\bf I} - \hat{\bf D}\hat\bomg$;
        \item The predicted value is $\hat{{\bf Y}}_O$ whose $(i,j)$-th element is:
        \begin{equation*}
            (\hat{{\bf Y}}_O)_{i,j} = \sum_{k\neq i}\hat{{\bf A}}_{i,k}({\bf Y}_O)_{k,j}
        \end{equation*}
    \end{enumerate}
    
    \item Mixed model $Y = \rho {\bf A} Y + {\bf B} X + G$:
    \begin{enumerate}
        \item Obtain $\hat{{\bf A}}$ the same as in the graphical representation model;
        \item Obtain the least square estimate for $\rho$ by solving\footnote{
        The set of candidate values is $\mathcal{S} = \{\zeta\leq\kappa\leq\eta: {\bf I} - \kappa\hat{{\bf A}}\text{ is invertible.}\}$. In the empirical study, we choose $\zeta = -2, \eta = 4$ such that the optimal value can be covered by the interval.
        }:
        \begin{equation*}
            \hat\rho = \argmin_{\kappa\in\mathcal{S}}\quad\big\|({\bf I} - \kappa\hat{{\bf A}}){\bf Y}_I({\bf I} - {\bf X}_I^{\top}({\bf X}_I{\bf X}_I^{\top})^{-1}{\bf X}_I)\big\|_F^2
        \end{equation*}
        \item The predicted value is $\hat{{\bf Y}}_O$ whose $(i,j)$-th element is:
        \begin{equation*}
            (\hat{{\bf Y}}_O)_{i,j} = \hat\rho\sum_{k\neq i}\hat{{\bf A}}_{i,k}({\bf Y}_O)_{k,j} + \sum_{h = 1}^K\hat{{\bf B}}_{i,h}({\bf X}_O)_{h,j}
        \end{equation*}
    \end{enumerate}
\end{itemize}

\noindent For all the above models, the out-of-sample performance is measured using root mean squared error (RMSE) \cite{Hyndman2006}:

\begin{equation} \label{rmse}
    \text{RMSE} = \sqrt{\frac{1}{pn_O}\|\hat{{\bf Y}}_O - {\bf Y}_O\|_F^2}
\end{equation}

\noindent The relative RMSE (RMSE ratio), expressed as a percentage, is defined as
\begin{equation}\label{rmse-ratio}
    \text{RMSE}(\%) = \frac{\sqrt{\frac{1}{pn_O}\|\hat{\bf Y}_O - {\bf Y}_O\|_F^2}}{\sqrt{\frac{1}{pn_O}\|{\bf Y}_O\|_F^2}}
\end{equation}

\noindent The results for S\&P 500 and DJIA component stocks are shown in Table \ref{tab:rmse1} (More detailed results are shown in the Appendix). This table shows that GRM has a relatively lower RMSE than that of Fama-French and PCA factor models. In particular, the PCA factor model has the smallest RMSE for Dow Jones return data. However, its RMSE increases for S\&P 500 return data. According to \cite{Shen2016}, this is because as the sample sizes $n_I$ and $n_O$ are fixed while the value of $p$ gets larger, the deviation of the estimated eigenvectors from the true eigenvectors gets larger. As a result, the model's performance is deteriorating. In addition, by comparing the out-of-sample performance of the graphical representation model and the mixed model, we find out that in the mixed model the RMSE increases after factors are included. Such an increase of the RMSE indicates that adding factor terms actually does not provide additional benefit to characterize asset returns. The graphical term ${\bf A}$ is enough to capture the variation of asset returns.

\begin{table}[ht!]
    \centering
    \begin{tabular}{c|c|c||c}
\hline
\multicolumn{2}{c|}{}& Financial Crisis & Economic Expansion\\\hline
\multirow{3}{*}{In-sample}
& Start date & 2008-01-01 &2018-01-01\\
& End date & 2008-03-31 & 2018-03-31\\\cline{2-4}
& $n_I$ & 61 & 61 \\\hline
\multirow{3}{*}{Out-of-sample}
& Start date & 2008-04-01 & 2018-04-01\\ 
& End date & 2008-07-01 & 2018-07-01\\\cline{2-4}
& $n_O$ & 65  & 64\\\hline
\multirow{2}{*}{Value of $p$}
& Dow Jones & 28 & 30\\
& S\&P 500 & 447 & 498\\\hline
    \end{tabular}
    \caption{In the empirical study, two types of dataset are used: all available component stock returns of S\&P 500 and DJIA. For each type of dataset, the number of component stocks ($p$), the in-sample and out-of-sample sizes ($n_I$ and $n_O$), and the time periods of the in-sample data and the out-of-sample data are listed.}
    \label{tab:time}
\end{table}

\begin{table}[ht!]
    \centering
    \begin{tabular}{c|c|c|c||c|c}
    \hline\hline
    \multicolumn{6}{c}{\bf S\&P 500 Component Stocks}\\\hline
\multirow{2}{*}{Model} & \multirow{2}{*}{Model Type} & \multicolumn{2}{c||}{Financial Crisis} & \multicolumn{2}{c}{Economic Expansion}\\\cline{3-6}
                                    && RMSE & RMSE(\%) & RMSE & RMSE(\%)\\\hline
\multirow{1}{*}{FamaFrench} 
& 3 Factors                         & 1.98 & 85\% & 1.42 & 88\%\\\hline
\multirow{1}{*}{PCA}
& 3 Factors                         & 1.85 & 80\% & 1.38 & 86\%\\\hline
\multirow{2}{*}{GRM}
& Glasso $\hat\bomg$               & 1.73 & 75\% & 1.26 & 78\%\\
& Concord $\hat\bomg$              & 1.75 & 76\% & 1.27 & 79\%\\\hline
\multirow{2}{*}{Mixed}
& Glasso $\hat\bomg$ + 3 Factors   & 1.74 & 75\% & 1.27 & 79\%\\
& Concord $\hat\bomg$ + 3 Factors  & 1.76 & 76\% & 1.27 & 79\%\\\hline\hline
    \multicolumn{6}{c}{\bf DJIA Component Stocks}\\\hline
\multirow{2}{*}{Model} & \multirow{2}{*}{Model Type} & \multicolumn{2}{c||}{Financial Crisis} & \multicolumn{2}{c}{Economic Expansion}\\\cline{3-6}
                                    && RMSE & RMSE(\%) & RMSE & RMSE(\%)\\\hline
\multirow{1}{*}{FamaFrench} 
& 3 Factors                         & 2.28 & 86\% & 1.31 & 85\%\\\hline
\multirow{1}{*}{PCA}
& 3 Factors                         & 1.78 & 67\% & 1.19 & 78\%\\\hline
\multirow{2}{*}{GRM}
& Glasso $\hat\bomg$               & 2.17 & 82\% & 1.27 & 83\%\\
& Concord $\hat\bomg$              & 2.19 & 82\% & 1.28 & 83\%\\\hline
\multirow{2}{*}{Mixed}
& Glasso $\hat\bomg$ + 3 Factors   & 2.19 & 82\% & 1.30 & 85\%\\
& Concord $\hat\bomg$ + 3 Factors  & 2.23 & 84\% & 1.32 & 86\%\\\hline
    \end{tabular}
   \caption{The comparison of model performance measured by out-of-sample RMSE ($\times10^{-2}$) and the relative RMSE ($\%$), defined in (\ref{rmse}) and (\ref{rmse-ratio}). Results are produced using S\&P 500 and DJIA component stocks historical data from both the financial crisis period and the economic expansion period, defined in Table \ref{tab:time}. The lower the value of RMSE and relative RMSE, the better the model out-of-sample performance. More detailed results can be found in the Appendix of this paper.}
    \label{tab:rmse1}
\end{table}

\subsection{Properties of implied beta}
\label{implied-beta}

In this subsection, we focus on comparing the implied beta $\beta^{imp}_1$ from GRM and the market beta $\beta_1$ from the one-factor model. The market beta describes the movement of a security's returns responding to swings in the market and the concept of market beta is so prevailing in the financial market. For the notion simplicity, we use $\hat{\beta}^{imp}$ and $\hat{\beta}$ to denote the estimates for $\beta^{imp}_1$ and $\beta_1$, respectively. Our empirical study shows that the implied beta is similar to the beta provided by a factor model, as shown in Table \ref{tab:angle}. The estimated market beta, denoted as $\hat\beta$, is computed as the first column of $\hat{{\bf B}}$ from either the Fama-French factor model or the PCA factor model, normalized to have mean 1 (i.e., $\frac{1}{p}\hat\beta^{\top}e = 1$). The estimated implied beta, denoted as $\hat\beta^{imp}$, is computed as the first eigenvector of $\hat\bomg^{-1}$, normalized to have mean 1. The angle between the estimated market beta $\hat\beta$ and the estimated implied beta $\hat\beta^{imp}$ is:

\begin{equation}
    \text{Angle (in degree)} = \arccos\Big(\frac{\hat\beta^{\top}\hat\beta^{imp}}{\|\hat\beta\|_2\|\hat\beta^{imp}\|_2}\Big)\cdot\frac{180}{\pi}
\end{equation}

\begin{table}[ht!]
    \centering
    \begin{tabular}{c|cc|cc||cc|cc}
\hline\hline
\multicolumn{9}{c}{Angles (in degree) between $\hat\beta$ and $\hat{\beta}^{imp}$}\\\hline
\multirow{3}{*}{Models}
& \multicolumn{4}{c||}{\bf S\&P 500} & \multicolumn{4}{c}{\bf Dow Jones}\\\cline{2-9}
&\multicolumn{2}{c|}{\footnotesize Crisis} & \multicolumn{2}{c||}{\footnotesize Expansion} &\multicolumn{2}{c|}{\footnotesize Crisis} & \multicolumn{2}{c}{\footnotesize Expansion}\\\cline{2-9}
& {\footnotesize$\hat\beta(FF)$} & {\footnotesize$\hat\beta(PCA)$} & {\footnotesize$\hat\beta(FF)$} & {\footnotesize$\hat\beta(PCA)$} & {\footnotesize$\hat\beta(FF)$} & {\footnotesize$\hat\beta(PCA)$} & {\footnotesize$\hat\beta(FF)$} & {\footnotesize$\hat\beta(PCA)$} \\\hline
{\footnotesize$\hat\beta(PCA)$} & 5 & - & 1 & - & 15 & - & 3 & -\\
{\footnotesize$\hat{\beta}^{imp}(Glasso)$} & 4 & 4 & 3 & 3 & 10 & 18 & 6 & 7\\
{\footnotesize$\hat{\beta}^{imp}(Concord)$}\footnotemark & 4 & 5 & 6 & 6 & 11 & 18 & 7 & 7\\\hline
    \end{tabular}
    \caption{A comparison of angles (measured in degree of angle) of the market betas estimated from the Fama-French factor model (FF), the PCA factor model (PCA), and the GRM with graphical lasso (Glasso) or Concord (Concord) using the S\&P 500 and Dow Jones component stocks data from the financial crisis period and the economic expansion period. Angles between $\hat\beta$ and $\hat{\beta}^{imp}$ are small especially when the number of stocks becomes large. $\hat\beta$(FF) and $\hat\beta$(PCA) are the betas estimated from the Fama-French factor model and the PCA factor model, respectively. $\hat\beta^{imp}(Glasso)$ and $\hat\beta^{imp}(Concord)$ are implied beta estimated via graphical lasso and Concord, respectively.}
    \label{tab:angle}
\end{table}

\footnotetext{ In the empirical study, we found out that $\hat{\beta}^{imp}$ estimated using Concord contains extreme values, which may be the side-effect of the $\ell_1$ regularization on the concentration matrix. To remedy this, we posed an additional small Frobenius-norm penalty term in the optimization problem in producing the Concord results within 4.2. Since
\begin{equation*}
    \|\bomg\|_F^2 = \tr(\bomg^{\top}\bomg) = \tr\Big(({\bf U}\blam{\bf U}^{\top})^{\top}({\bf U}\blam{\bf U}^{\top})\Big) = \tr({\bf U}\blam^2{\bf U}^{\top}) = \tr(\blam^2) = \sum_{i=1}^p\lambda_i^2
\end{equation*}
\noindent where ${\bf U}\blam{\bf U}^{\top}$ is the spectral decomposition of $\bomg$ and $\lambda_i$ is the $i$-th eigenvalue, thus such a penalty shrinks the eigenvalues and avoids extreme values in the eigenvectors. See more details in \cite{Ali2017}.}

\noindent From Table \ref{tab:angle}, we see that $\hat\beta^{imp}$ has very similar angle results with $\hat\beta$, thus $\hat\beta^{imp}$ is similar to $\hat\beta$ and preserves the properties that $\hat\beta$ has. Besides, we have two additional observations: First, angles from S\&P 500 stocks are relatively smaller than those from Dow Jones stocks. This can be explained by the asymptotic result that as $p$ gets larger, $\beta^{imp}$ and $\beta$ converge to each other, thus the angles among the estimated betas become smaller. Second, for Dow Jones, betas from a crisis period have larger angles than those from an economic expansion period, and all of $\hat\beta(FF), \hat\beta(PCA), \hat\beta^{imp}$ have large angles among each other. This is likely due to the empirical evidence that during the crisis period, the estimated betas become more disperse than those during the expansion period. To obtain a further intuition into these differences, we define $d_v(\hat\beta)$ as a dispersion measurement of $\hat\beta = (\hat\beta_1, \hat\beta_2, \cdots, \hat\beta_p)^{\top}$, where:

\begin{equation}
\label{eq:dispersion}
    d_v(\hat\beta) = \sqrt{\frac{1}{p}\sum_{i=1}^p\Big(\frac{\hat\beta_i}{\bar\beta} - 1\Big)^2}\quad\text{where}\quad
    \bar\beta = \frac{1}{p}\sum_{i=1}^p\hat\beta_i
\end{equation}

\noindent The dispersion measures the beta dispersion around 1, which is the beta of the market portfolio. Empirically, the beta has a larger dispersion during the financial crisis than during the period of economic expansion. As is shown in Table \ref{tab:dispersion}, beta dispersion from a crisis period is larger than that from an expansion period. In particular, the Dow Jones component stocks in the crisis period have a higher variability of dispersion. In this case, the highest dispersion comes from the PCA factor model while the lowest dispersion comes from the Fama-French factor model, and the dispersions coming from the GRM are neither too high nor too low. Such a result raises the possibility that GRM provides a useful estimate for a market beta dispersion.

\begin{table}[ht!]
    \centering
    \begin{tabular}{c|c|c||c|c}
\hline\hline
\multicolumn{5}{c}{Dispersion $d_v(\cdot)$ of estimated betas}\\\hline
\multirow{2}{*}{Dispersion}
& \multicolumn{2}{c||}{\bf S\&P 500} & \multicolumn{2}{c}{\bf Dow Jones}\\\cline{2-5}
& Crisis & Expansion & Crisis & Expansion\\\hline
$d_v(\hat\beta(FF))$ & 0.431 & 0.333 & 0.382 & 0.331\\
$d_v(\hat\beta(PCA))$& 0.476 & 0.337 & 0.574 & 0.358\\
$d_v(\hat\beta^{imp}(Glasso))$& 0.439 & 0.297 & 0.410 & 0.362\\
$d_v(\hat\beta^{imp}(Concord))$& 0.418 & 0.265 & 0.425 & 0.328\\\hline
    \end{tabular}
    \caption{The dispersions, defined in $\req{eq:dispersion}$, of the market betas estimated from four models: the Fama-French factor model, the PCA factor model, the GRM with Glasso, and the GRM with Concord. All models are calibrated using the S\&P 500 and Dow Jones component stock historical data from the financial crisis period and the economic expansion period. The dispersions of the implied betas of the GRM are similar to the dispersions of the market betas estimated from other models. Thus our results indicate that the implied beta preserves the properties of the market beta coming from a factor model.}
    \label{tab:dispersion}
\end{table}

In addition, we explore the recovered market volatility of the graphical representation model, which is highly similar to the market volatility from the factor models. Annualized market volatility for each model is estimated as:

\begin{equation} \label{eq:volatility}
    \begin{split}
        \text{Fama-French factor model}:&\quad \sqrt{\var(({\bf X}_I)_{1\cdot})\cdot252}\times 100\%\\
        \text{PCA factor model}:&\quad \sqrt{\var\Big(\frac{1}{\hat\beta^{\top}\hat\beta}\hat\beta^{\top}\hat{{\bf Y}}_I\Big)\cdot252}\times 100\%\\
        \text{GRM}:&\quad \sqrt{\var\Big(\frac{1}{(\hat\beta^{imp})^{\top}\hat\beta^{imp}}(\hat\beta^{imp})^{\top}\hat{{\bf Y}}_I\Big)\cdot252}\times 100\%
    \end{split}
\end{equation}

\noindent where $({\bf X}_I)_{1\cdot}$ represents the 1st row of ${\bf X}_I$ and 252 is the number of trading days in a year. Table \ref{tab:vol} shows that the annualized market volatility recovered from the graphical representation model is similar to that from a factor model.

\begin{table}[ht!]
    \centering
    \begin{tabular}{c|cc||cc}
\hline\hline
\multicolumn{5}{c}{Annualized volatility of market returns}\\\hline
\multirow{2}{*}{Model}
& \multicolumn{2}{c||}{\bf S\&P 500} & \multicolumn{2}{c}{\bf Dow Jones}\\\cline{2-5}
& Crisis & Expansion & Crisis & Expansion\\\hline
FamaFrench factor model & 25.98\% & 18.08\% & 23.93\% & 18.75\%\\
PCA factor model        & 25.61\% & 18.06\% & 22.71\% & 18.62\%\\
GRM (Glasso)  & 25.92\% & 18.24\% & 23.58\% & 18.48\%\\
GRM (Concord) & 26.07\% & 18.33\% & 23.41\% & 18.65\%\\\hline
    \end{tabular}
    \caption{The recovered annualized market volatility is defined in $\req{eq:volatility}$ and is estimated from four models: the Fama-French factor model, the PCA factor model, the GRM with Glasso, and the GRM with Concord. The models are calibrated using the S\&P 500 and Dow Jones component stock historical data from both the financial crisis period and the economic expansion period. The result shows that the recovered annualized market volatilities from GRM are similar to those from other models, thus implied beta from the GRM preserves the properties of a market beta.}
    \label{tab:vol}
\end{table}

\subsection{Graph visualization}

In this subsection, we explore the graph visualization in the graphical representation model. The graph visualization has a meaningful interpretation for investment decision making. It allows an investor to identify sectors or communities of stocks that are positively or negatively partially correlated, thus provides an intuition for equities selection in portfolio construction.  In the graphical representation model, associations among asset returns can be fully visualized by an undirected graph or network. One may be interested in the network of all assets that is constructed based on the estimated partial correlation. Given the estimated precision matrix $\hat{\bomg}$ that is estimated either from Glasso or Concord, the estimated partial correlation matrix $\hat{{\bf P}} = ((\hat\varrho_{ij}))$ can be given by \cite{Baba2004}:

\begin{equation}
\label{eq:partial correlation matrix}
    \hat{{\bf P}} = -\hat{\bf D}^{1/2}\hat\bomg\hat{\bf D}^{1/2}
    \quad \text{with}\quad \hat{\bf D} = \diag(\hat\bomg)^{-1}
\end{equation}

\noindent We plot the network that is characterized by $\hat{\bf P}$, where a blue edge between the $i$-th  and  the $j$-th  node  represents $\hat\varrho_{ij} > 0$,  and  a  red  edge  represents $\hat\varrho_{ij} < 0$. The full graphs and related details can be found in the Appendix of this paper.

Based on the sparsity patterns of $\hat{\bf P}$ (i.e., the zero/non-zero locations of $\hat{\bf P}$), we conduct community detection using the random walk method proposed by \cite{Pons2005}. The random walk method tries to find densely connected subgraphs via detecting short random walks. We use the \texttt{cluster\_walktrap} function from the \textbf{R} package \texttt{igraph}\footnote{The package and its reference manual can be downloaded from \url{https://cran.r-project.org/web/packages/igraph/index.html}.} We consider the sector category\footnote{We use Global Industry Classification Standard (GICS) sector category. It is an industry taxonomy developed in 1999 by MSCI and Standard \& Poor's and consists of 11 sectors. See \url{https://en.wikipedia.org/wiki/Global_Industry_Classification_Standard} for more details and see \url{https://en.wikipedia.org/wiki/List_of_S\%26P_500_companies} for the sector information of S\&P 500 stocks.} and find out that many communities are dominated by certain sectors, while stocks from some sectors are conglomerated into one community. Interestingly, we discover that stocks coming from different sectors while classified into one community have highly similar business operations. Such a similarity is reflected in the co-movement of their stock prices thus in the data-driven community detection procedure. For example, we use the S\&P 500 data from the expansion period and use Glasso to estimate the precision matrix $\hat\bomg$, and conduct the community detection based on $\hat\bomg$. A sub-graph of Community 4 is shown in Figure \ref{fig:cmm4}.

\begin{figure}[H]
    \centering
    \includegraphics[scale = 0.54]{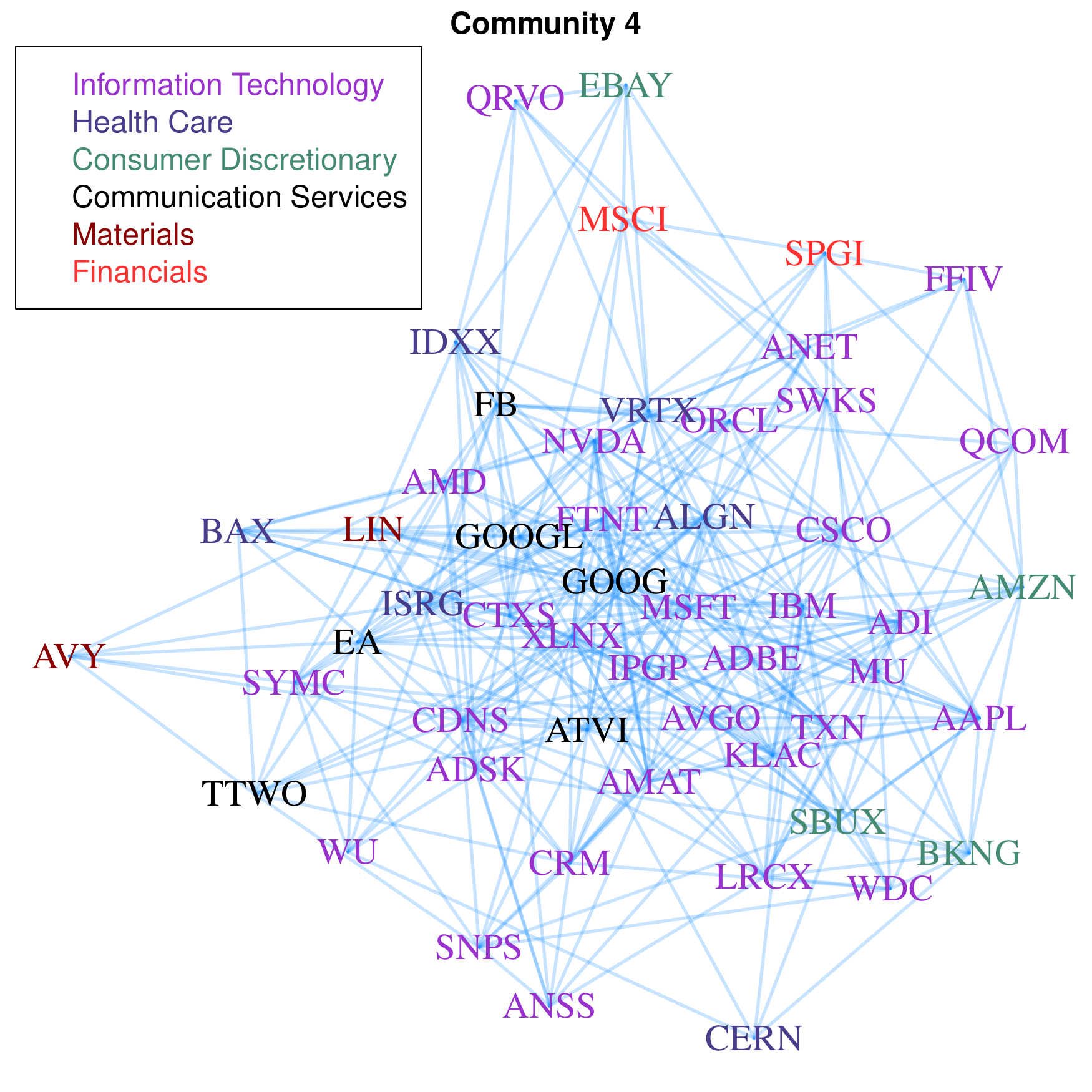}
    \caption{The recovered graph for one community (Community 4) detected from the sparse estimation $\hat\bomg$, with which the community detection is conducted. In the community detection, 11 communities are chosen to be detected (same number of sectors in GICS). The sparse $\hat\bomg$ is estimated using the S\&P 500 Index component stock data from the economic expansion period defined in Table \ref{tab:time}. In this community, the dominant sector is the information technology. Despite there are companies from other sectors, the business of those companies are, to some extend, related to the information technology.}
    \label{fig:cmm4}
\end{figure}

In addition, stocks in a certain sector are split into different communities. For example, we find out that the Communication Services (CS) sector has 9 companies in Community 3 and 6 companies in Community 4. The sub-graphs and a comparison of these 15 companies are shown in Figure \ref{fig:communication}. It seems that within each community, the types of businesses of these companies are highly similar while compared between communities the types of business become more distinct. Such a comparison implies that GICS Sector category follows an industry classification benchmark and is not a data-driven approach to categorize all stocks. Since the community-based category is purely data-driven thus provides a data-driven way to label all stocks. From this aspect, the communities that are detected from the GRM graph provide better insight into investment decision making and portfolio construction.

\begin{figure}[H]
\centering
\begin{subfigure}{0.48\textwidth}
\includegraphics[width=\textwidth]{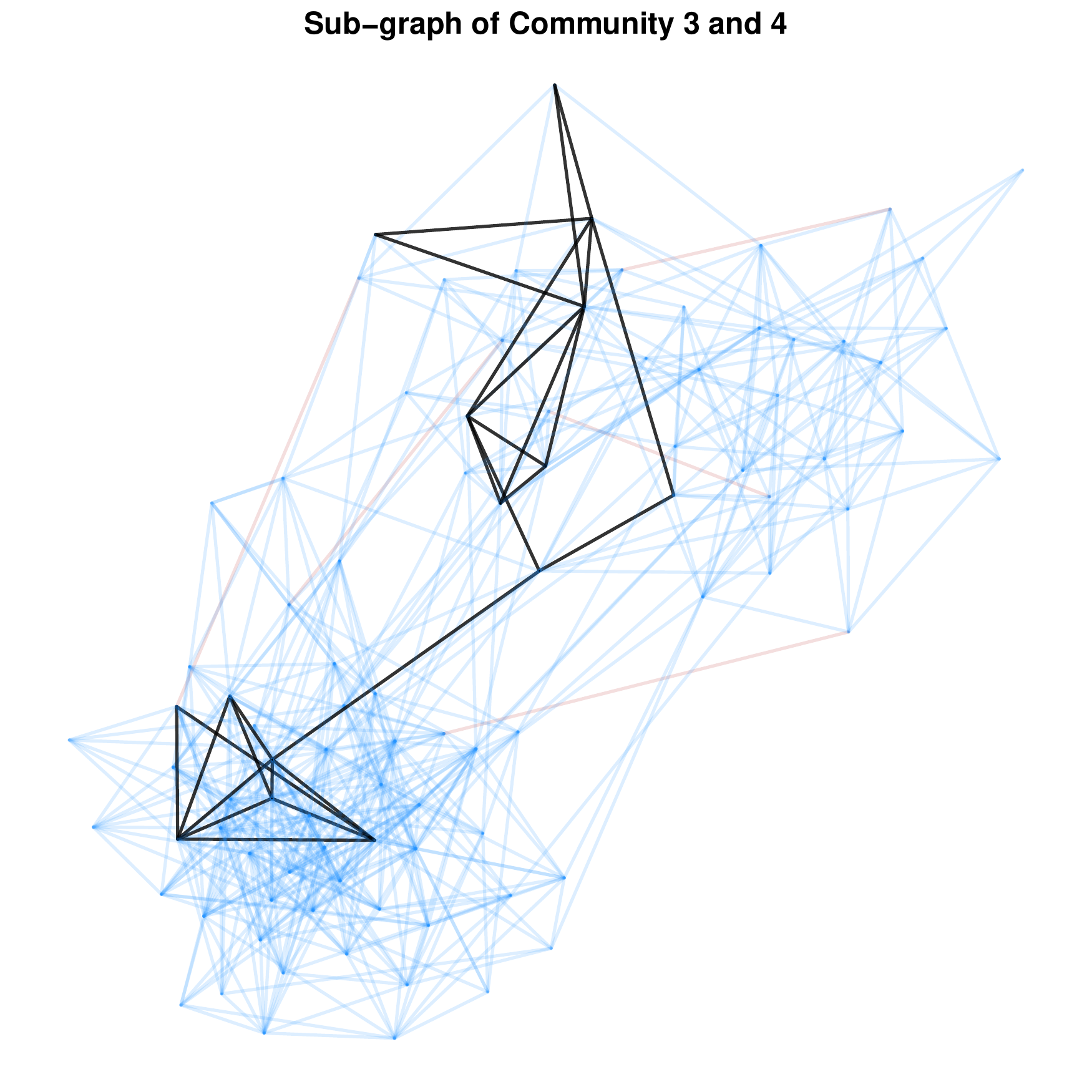}
\end{subfigure}
\begin{subfigure}{0.48\textwidth}
\includegraphics[width=\textwidth]{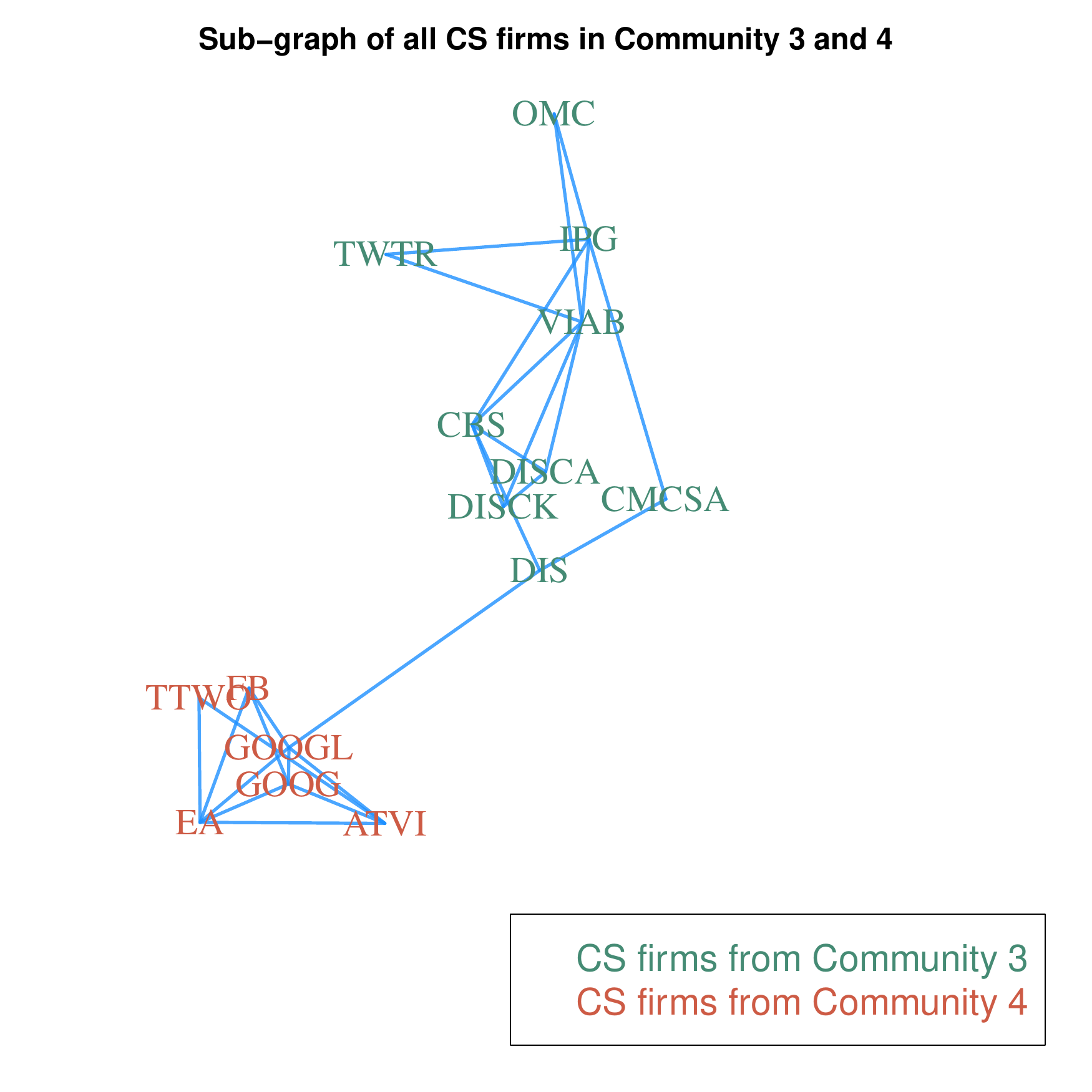}
\end{subfigure}
\caption{
 Recovered graph (above) of two communities (Community 3 and 4) detected from the sparse graphical model using S\&P 500 component stock data from the economic expansion period, and the corresponding graph of all communication service (CS) companies in the two communities (below). CS firms in one community are mostly movies related companies while in another community are broadly online and gaming related companies. Full names of CS firms are listed in Table \ref{tab:aa1}.}
\label{fig:communication}
\end{figure}

\begin{table}[ht!]
    \centering
\begin{tabular}{rl|rl|rl}
    \hline
    \multicolumn{6}{c}{CS companies in Community 3}  \\\hline
    {\bf CBS}: & CBS Corp.                &  {\bf CMCSA}: & Comcast Corp.          & {\bf DIS}: & Disney Company.\\
    {\bf DISCA}: & Discovery, class A.&  {\bf DISCK}:& Discovery, class C. &  {\bf IPG}: & Interpublic Group.\\
    {\bf OMC}: & Omnicom Group.           &  {\bf TWTR}: & Twitter, Inc.           & {\bf VIAB}: & Viacom Inc. \\\hline
    \multicolumn{6}{c}{CS companies in Community 4}  \\\hline
    {\bf ATVI}: & Activision Blizzard.    &  {\bf EA}: & Electronic Arts.         & {\bf FB}: & Facebook, Inc. \\
    {\bf GOOG}: & Alphabet, class C.  & {\bf GOOGL}: & Alphabet, class A. &  {\bf TTWO}: & Take-Two Interactive.\\\hline
    \end{tabular}
    \caption{Stock symbols and corresponding companies of communication service sector in Community 3 and Community 4.}
    \label{tab:aa1}
\end{table}

\section{Summary and future work}
\label{summary}

This paper proposes the endogenous representation model that provides an endogenous perspective on the observed variance of equity markets. Compared with traditional factor models such as Fama-French or PCA factor models, GRM offers a different perspective on the interpretation of asset returns and the inter-dependencies among them, and relates the interpretation to the popularly adopted graphical models, which allow to visualize the dependency structure among asset returns. In addition to the different perspective from GRM, empirical evidence also shows that GRM has similar out-of-sample model performance with a factor model, and preserves desirable properties of the market beta from a factor model. 

One potential future work could be to study the relationship between the Markov diffusion process and the graphical model. On the one hand, the graphical model in the Gaussian setup is a Markov random field that essentially corresponds to a diffusion process \cite{Kadanoff2001}. On the other hand, the adjacency matrix of the term ${\bf A}$ defines a Laplacian matrix \cite{Bapat2014}, which is the generator of a Markov diffusion process \cite{Chung1997}. In our setting, such a diffusion process is modeling the spread of the price information of stocks and contributes to the price formation (i.e., the variations of asset returns). What is the relationship between the implied market beta and the resolvent of a Markov process? Such a question, together with the above argument, could be one interesting research topic that extends the work of this paper.

\appendix
\section{Appendices}
\label{chapter2-appendix}

\subsection{Proof of Lemma \ref{lem:uncorrelated}}

\begin{proof}
Without loss of generality, we assume $\mathbb{E}(Y)=\mathbf{0}$. Given $\cov(Y_j, E_i) = 0$, we have $\mathbb{E}(E_iY_{-i}^{\top}) = 0$ where $Y_{-i} = (Y_1, \cdots, Y_{i-1}, Y_{i+1}, \cdots, Y_p)^{\top}$ for $i = 1, 2, \cdots, p$, thus $\mathbb{E}(EY^{\top}) = {\bf D}$ for some diagonal matrix ${\bf D}$. From (\ref{eq:endogenous}),
\begin{equation}
    \mathbb{E}(YY^{\top}) = \mathbb{E}({\bf A} YY^{\top}) + \mathbb{E}(EY^{\top})
\end{equation}
Thus
\begin{equation}
    \bsig = {\bf A}\bsig + {\bf D}
\end{equation}
So 
\begin{equation}
    ({\bf I} - {\bf A}) = {\bf D}\bsig^{-1} = {\bf D}\bomg
\end{equation}
Since $a_{ii} = {\bf A}_{ii} = 1-{\bf D}_{ii}\bomg = 0$, ${\bf D}_{ii} = 1/\omega_{ii}$, thus we have the results (\ref{eq:results}).
\end{proof}

\subsection{Proof of Theorem \ref{lem:mini}}

\begin{proof}
Without loss of generality, we assume $\mathbb{E}(Y) = \mathbf{0}$. Let ${\bf M} = (({\bf M}_{ij}))\in\scrA$ and $Y_i$ be the $i$-th element of $Y$. Denote the $i$-th row of ${\bf M}$ as ${\bf M}_{i\cdot}$. Also, for an arbitrary $p\times p$ matrix ${\bf X} = (({\bf X}_{ij}))$, denote ${\bf X}_{i, -i} = ({\bf X}_{i1}, {\bf X}_{i2}, \cdots, {\bf X}_{i,i-1}, {\bf X}_{i, i+1}, \cdots, {\bf X}_{ip})$, i.e., ${\bf X}_{i,-i}$ is the $i$-th row of ${\bf X}$ without the $i$-th element ${\bf X}_{ii}$, and ${\bf X}_{-i, i} = {\bf X}_{i,-i}^{\top}$, and ${\bf X}_{-i, -i}$ be the matrix after the $i$-th column and the $i$-th row of ${\bf X}$ been removed. We have

\begin{equation}
    \mathbb{E}(|Y - {\bf M} Y|^2) = \mathbb{E}\Big(\sum_{i=1}^p(Y_i-{\bf M}_{i\cdot}Y)^2\Big)
    = \sum_{i=1}^p\mathbb{E}[(Y_i - {\bf M}_{i\cdot}Y)^2]
\end{equation}

\noindent Since ${\bf M}_{i\cdot}$ and ${\bf M}_{j\cdot}$ contain no common variables for any $i\neq j$, to minimize $\req{eq:resvar}$ is equivalent to minimize each $\mathbb{E}[(Y_i - {\bf M}_{i\cdot}Y)^2]$ for $i = 1, 2, \cdots, p$. For a fixed $i$, let 

\begin{equation}
    \tilde{\bsig} = \begin{pmatrix}
    \bsig_{ii} & \bsig_{i,-i}\\
    \bsig_{-i,i} & \bsig_{-i,-i}
    \end{pmatrix}
    \quad\text{and}\quad
    \tilde{\bomg} = \begin{pmatrix}
    \bomg_{ii} & \bomg_{i,-i}\\
    \bomg_{-i,i} & \bomg_{-i,-i}
    \end{pmatrix}
\end{equation}

\noindent Since $\bomg\bsig = {\bf I}$, it is easy to show that $\tilde{\bomg}\tilde{\bsig} = {\bf I}$, thus
\begin{equation}
\label{use1123}
    \bomg_{ii}\bsig_{i,-i} + \bomg_{i,-i}\bsig_{-i,-i} = \mathbf{0}
\end{equation}
\noindent We have
\begin{equation}
\begin{split}
\mathbb{E}[(Y_i - {\bf M}_{i\cdot}Y)^2] &= \mathbb{E}\Big[(Y_i - \sum_{j\neq i}{\bf M}_{ij}Y_j)^2\Big]\\
 &= \mathbb{E}(Y_i^2) - 2\mathbb{E}\Big(\sum_{j\neq i}{\bf M}_{ij}Y_iY_j\Big) + \mathbb{E}[\Big(\sum_{j\neq i}{\bf M}_{ij}Y_j\Big)^2]\\
 &= \bsig_{ii} - 2{\bf M}_{i,-i}\bsig_{-i,i} +{\bf M}_{i,-i}\bsig_{-i,-i}{\bf M}_{i,-i}^{\top} \triangleq f({\bf M}_{i,-i}) 
 \end{split}
 \end{equation}
 \noindent Consider the first-order condition
 
 \begin{equation}
 \label{firstordercondition}
 \frac{df({\bf M}_{i,-i})}{d{\bf M}_{i,-i}} = -2\bsig_{-i,i} + 2\bsig_{-i,-i}{\bf M}_{i,-i}^{\top} = \mathbf{0}
\end{equation}

\noindent Solve $\req{firstordercondition}$ and follow the result $\req{use1123}$ we have

\begin{equation}
    {\bf M}_{i,-i} = \bsig_{i,-i}\bsig_{-i,-i}^{-1} = -\bomg_{ii}^{-1}\bomg_{i,-i}
\end{equation}

\noindent Since ${\bf M}\in\scrA$, ${\bf M}_{ii} = 0$ and 

\begin{equation}
    {\bf M}_{i,\cdot} = e_i -\bomg_{ii}^{-1}\bomg_{i,\cdot}
\end{equation}
\noindent where $e_i$ is a $p$ dimensional row vector with only the $i$-th element is 1 and all the remaining elements are 0. Thus

\begin{equation}
    {\bf M} = \begin{pmatrix}
    {\bf M}_{1\cdot}\\
    {\bf M}_{2\cdot}\\
    \cdots\\
    {\bf M}_{p\cdot}
    \end{pmatrix}
    = {\bf I} - {\bf D}\bomg \triangleq {\bf A}
    \quad\text{with}\quad
    {\bf D} = \diag(\bomg)^{-1}
\end{equation}

\end{proof}

\subsection{Proof of Theorem \ref{thm:partition}}

\begin{proof}
Given $Y_{\mathbb{I}|\mathbb{J}} \triangleq Y_{\mathbb{I}} - \mathcal{L} (Y_{\mathbb{J}})$, where $\mathcal{L} (Y_{\mathbb{J}}) = \bsig_{\mathbb{IJ}} \bsig_{\mathbb{JJ}}^{-1} Y_{\mathbb{J}}$ and 
\begin{equation}
    \bomg = \bsig^{-1} = \begin{pmatrix}
    \bsig_{\mathbb{I}\mathbb{I}} & \bsig_{\mathbb{I}\mathbb{J}}\\
    \bsig_{\mathbb{J}\mathbb{I}} & \bsig_{\mathbb{J}\mathbb{J}}
    \end{pmatrix}^{-1}
    =\begin{pmatrix}
    \bomg_{\mathbb{I}\mathbb{I}} & \bomg_{\mathbb{I}\mathbb{J}}\\
    \bomg_{\mathbb{J}\mathbb{I}} & \bomg_{\mathbb{J}\mathbb{J}}
    \end{pmatrix}
\end{equation}
It is easy to obtain $\bomg_{\mathbb{I}\mathbb{I}}^{-1} = \bsig_{\mathbb{I}\mathbb{I}} - \bsig_{\mathbb{I}\mathbb{J}}\bsig_{\mathbb{J}\mathbb{J}}^{-1}\bsig_{\mathbb{J}\mathbb{I}}$ and the variance-covariance matrix of the random vector $Y_{\mathbb{I}|\mathbb{J}}$ as
\begin{equation}
\begin{split}
    \Var(Y_{\mathbb{I}|\mathbb{J}}) &= \Cov(Y_{\mathbb{I}} - \mathcal{L} (Y_{\mathbb{J}}),Y_{\mathbb{I}} - \mathcal{L} (Y_{\mathbb{J}}))\\
    &= \bsig_{\mathbb{I}\mathbb{I}} - \bsig_{\mathbb{I}\mathbb{J}}\bsig_{\mathbb{J}\mathbb{J}}^{-1}\bsig_{\mathbb{J}\mathbb{I}}=\bomg_{\mathbb{I}\mathbb{I}}^{-1}
    \end{split}
\end{equation}
Right multiply $Y_{\mathbb{I}|\mathbb{J}}^{\top}$ and take expectation on both sides of Equation (\ref{eq:38}), we have $\bomg_{\mathbb{I}\mathbb{I}}^{-1} = {\bf A}_{\mathbb{I}|\mathbb{J}}\bomg_{\mathbb{I}\mathbb{I}}^{-1} + {\bf D}_{\mathbb{I}|\mathbb{J}}$ where ${\bf D}_{\mathbb{I}|\mathbb{J}} = \mathbb{E}(E_{\mathbb{I}|\mathbb{J}}Y_{\mathbb{I}|\mathbb{J}}^{\top})$. Thus
\begin{equation}
    {\bf I} - {\bf A}_{\mathbb{I}|\mathbb{J}} = {\bf D}_{\mathbb{I}|\mathbb{J}}\bomg_{\mathbb{I}\mathbb{I}}
\end{equation}
Given $({\bf A}_{\mathbb{I}|\mathbb{J}})_{ii} = 0$ and Equation (\ref{eq:39}) we have ${\bf D}_{\mathbb{I}|\mathbb{J}}$ is diagonal and $({\bf D}_{\mathbb{I}|\mathbb{J}})_{ii} = \frac{1}{(\bomg_{\mathbb{I}\mathbb{I}})_{ii}}$, thus ${\bf D}_{\mathbb{I}|\mathbb{J}} = \diag(\bomg_{\mathbb{I}\mathbb{I}})^{-1}$ and
\begin{equation}
\begin{split}
    {\bf A}_{\mathbb{I}|\mathbb{J}} &= {\bf I} - {\bf D}_{\mathbb{I}|\mathbb{J}}\bomg_{\mathbb{I}\mathbb{I}}
    = {\bf I} - \diag(\bomg_{\mathbb{I}\mathbb{I}})^{-1}\bomg_{\mathbb{I}\mathbb{I}}\\
    &= ({\bf I} - \diag(\bomg)^{-1}\bomg)_{\mathbb{I}\mathbb{I}}
    = {\bf A}_{\mathbb{I}\mathbb{I}}
    \end{split}
\end{equation}
where ${\bf A} = {\bf I} - {\bf D}\bomg, {\bf D}=\diag(\bomg)^{-1}$.
\end{proof}

\subsection{Proportion of variance explained for PCA-factor model and GRM}

This study is to examine the out-of-sample performance for a factor model and the GRM in terms of how much variation is explained over a longer period of time. During the time horizon under study, both models are re-calibrated periodically. Specifically, we compare the out-of-sample performance between the PCA 5-factor model and the GRM over the S\&P 500 component stocks. The chosen time horizon ranges from 2000 to 2018, during which both models are re-calibrated every three months. At each re-calibration time point, the model is re-calibrated using the historical asset returns within three months before the time point. Then the calibrated model is applied to the next three-month period going forward. The out-of-sample performance is measured by the $R^2$ score (the coefficient of determinant). $R^2$ score represents the proportion of variance of asset returns that has been explained by the models. In our study, we use the definition of $R^2$ to evaluate the performance for each asset $i$ as follows \footnotetext{See \url{https://scikit-learn.org/stable/modules/model_evaluation.html\#r2-score}}: 
\begin{equation}
    R^2({\bf Y}_O^{(i)}, \hat{\bf Y}_O^{(i)}) = 1 - \frac{\sum_{j=1}^{n_O}(({\bf Y}_O)_{i,j}-(\hat{\bf Y}_O)_{i,j})^2}{\sum_{j=1}^{n_O}(({\bf Y}_O)_{i,j} - \bar Y_O^{(i)})^2}
\end{equation}
where $\bar Y_O^{(i)}$ denotes the sample mean of the $i$-th row of ${\bf Y}_O$. We evaluate the proportion of variance explained of the models by computing the mean score of $R^2({\bf Y}_O^{(i)}, \hat{\bf Y}_O^{(i)})$, i.e.:
\begin{equation}
\label{eq:rs}
    R^2 = \frac{1}{p}\sum_{i=1}^pR^2({\bf Y}_O^{(i)}, \hat{\bf Y}_O^{(i)})
\end{equation}
\noindent Figure \ref{fig:pve} illustrates the comparison of the proportions of variance explained for PCA 5-factor model and GRM (using either Glasso or Concord). The comparison indicates that, empirically, GRM model preserves a similar proportion of variance explained with that of a factor model.

\subsection{Detailed out-of-sample performance} \label{oos}

We denote $\kappa$ as the number of free parameters in a model. We specify $\kappa$ as follows:

\begin{enumerate}
    \item {\bf Fama-French $k$-factor model}: Parameters of a Fama-French $k$-factor model includes the elements in the matrix of factor exposure ${\bf B}$, thus $\kappa = pk$.
    
    \item {\bf PCA $k$-factor model}: Parameters include elements of the matrix for ${\bf B}$, thus $\kappa = pk$.
    
    \item {\bf Mixed model with spatial interactions and $k$ factors}: Parameters include the coefficient $\rho$ for the spatial interactions and the factor exposures in the matrix ${\bf B}$, thus $\kappa = 1 + pk$.
    
    \item {\bf Mixed model with graphical interactions and $k$ factors}: Since coefficients of graphical interactions are estimated via the precision matrix, parameters include the coefficient $\rho$ and the factor exposures in ${\bf B}$. Once the precision matrix and the factor exposures are given, there is no free parameters in the covariance matrix of the residual $G$, thus $\kappa = 1 + pk + \frac{1}{2}p(p+1) - \frac{1}{2}g(\hat\bomg)$ where $g(\hat\bomg)$ denotes the total number of zero elements in $\hat\bomg$.
    
    \item {\bf GRM}: Since GRM is fully characterized by the precision matrix $\bomg$, we have $\kappa = \frac{1}{2}p(p+1) - \frac{1}{2}g(\hat\bomg)$.
\end{enumerate}

Following the above definition of $\kappa$, we use the out-of-sample root mean squared error (RMSE) and Bayesian Information Criterion (BIC, proposed by \cite{Schwarz1978}) as two additional measurements for model performance. The BIC proposed by \cite{Peng2009} is:

\begin{equation} \label{eq:bic}
    BIC = n_O\sum_{i = 1}^p\log(RSS_i) + \kappa\times\log(n_O)
\end{equation}

\noindent where

\begin{equation}
    RSS_i = \frac{1}{n_O}\sum_{j = 1}^{n_O}\Big(({\bf Y}_{O})_{i,j} - (\hat{{\bf Y}}_O)_{i,j}\Big)^2
\end{equation}

All out-of-sample performance results are shown in Table \ref{tab:performance}. The lower the values of BIC and RMSE, the more robust the model would be. The out-of-sample RMSE indicates that a sparse ${\bf A}$ is sufficient to model asset returns and adding factors is not beneficial. The out-of-sample BIC implies that despite the better goodness-of-fit of GRM, under our dataset, the model requires more number of parameters.

\begin{table}[ht!]
    \centering
    \begin{tabular}{c|c|c|c|c|c||c|c|c|c}
    \hline\hline
    \multicolumn{9}{c}{\bf S\&P 500 Component Stocks}\\\hline
\multirow{2}{*}{Model} & \multirow{2}{*}{Model Type} & \multicolumn{4}{c||}{Financial Crisis} & \multicolumn{4}{c}{Economic Expansion}\\\cline{3-10}
                                    && No.P   & BIC     & {\footnotesize RMSE} & {\footnotesize RMSE(\%)} & No.P  & BIC & {\footnotesize RMSE} & {\footnotesize RMSE(\%)}\\\hline
\multirow{2}{*}{\footnotesize FamaFrench} 
& 3 Factors                         &  1341   & -23.28   & 1.98 & 85\% & 1494 & -27.50 & 1.42 & 88\%\\
& 5 Factors                         &  2235   & -22.92   & 1.96 & 85\% & 2490 &-27.10  & 1.42 & 88\%\\\hline
\multirow{2}{*}{PCA}
& 3 Factors                         &  1341   & -23.61   & 1.85 & 80\% & 1494 &-27.75 & 1.38 & 86\%\\
& 5 Factors                         &  2235   & -23.32   & 1.80 & 78\% & 2490 &-27.48 & 1.34 & 83\%\\\hline
\multirow{3}{*}{Mixed}
& {\footnotesize Spatial ${\bf W}$ + 3 Factors} & 1342     & -23.31 & 1.96 & 85\% & 1495 & -27.51 & 1.41 & 87\%\\
& {\footnotesize Glasso $\hat{{\bf A}}$ + 3 Factors} & 8065     & -21.22 & 1.74 & 75\% & 9191 & -25.42 & 1.27 & 79\%\\
& {\footnotesize Concord $\hat{{\bf A}}$ + 3 Factors} & 5101     & -22.48 & 1.76 & 76\% & 6373 & -26.62 & 1.27 & 79\%\\\hline
\multirow{2}{*}{GRM}
& Glasso $\hat{{\bf A}}$               & 6723      & -21.81 & 1.73 & 75\% & 7696 & -26.10  & 1.26 & 78\%\\
& Concord $\hat{{\bf A}}$              & 3759      & -22.98 & 1.75 & 76\% & 4878 & -27.27  & 1.27 & 79\%\\\hline

    \hline\hline
    \multicolumn{9}{c}{\bf Dow Jones Component Stocks}\\\hline
\multirow{2}{*}{Model} & \multirow{2}{*}{Model Type} & \multicolumn{4}{c||}{Financial Crisis} & \multicolumn{4}{c}{Economic Expansion}\\\cline{3-10}
                                && No.P & BIC    & {\footnotesize RMSE} & {\footnotesize RMSE(\%)} & No.P  & BIC & {\footnotesize RMSE} & {\footnotesize RMSE(\%)}\\\hline
\multirow{2}{*}{\footnotesize FamaFrench}
& 3 Factors                     & 84   & -1.46 &  2.28 & 86\%  & 90 & -1.67 & 1.31 & 85\%\\
& 5 Factors                     & 140  & -1.44 &  2.22 & 84\% & 150 & -1.64 & 1.42 & 93\%\\\hline
\multirow{2}{*}{PCA}
& 3 Factors                     & 84  & -1.49 &  1.78 & 67\% & 90 & -1.69 & 1.19 & 78\%\\
& 5 Factors                     & 140 & -1.50 &  1.60 & 60\% & 150 & -1.71 & 1.09 & 71\%\\\hline
\multirow{3}{*}{Mixed}
& {\footnotesize Spatial ${\bf W}$ + 3 Factors} & 85    &-1.46 & 2.27 & 85\% & 91  & -1.67 & 1.31 & 85\%\\
& {\footnotesize Glasso $\hat{{\bf A}}$ + 3 Factors} & 272 &-1.37 & 2.19 & 82\% & 316 & -1.58 & 1.30 & 85\%\\
& {\footnotesize Concord $\hat{{\bf A}}$ + 3 Factors} & 235& -1.38 & 2.23 & 84\% & 267 & -1.59 & 1.32 & 86\%\\\hline
\multirow{2}{*}{GRM}
& Glasso $\hat{{\bf A}}$           & 187   &-1.42 & 2.17 & 82\% & 225 & -1.63 & 1.27 & 83\%\\
& Concord $\hat{{\bf A}}$          & 150   &-1.42 & 2.19 & 82\% & 176 & -1.64 & 1.28 & 83\%\\\hline
    \end{tabular}
    \caption{The out-of-sample RMSE ($\times 10^{-2}$), BIC ($\times 10^4$, defined in $\req{eq:bic}$), and the ratios of RMSE over the total returns variation (RMSE(\%), defined in \req{rmse-ratio}) of four types of model. All empirical results are produced using Dow Jones and S\&P 500 component stock historical data from both the financial crisis period and the economic expansion period. No.P denotes the number of free parameters. The lower the values of BIC, RMSE, and the RMSE(\%), the better the model out-of-sample performance.}
    \label{tab:performance}
\end{table}

\subsection{The implications of implied beta}

The implied beta from the GRM has two implications:

\begin{enumerate}
    \item \textit{Elements of an implied beta are almost positive}. Given the lemma\footnote{See Lemma 3.5.2 from Daniel A. Spielman \textit{Spectral Graph Theory} lecture notes, Lecture 3, which is downloadable from \url{https://www.cs.yale.edu/homes/spielman/561/2012/lect03-12.pdf}.}
    \begin{lemma}
    Let $\mathcal{G}$ be a connected weighted graph (with non-negative edge weights), let ${\bf A}^{adj}$ be its adjacency matrix, and assume that some non-negative vector $\phi$ is an eigenvector of ${\bf A}^{adj}$. Then, $\phi$ is strictly positive.
    \end{lemma}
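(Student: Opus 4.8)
The plan is to argue by contradiction, exploiting connectivity in the manner of the Perron--Frobenius theorem. Write the eigen-relation as ${\bf A}^{adj}\phi = \lambda\phi$, and recall that ${\bf A}^{adj}$ is symmetric with non-negative entries (the edge weights), while $\phi \geq 0$ is nonzero. I would partition the vertex set into the support $S = \{i : \phi_i > 0\}$ and the zero set $Z = \{i : \phi_i = 0\}$. Since $\phi$ is a nonzero eigenvector, $S \neq \emptyset$, and the goal is to show $Z = \emptyset$. Note that the sign or value of $\lambda$ never enters the argument, so no preliminary analysis of the eigenvalue is needed.

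Suppose instead that $Z \neq \emptyset$, so that $S$ is a nonempty proper subset of the vertices. The key structural input is that $\mathcal{G}$ is connected: any two vertices are joined by a path of positive-weight edges, so no nonempty proper vertex subset can be separated from its complement. Hence there must exist $i \in Z$ and $j \in S$ with a positive edge weight ${\bf A}^{adj}_{ij} > 0$. Converting connectivity into the existence of such a crossing edge is the one point that requires care; it is precisely the statement that a connected graph admits no nontrivial vertex subset with no edges leaving it, and I would phrase it as: if every edge incident to $S$ stayed inside $S$, then $S$ and $Z$ would be disconnected components, contradicting connectivity.

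Finally I would evaluate the eigen-relation in coordinate $i \in Z$. The left-hand side is $\lambda\phi_i = 0$ because $\phi_i = 0$. The right-hand side is $({\bf A}^{adj}\phi)_i = \sum_k {\bf A}^{adj}_{ik}\phi_k = \sum_{k \in S} {\bf A}^{adj}_{ik}\phi_k$, since every term with $k \in Z$ vanishes. All summands are non-negative, and the single term $k = j$ satisfies ${\bf A}^{adj}_{ij}\phi_j > 0$, so the right-hand side is strictly positive. This contradicts the left-hand side being zero. Therefore $Z$ must be empty and $\phi$ is strictly positive, as claimed. The argument is essentially a one-line contradiction once the crossing edge is produced; I expect the only genuine subtlety to be stating the connectivity-to-crossing-edge implication cleanly, which is why I would isolate it as the central step.
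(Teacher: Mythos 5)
Your proof is correct. The paper does not actually supply its own proof of this lemma---it quotes it from Spielman's \emph{Spectral Graph Theory} lecture notes (Lemma 3.5.2)---and your argument is exactly the standard one given there: locate a zero entry, use connectivity to produce a positive-weight edge from the zero set to the support, and evaluate the eigen-relation ${\bf A}^{adj}\phi = \lambda\phi$ at the zero vertex to get $0 = \lambda\phi_i = ({\bf A}^{adj}\phi)_i > 0$, a contradiction in which, as you rightly note, the sign of $\lambda$ plays no role.
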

    
    Despite that in GRM the matrix ${\bf A}$ is not an adjacency matrix, its sparsity (the zero/non-zeros) pattern reveals the information of its corresponding adjacency matrix. As an approximation, the lemma entails one of the properties of the implied beta -- the majority elements of implied beta would be strictly positive. This is consistent with the empirical observation that the movement of an individual stock return follows the movement of the market.
    
    \item \textit{Elements of an implied beta fluctuate around 1}. This can be explained by the fact that an implied beta is the first eigenvector of $\bomg^{-1}$, which can be approximated by the first eigenvector of the inverse of the Laplacian matrix\footnote{The Laplacian matrix is defined\cite{Biggs1993} as ${\bf L} = {\bf D}^{deg} - {\bf A}^{adj}$ where ${\bf A}^{adj}$ is the adjacency matrix of the graph corresponding to $\bomg$ and ${\bf D}^{deg}$ is the degree matrix.} ${\bf L}$ of $\bomg$. The first eigenvector of the inverse of ${\bf L}$ is the vector of all ones $(1, 1, \cdots, 1)^{\top}$\cite{Biggs1993}. Thus, the implied beta can be approximated by the vector of all ones, and elements of implied beta should fluctuate around 1.
\end{enumerate}

\subsection{Graph visualization for asset returns}

We plot the graph for the component stocks of the S\&P 500 Index using the dataset from the expansion period. In the graph, each node represents a stock and each edge represents the element in the partial correlation matrix $\hat{\bf P}$ that is computed by $\req{eq:partial correlation matrix}$ where $\hat\bomg$ is estimated via either Glasso or Concord. Also, for the sake of comparison, we construct the graph for the matrix computed from a hard-thresholding PCA procedure. In this procedure:

\begin{enumerate}
    \item Compute the sample covariance matrix ${\bf S}$ of ${\bf Y}_I$ and its $k$ largest eigenvalues $\lambda_1,\cdots, \lambda_k$ and the corresponding eigenvectors $\beta_1, \cdots, \beta_k$. 
    
    \item Denote $\hat{\bf B} = (\beta_1, \cdots, \beta_k)$ and $\hat{\blam} = \diag(\lambda_1, \cdots, \lambda_k)$, compute $\hat\bdel = \diag({\bf S} - \hat{\bf B}\hat{\blam}\hat{\bf B}^{\top})$, then the PCA-based estimation for the precision matrix is $\hat\bomg = (\hat{\bf B}\hat{\blam}\hat{\bf B}^{\top} + \hat\bdel)^{-1}$.
    
    \item Based on $\req{eq:partial correlation matrix}$, compute the PCA-based partial correlation matrix $\hat{\bf P} = ((\hat\varrho_{ij}))$, take a threshold level $0<\gamma< 1$ and apply the hard-thresholding operation on $\hat{\bf P}$, i.e. 
    \begin{equation}
        \tilde{\bf P} = ((\tilde \varrho_{ij}))\quad\text{where}\quad
        \tilde \varrho_{ij} = 
        \begin{cases}
        \hat\varrho_{ij},&\text{ if $|\hat\varrho_{ij}| > \gamma$}\\
        0, &\text{ if $|\hat\varrho_{ij}|\leq\gamma$}
        \end{cases}
    \end{equation}
\end{enumerate}

In our empirical study, $k = 3$ and $\gamma$ is chose via the bisection algorithm such that $\tilde{\bf P}$ has the same number of non-zero elements as that of $\hat{\bf P}$ computed from Concord method. The graphs are shown in Figure \ref{fig:sp500}.

\begin{figure}[H]
\centering
  \includegraphics[scale = 0.42]{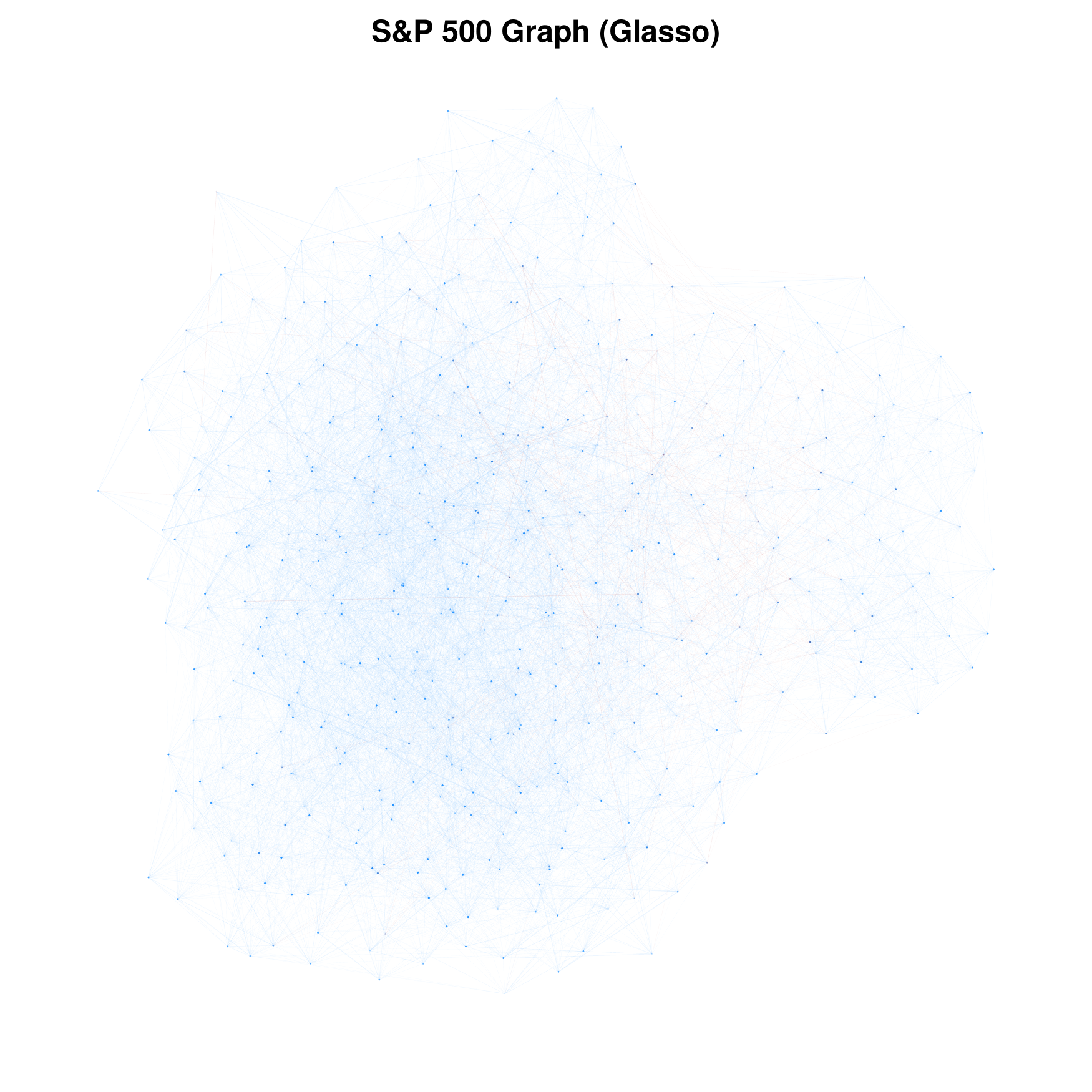}
  \includegraphics[scale = 0.42]{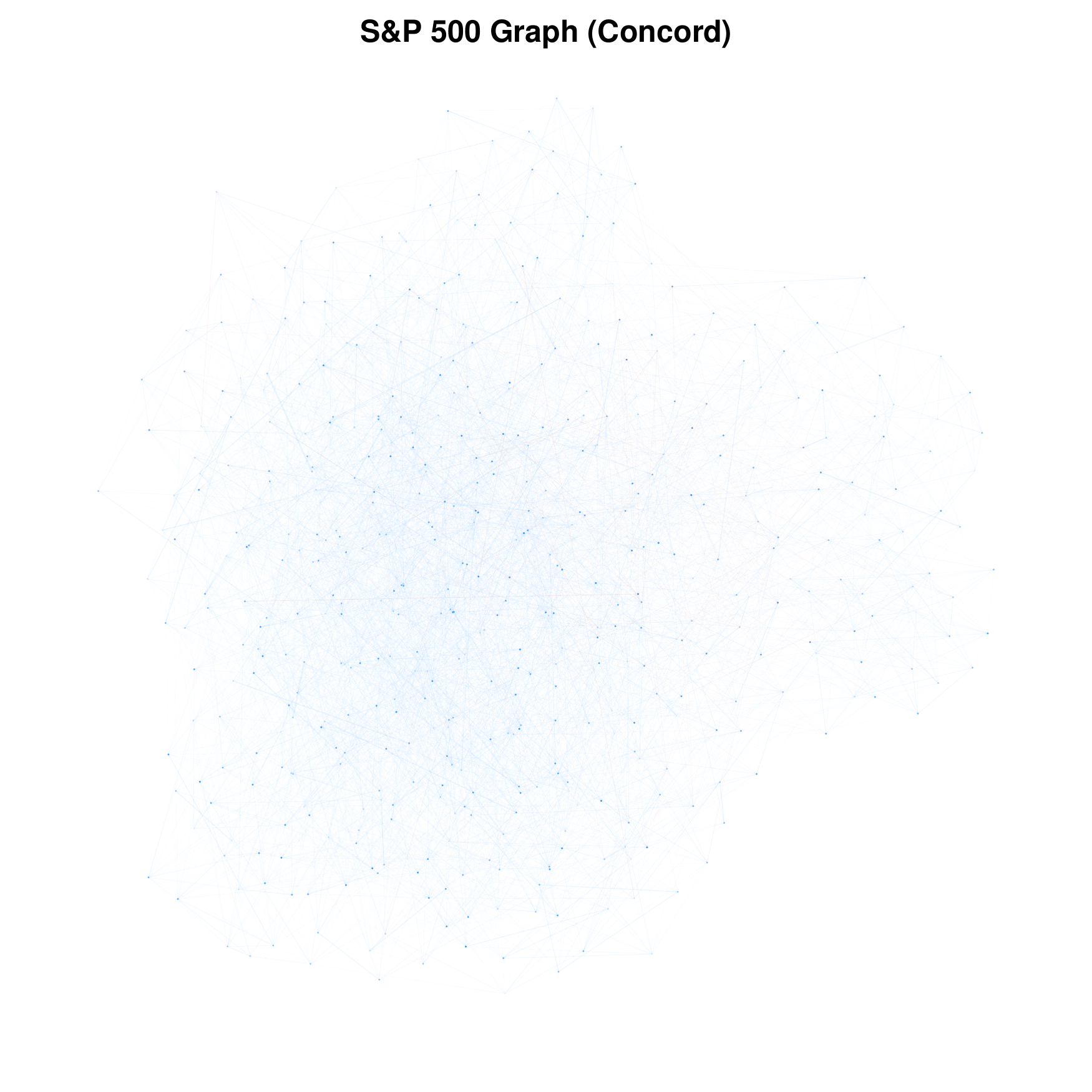}
\includegraphics[scale = 0.42]{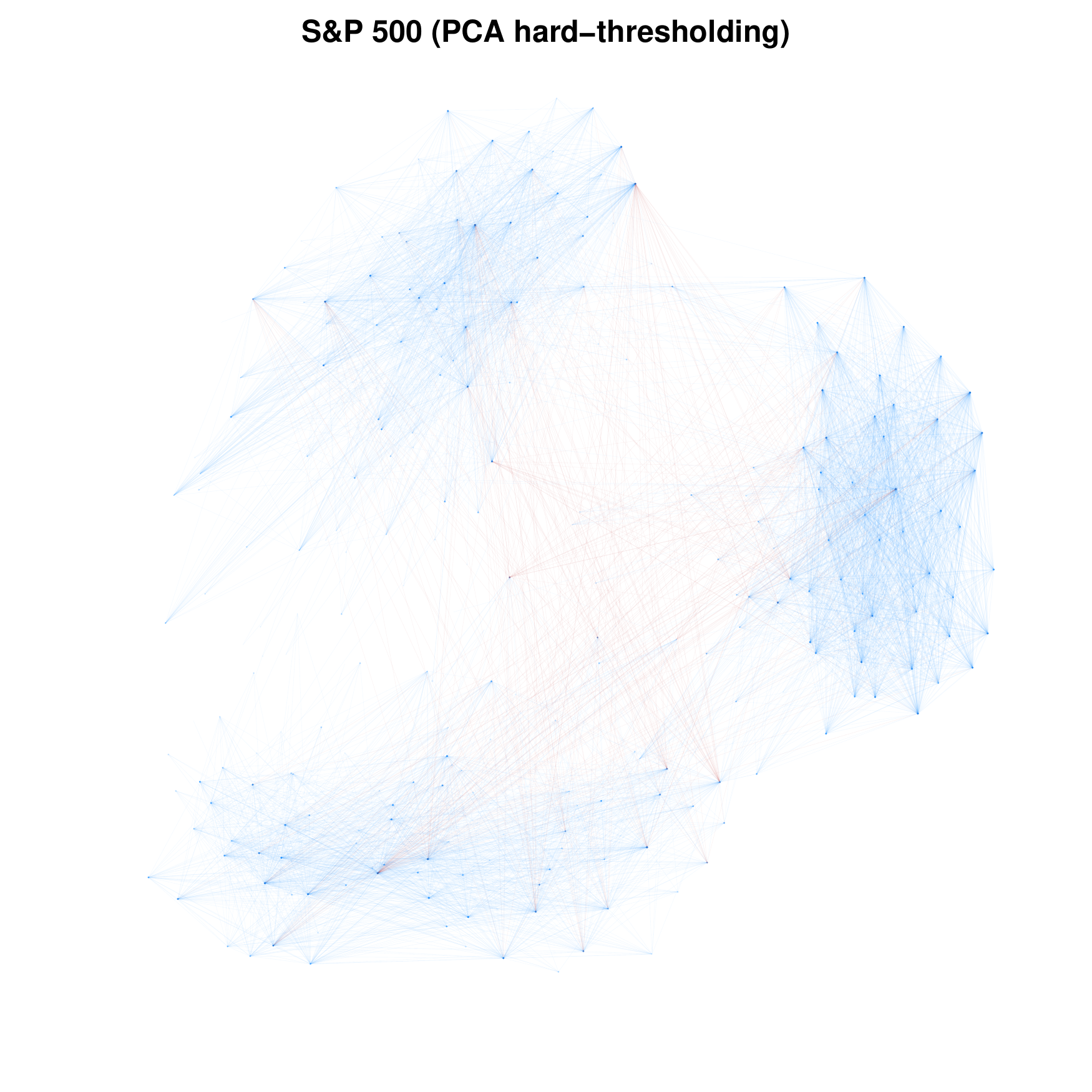}
\caption{The graphs of S\&P 500 Component Stocks built by the estimated partial correlation matrix $\hat{\bf P}$ coming from Glasso (top-left), Concord (top-right), and PCA hard-thresholding(bottom). These graphs are built using the S\&P 500 in-sample data of the economic expansion period, as shown in Table \ref{tab:time}. All graphs are presented with the same layout. A blue edge represents a positive partial correlation while a red edge represents a negative partial correlation, with the thickness of an edge proportional to the correlation value. }
\label{fig:sp500}
\end{figure}

\subsection{Sector/community analysis in graphs}
In this sub-section, we are interested in the sector/community analysis from the graph, since graphs in Figure \ref{fig:sp500} seem to contain some clusters. Within these clusters, more blue edges exist while between the clusters more red edges appear. One may expect that if this pattern is related to the different industrial sectors or data-driven communities. Therefore, exploring the networks in terms of sectors or data-driven communities detection is of great interest to us. We download the sector information for each stock of S\&P 500 Index\footnote{ The sector categories can be found in the column of Global Industry Classification Standard (GICS) sectors via \url{https://en.wikipedia.org/wiki/List_of_S\%26P_500_companies}. For all S\&P 500 component stocks, there are eleven sectors in total: industrials, health care, information technology, communication services, consumer discretionary, utilities, financials, materials, real estate, consumer staples, and energy.} There are 11 sectors in total. Suppose $S_i$ is the set of the index of all assets belonging to the $i$-th sector, $|S|$ denotes the cardinality of a set $S$. Define:

\begin{equation}
    \begin{split}
        C_{ij}^+ &\triangleq \Big|\{(h,l):\hat\varrho_{hl} > 0 \text{ for $h\in S_i$ and $l\in S_j$ and $h\neq l$}\}\Big|\\
        C_{ij}^- &\triangleq \Big|\{(h,l):\hat\varrho_{hl} < 0 \text{ for $h\in S_i$ and $l\in S_j$ and $h\neq l$}\}\Big|
    \end{split}
\end{equation}

\noindent To visualize the relative counts of blue edges and red edges, we design the ratio matrix $\bphi = ((\phi_{ij}))$ where $\phi_{ij} = \log\Big( \frac{C_{ij}^++1}{C_{ij}^-+1}\Big)$. The scaled matrix $\tilde\bphi = \frac{1}{\max_{i,j}(\phi_{ij})}\bphi$ is visualized in Figure \ref{fig:sp500mat}. The darker the blue, the more blue edges exist relative to red edges. The color pattern of both graphs indicates that assets within sectors have relatively more blue edges than red edges thus are more likely to positively partially correlated than assets between sectors. Besides, the empty color is the dominated color for all sectors, implying that the entire market has a co-movement to some extent. This result is consistent with the empirical phenomenon that the majority of market betas are positive \cite{Blume1971}, that is, when the market goes up, all the returns go up, and when the market goes down, the opposite is true.

Since the sector information is externally provided, we are curious about if the data-driven detected communities have any correspondence with these sectors. One of the community detection methods, called communities in a graph via random walks and proposed by \cite{Pons2005}, tries to find densely connected subgraphs via detecting short random walks. We use the \texttt{cluster\_walktrap} function from the \textbf{R} package \texttt{igraph}\footnote{The package and its reference manual can be downloaded from \url{https://cran.r-project.org/web/packages/igraph/index.html}.} to conduct community detection over the sparsity patterns of the estimated precision matrix $\hat\bomg$. The communities are detected using random walks method which is proposed by \cite{Pons2005}. The detection is based on the zero/non-zero patterns of $\hat\bomg$ computed from either Glasso or Concord. A dendrogram is created and we cut the dendrogram such that all stocks are partitioned into 11 clusters. Table \ref{tab:sc-count} and \ref{tab:sc-count2} show the stock counts falling into each sector and community using the Glasso or the Concord estimator. The results show that there are some clusters that are dominated by certain sectors. We are interested in comparing the result of blue/red edge counts from the community detection with that from the sector categorization. We assume that $S_i$ in the context of community detection represents the set of the index of all assets belonging to the $i$-th cluster. We compute $C_{ij}^+, C_{ij}^-$ for all $i$ and $j$, and obtain $\tilde\bphi$. The visualization of $\tilde\bphi$ is shown in Figure \ref{fig:sp500mat2}.

\begin{figure}[H]
\centering
\begin{subfigure}{0.49\textwidth}
  \includegraphics[width=\textwidth]{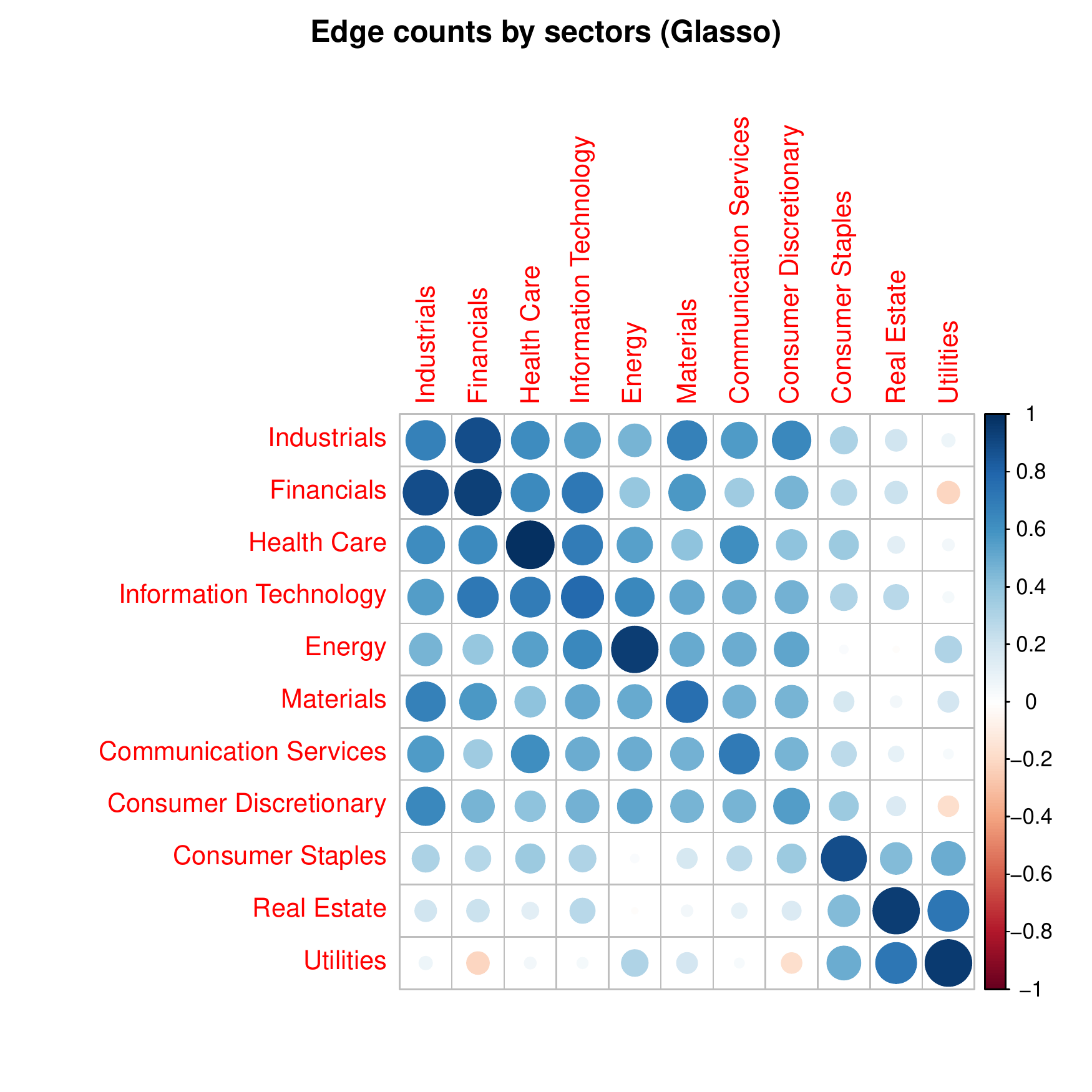}
  \end{subfigure}
  \begin{subfigure}{0.49\textwidth}
  \includegraphics[width=\textwidth]{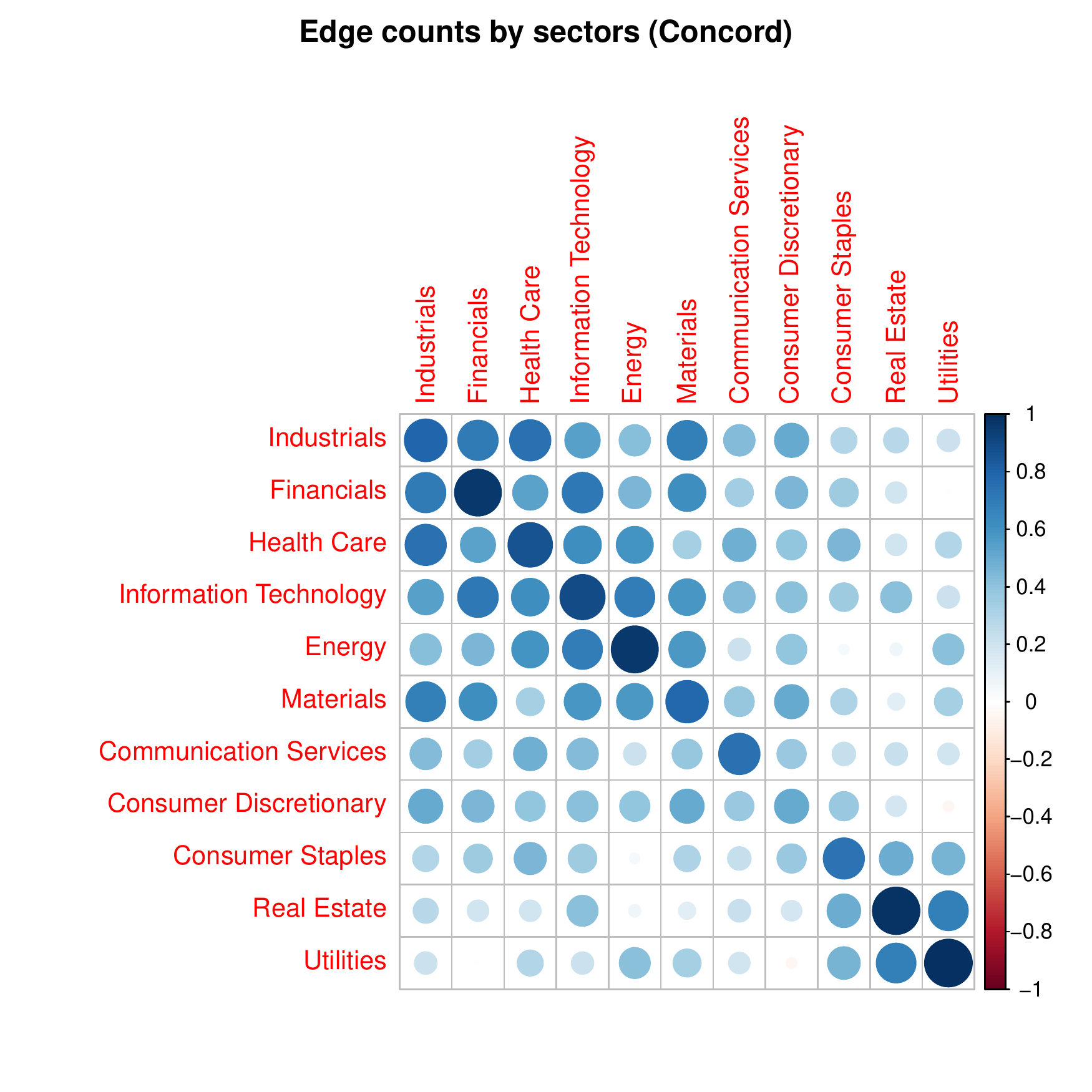}
  \end{subfigure}
\caption{The visualization of dependencies among different sectors, encoded by $\tilde\bphi$, that are measured by the relative blue edge counts to red edge counts within sectors and between sectors. The blue/red edge counts are computed from the sparse precision matrix estimated from graphical lasso (above) and Concord (below). The sparse precision matrices are estimated using the in-sample data of the S\&P 500 component stock historical data from the economic expansion period. The darker the blue circle in the plots, the more blue edges exist relative to red edges, thus the more positive partial correlations exists relative to the negative partial correlations.}
\label{fig:sp500mat}
\end{figure}

\begin{figure}[H]
\centering
\begin{subfigure}{0.49\textwidth}
  \includegraphics[width=\textwidth]{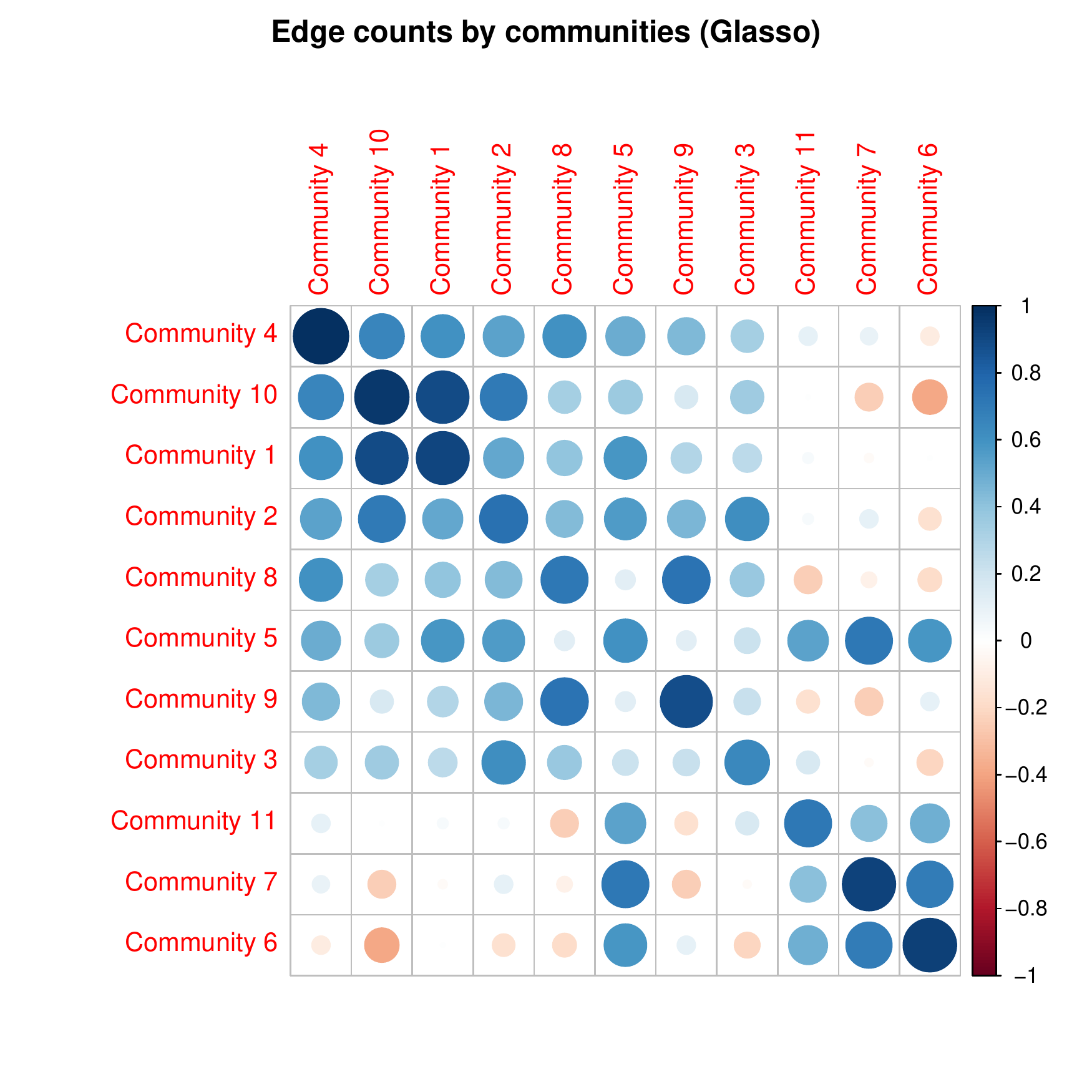}
  \end{subfigure}
  \begin{subfigure}{0.49\textwidth}
  \includegraphics[width=\textwidth]{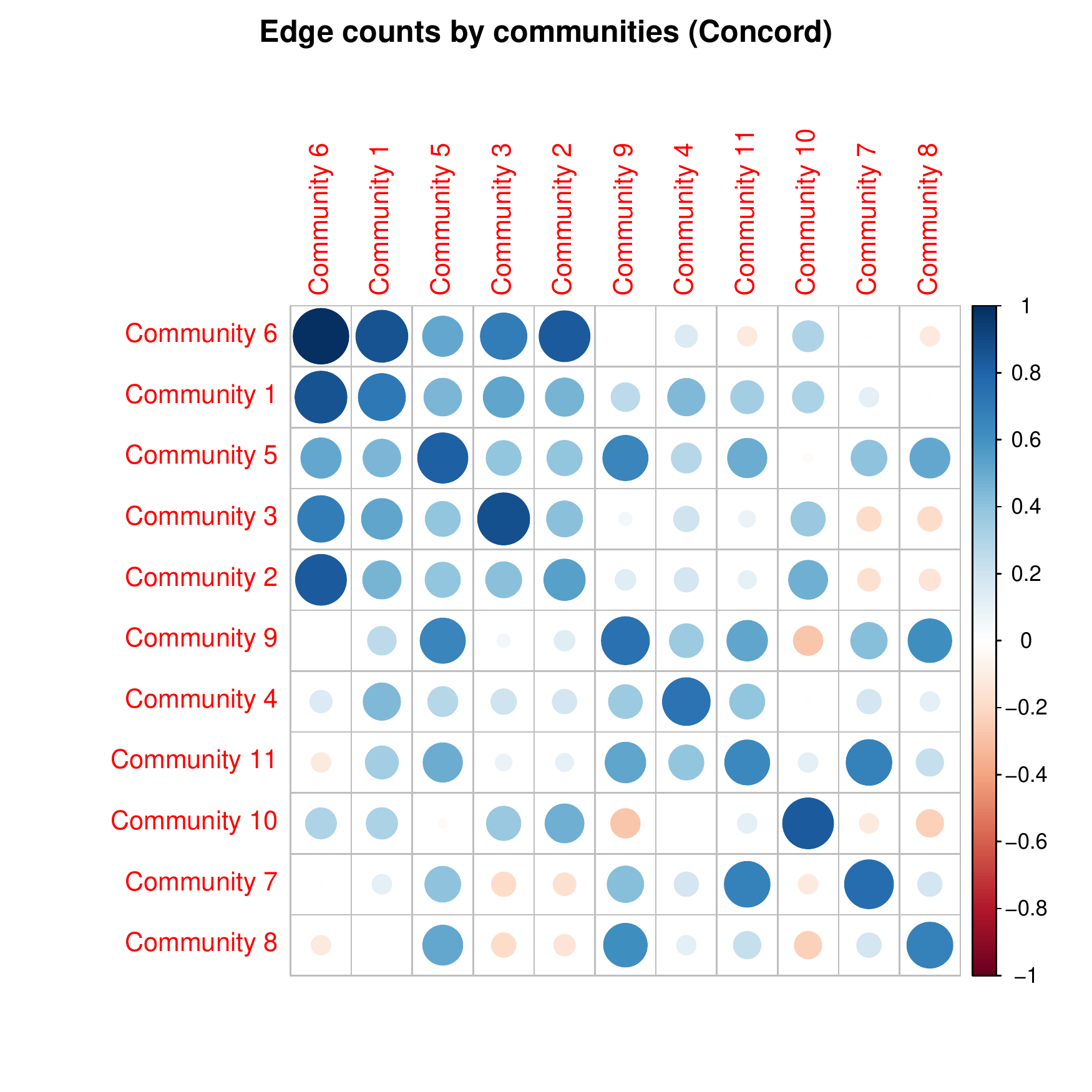}
  \end{subfigure}
\caption{The visualization of dependencies among different communities that are measured by the relative blue edge counts to red edge counts within communities and between communities. The blue/red edge counts are computed from the sparse precision matrix estimated from graphical lasso (above) and Concord (below). The sparse precision matrices are estimated using the in-sample data of the S\&P 500 component stock historical data from the economic expansion period. The darker the blue circle in the plots, the more blue edges exist relative to red edges, thus the more positive partial correlations exists relative to the negative partial correlations.}
\label{fig:sp500mat2}
\end{figure}

\begin{table}[ht!]
    \centering
    \begin{tabular}{c|ccccc ccccc c}
    \hline\hline
    \multicolumn{12}{c}{Stock counts in each sector and community (Glasso)}\\\hline
Community Label       &  1&  2&  3&  4&  5&  6&  7& 8&  9& 10& 11\\\hline
Health Care           & 20& 26&  5&  6&  2&  0&  1& 1&  0&  0&  0\\
Industrials           & 29& 34&  1&  0&  3&  0&  0& 0&  0&  0&  0\\
Consumer Discretionary&  8& 25& 19&  4&  1&  0&  0& 6&  0&  0&  0\\
Information Technology& 18&  9&  0& 32&  6&  0&  0& 2&  0&  0&  0\\
Consumer Staples      &  1&  2&  3&  0& 13&  0&  1& 0&  0&  0& 13\\
Utilities             &  0&  0&  1&  0&  1& 25&  1& 0&  0&  0&  0\\
Financials            & 32&  3&  0&  2&  2&  0&  0& 0&  0& 29&  0\\
Real Estate           &  0&  1&  0&  0&  7&  0& 24& 0&  0&  0&  0\\
Materials             &  2&  9&  2&  2&  2&  0&  0& 8&  0&  0&  0\\
Energy                &  0&  1&  0&  0&  0&  0&  0& 6& 22&  0&  0\\
Communication Services&  1&  5&  9&  6&  4&  0&  0& 0&  0&  0&  0\\\hline
{\bf Total}           &111& 115&  40&  52&  41&  25&  27&  23&  22&  29&  13\\\hline\hline
\end{tabular}
\caption{The results of community detection for S\&P 500 component stocks using the in-sample data of the S\&P 500 historical data from the economic expansion period. The community detection is conducted over the sparse precision matrix estimated from graphical lasso. The numbers represent the stock counts in each sector and community. The results show that there are some communities that are dominated by stocks coming from one certain sector.}
\label{tab:sc-count}
\end{table}

\begin{table}
\centering
    \begin{tabular}{c|ccccc ccccc c}
    \hline\hline
    \multicolumn{12}{c}{Stock counts in each sector and community (Concord)}\\\hline
Community Label       &  1&  2&  3&  4& 5&  6&  7&  8&  9& 10& 11\\\hline
Health Care           & 30& 19&  9&  0& 2&  1&  0&  0&  0&  0&  0\\
Industrials           & 30& 35&  1&  0& 1&  0&  0&  0&  0&  0&  0\\
Consumer Discretionary& 11& 41&  4&  0& 5&  1&  0&  0&  0&  1&  0\\
Information Technology& 15& 14& 32&  0& 5&  1&  0&  0&  0&  0&  0\\
Consumer Staples      &  9&  5&  0& 15& 3&  0&  0&  0&  1&  0&  0\\
Utilities             &  0&  1&  0&  0& 1&  0& 13&  0&  1&  0& 12\\
Financials            & 10& 12&  7&  0& 3& 36&  0&  0&  0&  0&  0\\
Real Estate           &  1&  1&  0&  0& 5&  0&  0& 10& 15&  0&  0\\
Materials             &  6& 13&  3&  0& 3&  0&  0&  0&  0&  0&  0\\
Energy                &  0& 10&  0&  0& 1&  0&  0&  0&  0& 18&  0\\
Communication Services&  1& 14&  6&  1& 3&  0&  0&  0&  0&  0&  0\\\hline
{\bf Total}           & 113& 165&  62&  16&  32&  39&  13&  10&  17&  19&  12 \\\hline
    \end{tabular}
    \caption{A continued table for Table \ref{tab:sc-count}. The results of community detection for S\&P 500 component stocks using the in-sample data of the S\&P 500 historical data from the economic expansion period. The community detection is conducted over the sparse precision matrix estimated from Concord method.}
\label{tab:sc-count2}
\end{table}

\subsection{Communities detection from the graphs}

We choose the return data of S\&P 500 component stocks during the expansion period, obtain the estimated the precision matrix $\hat\bomg$ via Glasso, and conduct the community detection based on the sparsity patterns of $\hat\bomg$. Figure \ref{fig:cmm3-4}, \ref{fig:cmm6-7}, \ref{fig:cmm8-9}, and \ref{fig:cmm10-11} show Community 3, 4, and 6 - 11.

\begin{figure}[H]
\centering
\begin{subfigure}{0.49\textwidth}
  \includegraphics[width=\textwidth]{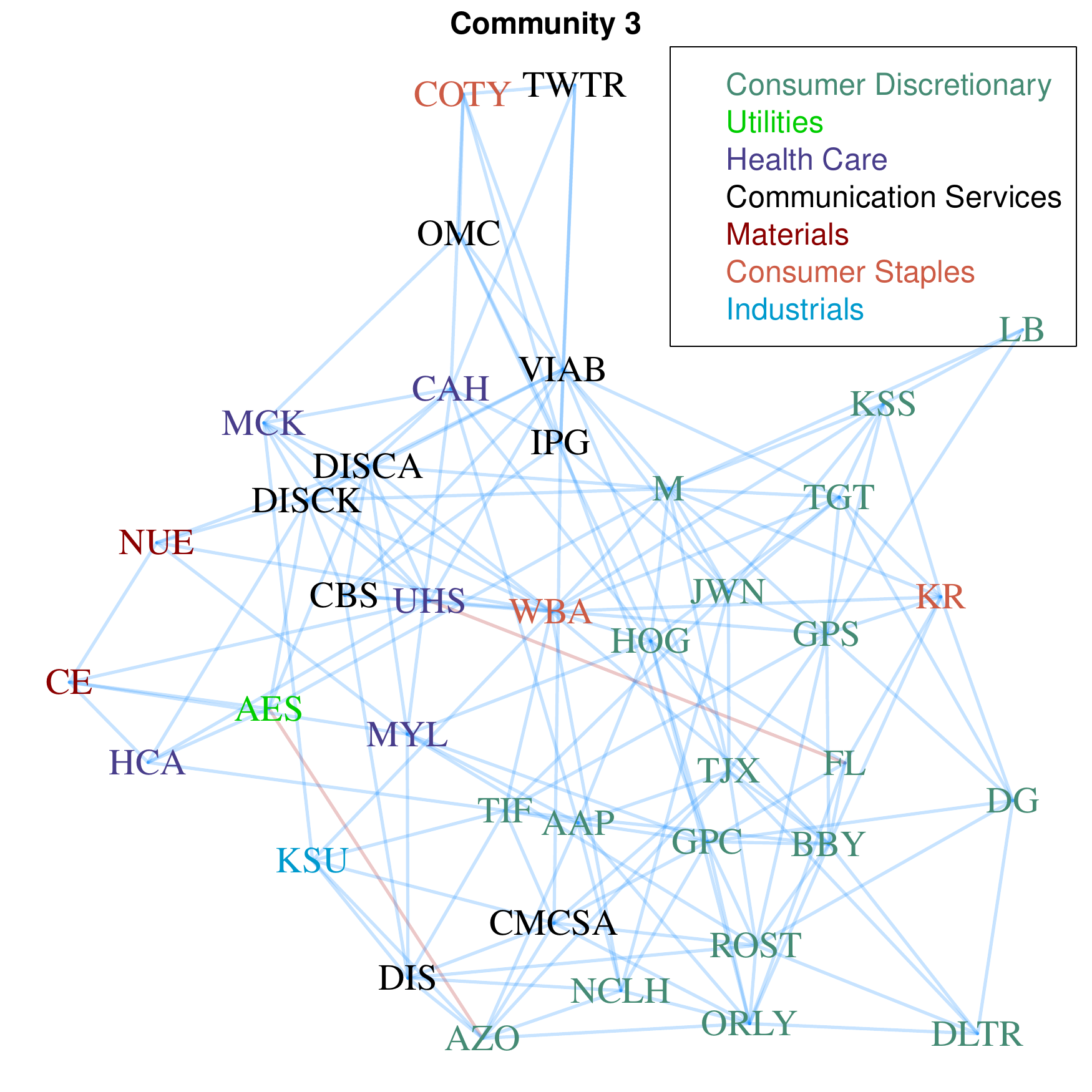}
  \end{subfigure}
  \begin{subfigure}{0.49\textwidth}
  \includegraphics[width=\textwidth]{full-cmm-glasso-c4.pdf}
  \end{subfigure}
\caption{The recovered graphs of two communities (Community 3 \& 4) from the community detection based on the sparse precision matrix estimated from graphical lasso. The sparse precision matrix is estimated using the in-sample data of the S\&P 500 component stock historical data from the economic expansion period. The above community is dominated by stocks from consumer discretionary sector while the below community is dominated by stocks from information technology sector.}
\label{fig:cmm3-4}
\end{figure}

\begin{figure}[H]
\centering
\begin{subfigure}{0.49\textwidth}
\includegraphics[width = \textwidth]{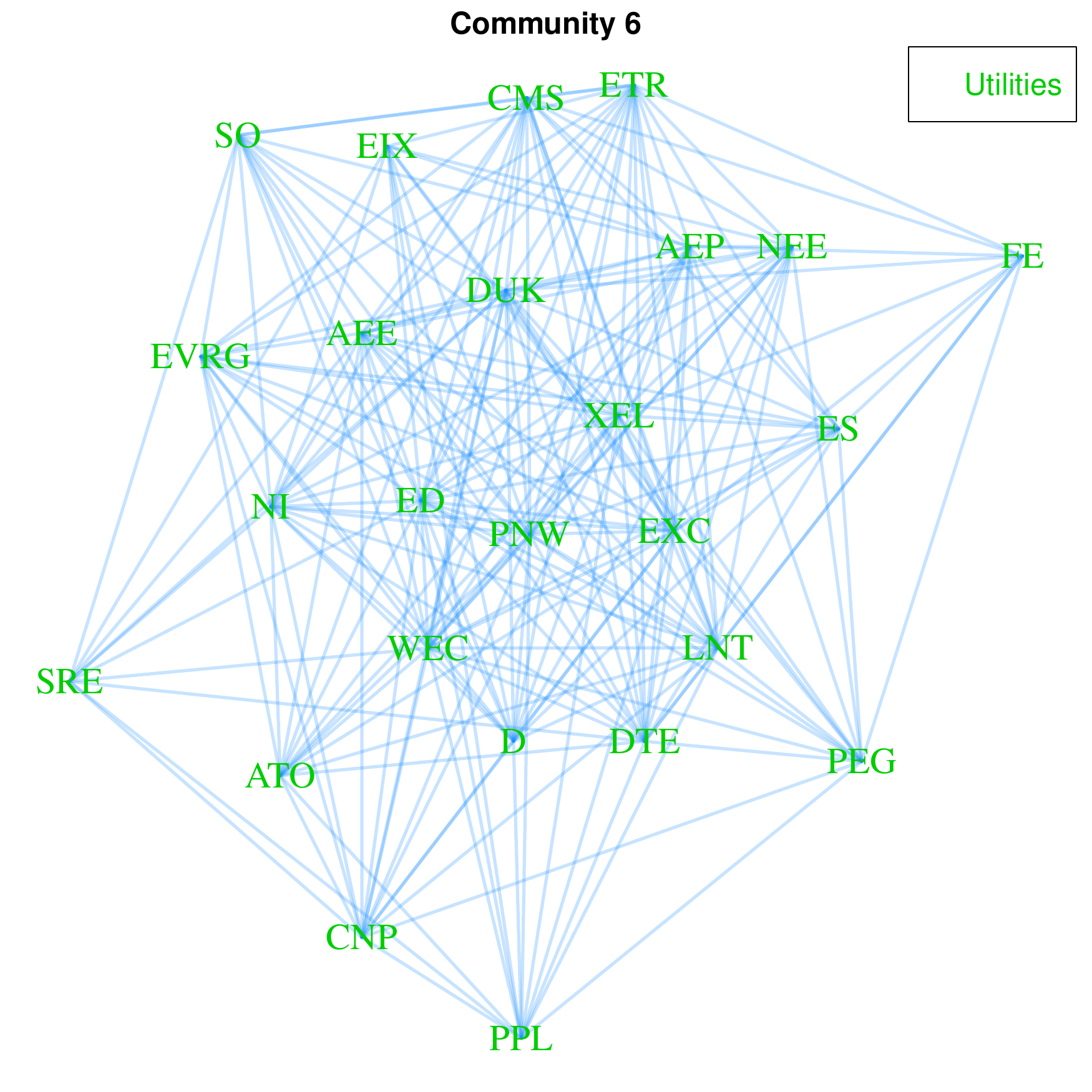}
\end{subfigure}
\begin{subfigure}{0.49\textwidth}
  \includegraphics[width = \textwidth]{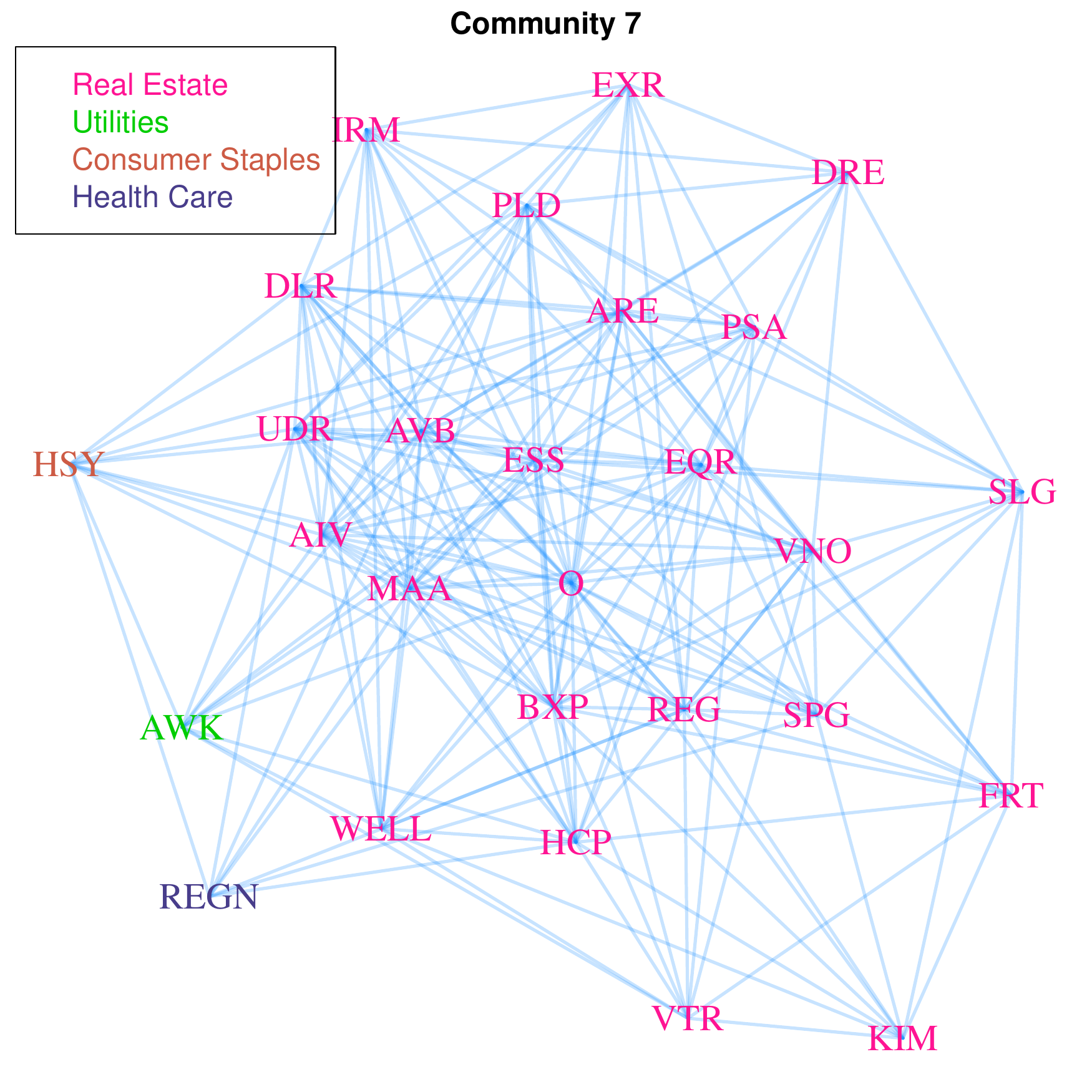}
  \end{subfigure}
\caption{The recovered graphs of two communities (Community 6 \& 7) from the community detection based on the sparse precision matrix estimated from graphical lasso. The sparse precision matrix is estimated using the in-sample data of the S\&P 500 component stock historical data from the economic expansion period. The above community is dominated by stocks from utilities sector while the below community is dominated by stocks from real estate sector.}
\label{fig:cmm6-7}
\end{figure}

\begin{figure}[H]
\centering
\begin{subfigure}{0.49\textwidth}
\includegraphics[width = \textwidth]{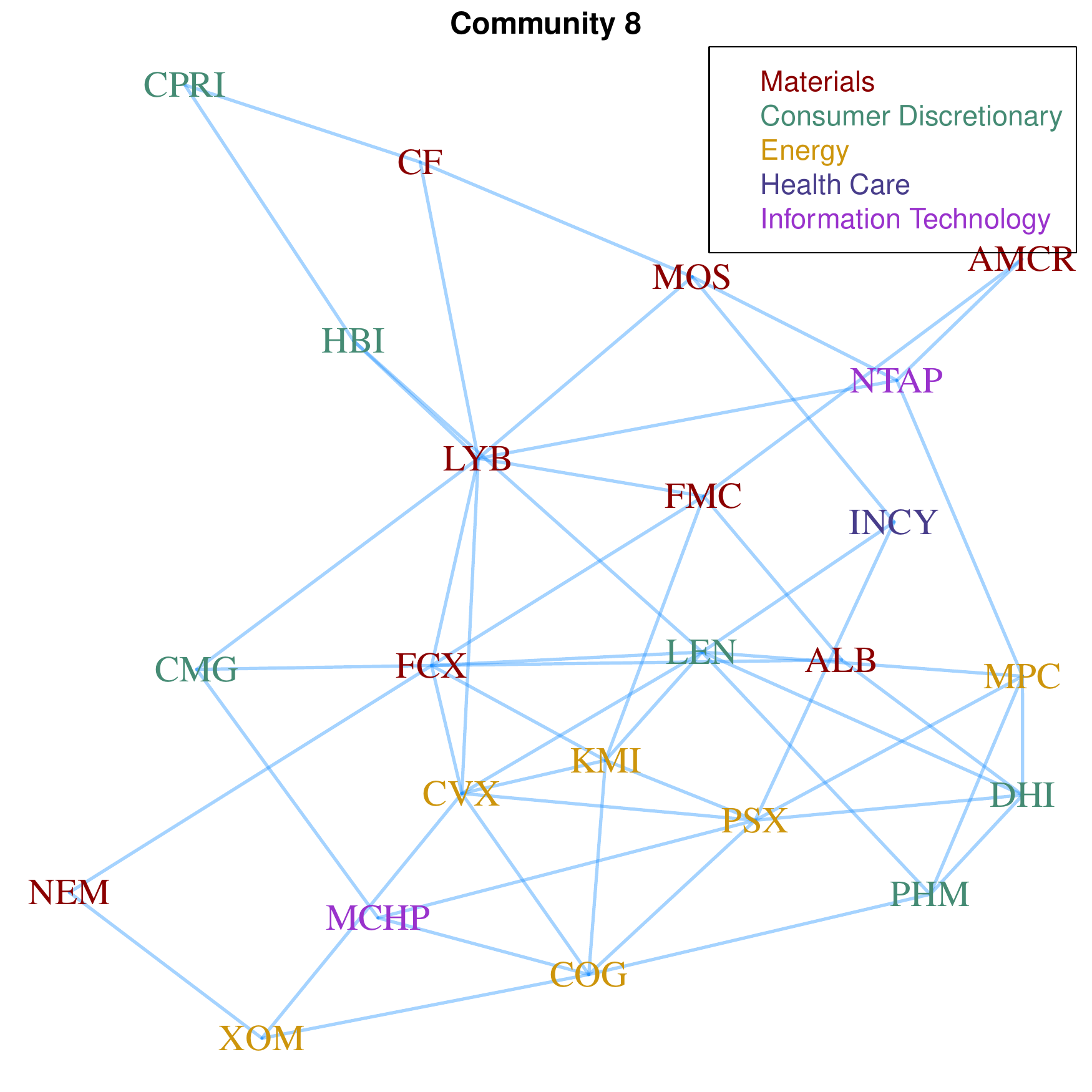}
\end{subfigure}
\begin{subfigure}{0.49\textwidth}
  \includegraphics[width = \textwidth]{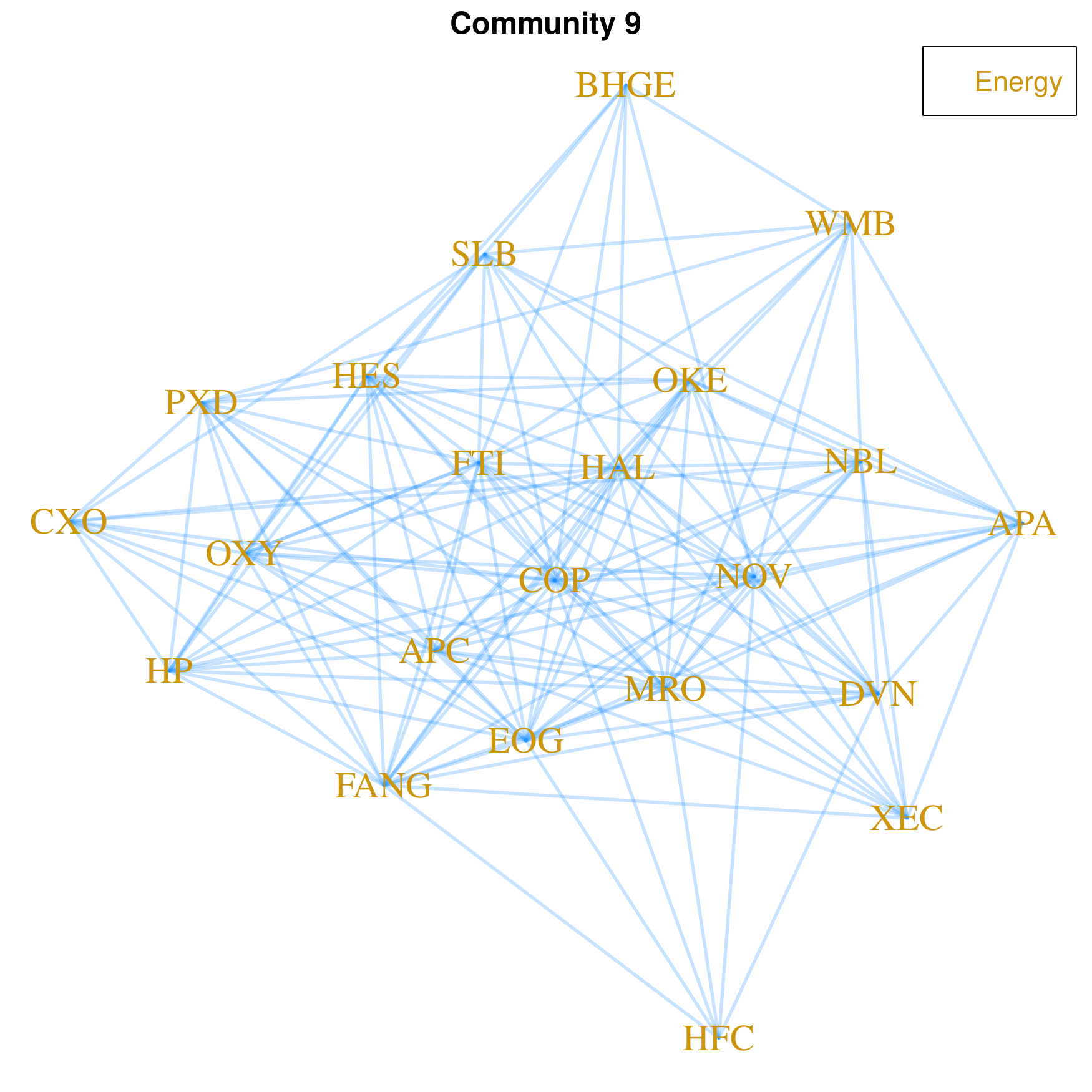}
  \end{subfigure}
\caption{The recovered graphs of two communities (Community 8 \& 9) from the community detection based on the sparse precision matrix estimated from graphical lasso. The sparse precision matrix is estimated using the in-sample data of the S\&P 500 component stock historical data from the economic expansion period. The above community is not dominated by stocks from a certain sector while the below community is dominated by stocks from energy sector.}
\label{fig:cmm8-9}
\end{figure}

\begin{figure}[H]
\centering
\begin{subfigure}{0.48\textwidth}
 \includegraphics[width = \textwidth]{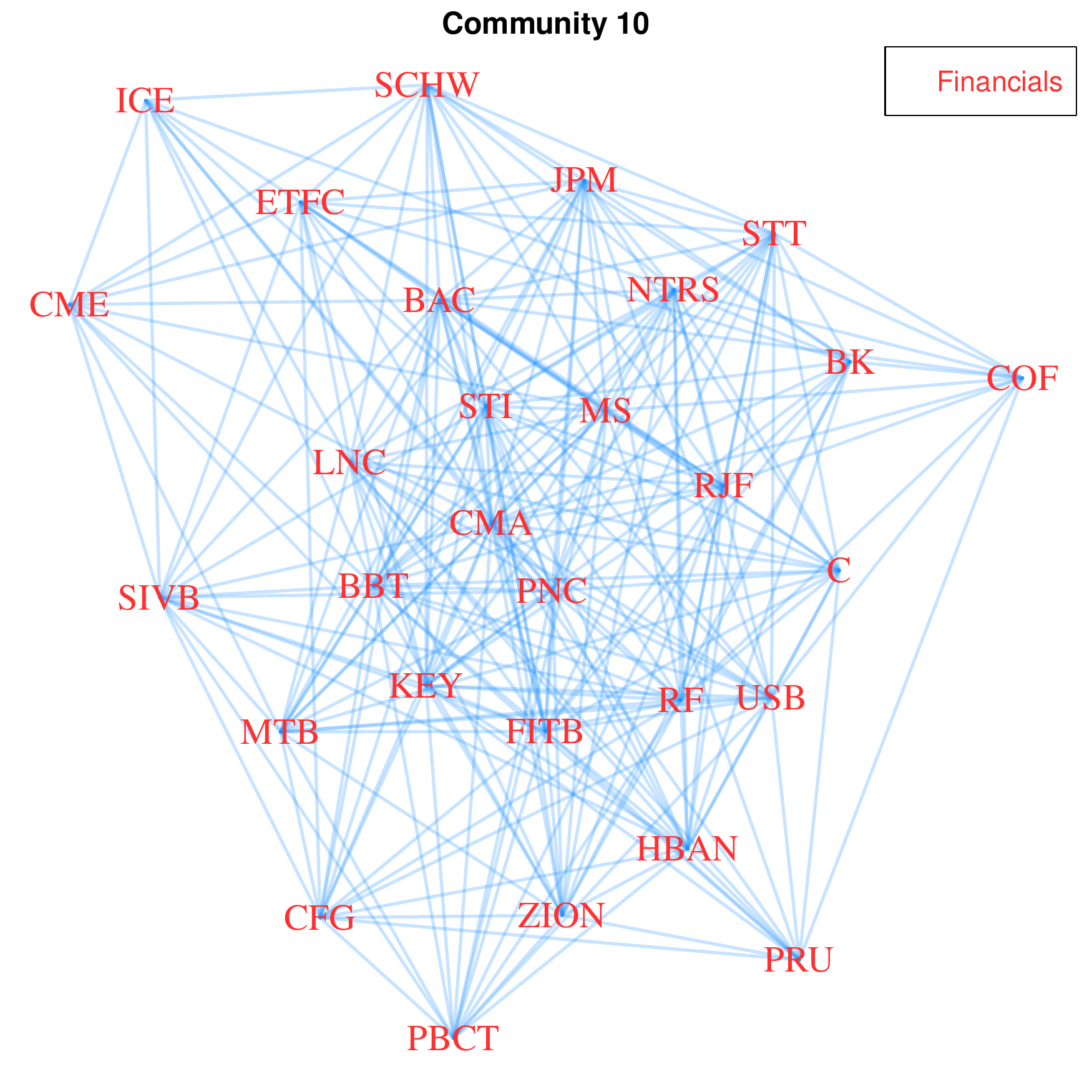}
 \end{subfigure}
 \begin{subfigure}{0.48\textwidth}
  \includegraphics[width = \textwidth]{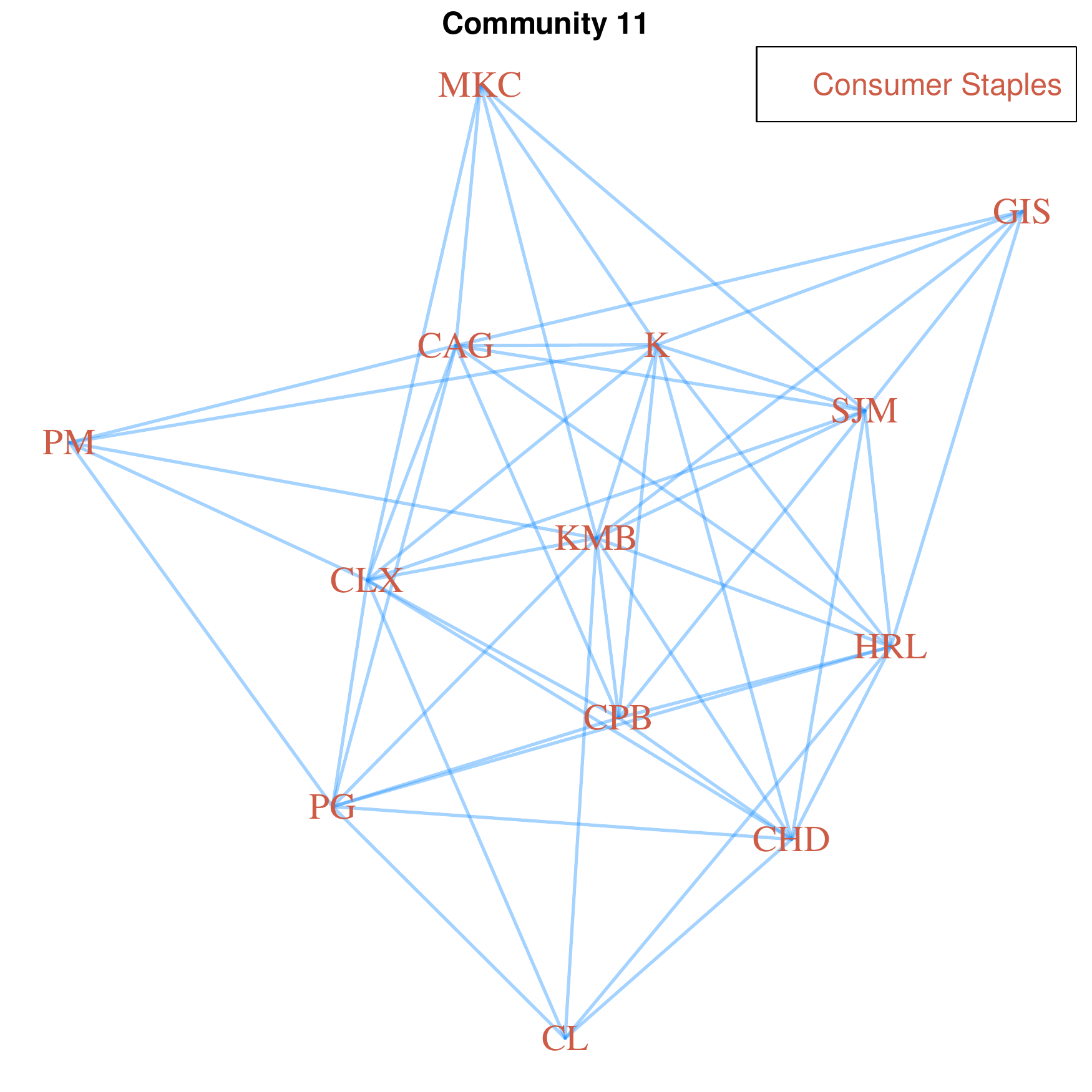}
  \end{subfigure}
\caption{The recovered graphs of two communities (Community 10 \& 11) from the community detection based on the sparse precision matrix estimated from graphical lasso. The sparse precision matrix is estimated using the in-sample data of the S\&P 500 component stock historical data from the economic expansion period. The above community is dominated by stocks from financial sector while the below community is dominated by stocks from consumer staples sector.}
\label{fig:cmm10-11}
\end{figure}

\bibliographystyle{unsrt}  
\bibliography{dissertation.bib}  


\end{document}